\newtheorem*{theorem*}{Theorem}
\newtheorem*{definition*}{Definition}
\newtheorem*{lemma*}{Lemma}
\newtheorem*{corollary*}{Corollary}
\newcommand{\Tr}{\text{Tr}}
\newtheorem{theorem}{Theorem}
\newtheorem{definition}{Definition}
\newtheorem{corollary}{Corollary}
\newcommand\footnoteref[1]{\protected@xdef\@thefnmark{\ref{#1}}\@footnotemark}
\begin{document}

\title{Beyond the  Cabello-Severini-Winter framework: Making sense of contextuality without sharpness of measurements}

\author{Ravi Kunjwal}
\email{rkunjwal@perimeterinstitute.ca}
\affiliation{Perimeter Institute for Theoretical Physics,\\ 31 Caroline Street North, Waterloo, Ontario, Canada, N2L 2Y5.}
\orcid{0000-0002-3978-5971}

\date{\today}

\begin{abstract}
 We develop a hypergraph-theoretic framework for Spekkens contextuality applied to Kochen-Specker (KS) type scenarios that goes beyond the Cabello-Severini-Winter (CSW) framework. To do this, we add new hypergraph-theoretic ingredients to the CSW framework. We then obtain noise-robust noncontextuality inequalities in this generalized framework by applying the assumption of (Spekkens) noncontextuality to both preparations and measurements. The resulting framework goes beyond the CSW framework in both senses, conceptual and technical. On the conceptual level: 1) as in any treatment based on the generalized notion of noncontextuality \`a la Spekkens, we relax the assumption of outcome determinism inherent to the Kochen-Specker theorem but retain measurement noncontextuality, besides introducing preparation noncontextuality,  2) we do not require the {\em exclusivity principle} -- that pairwise exclusive measurement events must all be mutually exclusive -- as a fundamental constraint on measurement events of interest in an experimental test of contextuality, given that this property is not true of general quantum measurements, and 3) as a result, we do not need to presume that measurement events of interest are ``sharp" (for any definition of sharpness), where this notion of sharpness is meant to imply the exclusivity principle. On the technical level, we go beyond the CSW framework in the following senses: 1) we introduce a source events hypergraph -- besides the measurement events hypergraph usually considered -- and define a new operational quantity ${\rm Corr}$ that appears in our inequalities, 2) we define a new hypergraph invariant -- the {\em weighted max-predictability} -- that is necessary for our analysis and appears in our inequalities, and 3) our noise-robust noncontextuality inequalities quantify tradeoff relations between three operational quantities -- ${\rm Corr}$, $R$, and $p_0$ -- only one of which (namely, $R$) corresponds to the Bell-Kochen-Specker functionals appearing in the CSW framework; when ${\rm Corr}=1$, the inequalities formally reduce to CSW type bounds on $R$. Along the way, we also consider in detail the scope of our framework vis-\`a-vis the CSW framework, particularly the role of Specker's principle in the CSW framework, i.e., what the principle means for an operational theory satisfying it and why we don't impose it in our framework.

\end{abstract}

\pacs{03.65.Ta, 03.65.Ud}

\maketitle
\tableofcontents

\section{Introduction}
To say that quantum theory is counterintuitive, or that it requires a revision of our classical intuitions, requires us to be mathematically precise in our definition of these classical intuitions. Once we have a precise formulation of such {\em classicality}, we can begin to investigate those features of quantum theory that power its {\em nonclassicality}, i.e., its departure from our classical intuitions, and thus prove {\em theorems} about such nonclassicality. To the extent that a physical theory is provisional, likely to be replaced by a better theory in the future, it also makes sense to articulate such notions of classicality in as operational a manner as possible. By `operational', we refer to a formulation of the theory that takes the operations -- preparations, measurements, transformations -- that can be carried out in an experiment as primitives and which specifies the manner in which these operations combine to produce the data in the experiment. Such an operational formulation often suggests generalizations of the theory that can then be used to better understand its axiomatics \cite{hardy5axioms, masanesmueller, CDP}. At the same time, an operational formulation also lets us articulate our notions of nonclassicality in a manner that is experimentally testable and thus allows us to leverage this nonclassicality in applications of the theory. Indeed, a key area of research in quantum foundations and quantum information is the development of methods to assess nonclassicality in an experiment under minimal assumptions on the operational theory describing it. The paradigmatic example of this is the case of Bell's theorem and Bell experiments \cite{Bell64, Bell66, Bellbook, CHSH,  bellnonlocality, Belltest1, Belltest2, Belltest3}, where any operational theory that is non-signalling between the different spacelike separated wings of the experiment is allowed. The notion of classicality at play in Bell's theorem is the assumption of {\em local causality}: any non-signalling theory that violates the assumption of local causality is said to exhibit nonclassicality by the lights of Bell's theorem.

More recently, much work \cite{KunjSpek, exptlpaper, KrishnaSpekkensWolfe, SchmidSpek, KunjSpek17, algorithmic} has been devoted to obtaining constraints on operational statistics that follow from a generalized notion of  noncontextuality proposed by Spekkens \cite{Spe05}. This notion of classicality \cite{Spe05} has its roots in the Kochen-Specker (KS) theorem \cite{KochenSpecker}, a no-go theorem that rules out the possibility that a deterministic underlying ontological model \cite{HarriganSpekkens} could reproduce the operational statistics of (projective) quantum measurements in a manner that satisfies the assumption of {\em KS-noncontextuality}. KS-noncontextuality is the notion of classicality at play in the Kochen-Specker theorem. The Spekkens framework abandons the assumption of outcome determinism \cite{Spe05} -- the idea that the ontic state of a system fixes the outcome of any measurement deterministically -- that is intrinsic to KS-noncontextuality. It also applies to general operational theories and extends the notion of noncontextuality to general experimental procedures -- preparations, transformations, and measurements -- rather than measurements alone.

Parallel to work along the lines of Spekkens \cite{Spe05}, work seeking to directly operationalize the Kochen-Specker theorem (rather than revising the notion of noncontextuality at play) culminated in two recent approaches that classify theories by the degree to which they violate the assumption of KS-noncontextuality: the graph-theoretic framework of Cabello, Severini, and Winter (CSW) \cite{CSW10,CSW}, where a general approach to obtaining graph-theoretic bounds on linear Bell-KS functionals was proposed, and the related hypergraph framework of Ac\'in, Fritz, Leverrier, and Sainz (AFLS) \cite{AFLS}, where an approach to characterizing sets of correlations was proposed. The CSW framework relates well-known graph invariants to: 1) upper bounds on Bell-KS inequalities that follow from KS-noncontextuality, 2) upper bounds on maximum quantum violations of these inequalities that can be obtained from projective measurements, and 3) upper bounds on their violation in general probabilistic theories \cite{GPTs} -- denoted $E1$ -- which satisfy the ``exclusivity principle" \cite{CSW}. Complementary to this, the AFLS framework uses graph invariants in the service of deciding whether a given assignment of probabilities to measurement outcomes in a KS-contextuality experiment belongs to a particular set of correlations; they showed that membership in the quantum set of correlations (defined {\em only} for projective measurements in quantum theory) cannot be witnessed by a graph invariant, cf.~Theorem 5.1.3 of Ref.~\cite{AFLS}. Another recent approach due to Abramsky and Brandenburger \cite{sheaf} employs sheaf-theoretic ideas to formulate KS-contextuality. 

A key achievement of the frameworks of Refs.~\cite{CSW, AFLS, sheaf} is a formal unification of Bell scenarios with KS-contextuality scenarios, treating them on the same footing. Indeed, the perspective there is to consider Bell scenarios as a special case of KS-contextuality scenarios. What is lost in this mathematical unification, however, is the fact that Bell-locality and KS-noncontextuality have physically distinct, if related, motivations. 
The physical situation that Bell's theorem refers to requires (at least) two spacelike separated labs (where local measurements are carried out) so that the assumption of local causality (or Bell-locality) can be applied.\footnote{What do we mean by whether an assumption ``can be applied"? Of course, mathematically, one can ``apply" any assumption one wants in the service of proving a theorem. But insofar as the mathematics here is trying to model a real experiment, the consistency of those assumptions with some essential facts of the experiment is the minimal requirement for any no-go theorem derived from such assumptions to be physically interesting. Hence, in the presence of signalling (implying the absence of spacelike separation), it makes no sense to assume local causality in a Bell experiment and derive the resulting Bell inequalities: such an assumption on the ontological model is already in conflict with the fact of signalling across the labs and no Bell inequalities are needed to witness this fact. Bell inequalities only become physically interesting when the theories being compared relative to them are all non-signalling: if the experiment itself is signalling, any non-signalling description -- locally causal, quantum, or in a general probabilistic theory (GPT) -- is {\em ipso facto} ruled out.} On the other hand, the physical situation that the Kochen-Specker theorem refers to does not require spacelike separation as a necessary ingredient and one can therefore consider experiments in a single lab. However, the assumption of KS-noncontextuality entails outcome determinism \cite{Spe05}, something not required by local causality in Bell scenarios.\footnote{Note that this assumption of outcome determinism doesn't affect the conclusions in a Bell scenario even if one adopted it because of Fine's theorem \cite{fine}: a locally deterministic ontological model entails the same set of (Bell-local) correlations as a locally causal ontological model. Relaxing outcome determinism, however, doesn't mean the same thing for the kinds of experiments envisaged by the Kochen-Specker theorem -- in particular, it doesn't mean that models satisfying factorizability \`a la Ref.~\cite{sheaf} are the most general outcome-indeterministic models --  and thus considerations parallel to Fine's theorem \cite{fine} do not apply, cf.~\cite{finegen, ODUM}.} This difference in the physical situation for the two kinds of experiments is one of the reasons for generalizing KS-noncontextuality to the notion of noncontextuality in the Spekkens framework \cite{Spe05} (so that outcome determinism is not assumed) while leaving Bell's notion of local causality untouched.

In the present paper we build a bridge from the CSW approach, where KS-noncontextual correlations are bounded by Bell-KS inequalities, to noise-robust noncontextuality inequalities in the Spekkens framework \cite{Spe05}. That is, we show how the constraints from KS-noncontextuality in the framework of Ref.~\cite{CSW} translate to constraints from generalized  noncontextuality in the framework of Ref.~\cite{Spe05}. The resulting operational criteria for contextuality \`a la Spekkens are noise-robust and therefore applicable to arbitrary positive operator-valued measures (POVMs) and mixed states in quantum theory.  Note that the insights gleaned from frameworks such as those of Refs.~\cite{CSW,AFLS,sheaf} regarding Bell nonlocality require no revision in our approach. It is {\em only} in the application of such frameworks (in particular, the CSW framework) to the question of contextuality that we seek to propose an alternative hypergraph framework (formalizing Spekkens contextuality \cite{Spe05}) that is more operationally motivated for experimental situations where one cannot appeal to spacelike separation to justify locality of the measurements.\footnote{Nor the sharpness of the measurements to justify outcome determinism. We discuss these issues in detail -- in particular, the physical basis of KS-noncontextuality vis-\`a-vis Bell-locality and how that influences our framework --  in Appendix \ref{allegedstrawman} for the interested reader.} For Kochen-Specker type experimental scenarios, we will consider the twin notions of preparation noncontextuality and measurement noncontextuality -- taken together as a notion of classicality -- to obtain noise-robust noncontextuality inequalities that generalize the KS-noncontextuality inequalities of CSW. These inequalities witness nonclassicality even when quantum correlations arising from arbitrary (i.e., possibly nonprojective) quantum measurements on any quantum state are allowed. A key innovation of this approach is that it treats all measurements in an operational theory on an equal footing. No definition of ``sharpness" \cite{cabellotalk,giuliosharp,giuliosharp2} is needed to justify or derive noncontextuality inequalities in this approach. Furthermore, if certain idealizations are presumed about the operational statistics, then these inequalities formally recover the usual Bell-KS inequalities \`a la CSW. The Bell-KS inequalities can be viewed as an instance of the classical marginal problem \cite{fine,finegen,ChavesFritz1, ChavesFritz2,sheaf}, i.e., as constraints on the (marginal) probability distributions over subsets of a set of observables that follow from requiring the existence of global joint probability distribution over the set of all observables. Since the Bell-KS inequalities are only recovered under certain idealizations, but not otherwise, the noise-robust noncontextuality inequalities we obtain {\em cannot} in general be viewed as arising from a classical marginal problem. Hence, they cannot be understood within existing frameworks that rely on this (reduction to the classical marginal problem) property to formally unify the treatment of Bell-nonlocality and KS-contextuality \cite{CSW, AFLS, sheaf}. This is a {\em crucial} distinction relative to the usual Bell-KS inequality type witnesses of KS-contextuality.

This paper is based on a previous contribution \cite{KunjSpek17} that laid the conceptual groundwork for the progress we make here. Besides the noise-robust noncontextuality inequalities that generalize constraints from KS-noncontextuality in the CSW framework leveraging the graph invariants of CSW \cite{CSW} (cf.~Section \ref{derivingineqs}), the contributions of this paper also include:	
	\begin{itemize}
		\item An exposition of Specker's principle and how different implications of it (e.g., consistent exclusivity \cite{AFLS}) for a given operational theory arise in the hypergraph framework (cf.~Sections \ref{Speck1} and \ref{Speck2}), in particular the results in Theorems \ref{SPimpliesStrSP}, \ref{strimpliesstat}, and Corollary \ref{SPstrSPCE}.
		
		\item Introduction of a hypergraph invariant -- the weighted max-predictability -- that is key to our noise-robust noncontextuality inequalities, cf.~Section \ref{betainvariant}. This invariant is also key to the hypergraph framework of Ref.~\cite{kunjunc} which is complementary to the present framework.
		
		\item A detailed discussion of how KS-noncontextuality for POVMs has been previously treated in the literature and the limitations of those treatments, cf.~Appendices \ref{allegedstrawman} and \ref{trivial}. Also, unlike for the case of KS-noncontextuality inequalities, we show that trivial POVMs can never violate our noise-robust noncontextuality inequalities, cf.~Section \ref{trivialnoviol}.
		
		\item A discussion of coarse-graining relations in Section \ref{coarsegrainings} and their importance for contextuality no-go theorems, in particular a discussion of ontological models that do not respect coarse-graining relations in Appendix \ref{ontmodelwocoarsegraining}. We show, in Appendix \ref{ontmodelwocoarsegraining}, how relaxing the constraint from coarse-graining relations on an ontological model renders either notion of noncontextuality -- whether Kochen-Specker \cite{KochenSpecker} or Spekkens \cite{Spe05} -- vacuous.
		
		\item A discussion, by example, of why our  generalization of the CSW framework cannot accommodate contextuality scenarios that are KS-uncolourable in Appendix \ref{scopeviacega} and why one needs a distinct framework, i.e., the framework of  Ref.~\cite{kunjunc}, to treat KS-uncolourable scenarios.
	\end{itemize}
The structure of this paper follows: Section \ref{spekframework} reviews the Spekkens framework for generalized noncontextuality \cite{Spe05}. Section \ref{hypergraphsection} introduces a hypergraph framework that shares features of traditional frameworks for KS-contextuality \cite{CSW,AFLS} but is also augmented (relative to these traditional frameworks) with the ingredients necessary for obtaining noise-robust noncontextuality inequalities. In particular, its subsections \ref{Speck1} and \ref{Speck2} discuss Specker's principle \cite{speckerprinciple} and define its different implications for contextuality scenarios \`a la Ref.~\cite{AFLS}. Section \ref{betainvariant} defines a new hypergraph invariant -- the weighted max-predictability -- that we need later on as a crucial new ingredient in our inequalities. Section \ref{derivingineqs} obtains noise-robust noncontextuality inequalities within the framework defined in Section \ref{hypergraphsection} and using the hypergraph invariant of Section \ref{betainvariant} in addition to two graph invariants from the CSW framework \cite{CSW}. These inequalities can be seen as special cases of the general approach outlined in Ref.~\cite{KunjSpek17}. In Section \ref{discussion}, we include discussions on various features of our noise-robust noncontextuality inequalities, in particular the fact that trivial POVMs can never violate them. Section \ref{conclusions} concludes with some open questions and directions for future research.

\section{Spekkens framework}\label{spekframework} We concern ourselves with prepare-and-measure experiments. A schematic of such an experiment is shown in Figure \ref{prepmsr} where, for the sake of simplicity, we imagine a single source device that can perform any preparation procedure of interest (rather than a collection of source devices, each implementing a particular preparation procedure) and a single measurement device that can perform any measurement procedure of interest (rather than a collection of measurement devices, each implementing a particular measurement procedure). Note that this is just a conceptual abstraction: in particular, the various possible measurement settings on the measurement device may, for example, correspond to incompatible measurement procedures in quantum theory. The fact that we represent the different measurement settings by choices of knob settings $M\in\mathbb{M}$ on a single measurement device {\em does not} mean that it's physically possible to implement all the measurement procedures represented by $\mathbb{M}$ jointly; it only means that the experimenter can choose to implement any of the measurements in the set $\mathbb{M}$ in a particular prepare-and-measure experiment. The same is true for our abstraction of preparation procedures to knob settings ($S\in\mathbb{S}$) and outcomes ($s\in V_S$) of a single source device: it's not that the same device can physically implement all possible preparation procedures; it's just that an experimenter can choose to implement any procedure in the set $\mathbb{S}$ in a particular prepare-and-measure experiment.

\begin{figure}
	\centering
	\includegraphics[scale=0.5]{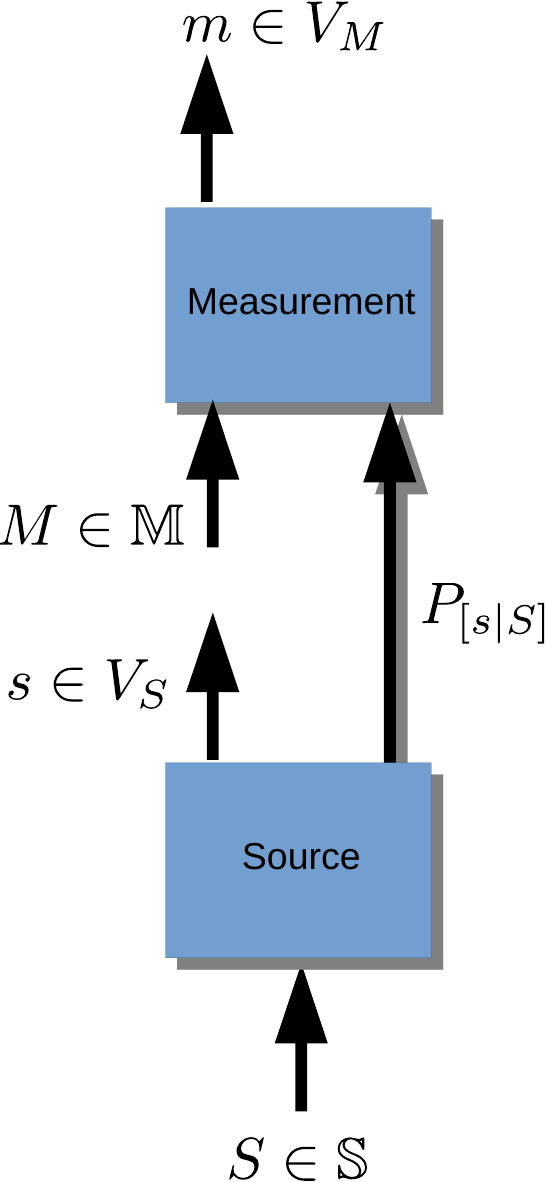}
	\caption{A prepare-and-measure experiment.}
	\label{prepmsr}
\end{figure}

We will consider two levels of description of prepare-and-measure experiments represented by Fig.~\ref{prepmsr}: {\em operational} and {\em ontological}. The operational description will be specified by an operational theory that takes  source and measurement devices as primitives and describes the experiment solely in terms of the probabilities associated to their input/output behaviour. The ontological description will be specified by an ontological model that takes the system that passes between the source and measurement devices as primitive and describes the experiment in terms of probabilities associated to properties of this system, deriving the operational description as a consequence of coarse-graining over these properties. Let us look at each description in turn.

\subsection{Operational theory}	
We now describe the various components of Fig.~\ref{prepmsr} in more detail.	
The source device has a {\em source setting} labelled by $S$ that can be chosen from a set $\mathbb{S}$. The set $\mathbb{S}$ represents, in general, some subset of the set of all source settings, $\mathscr{S}$, that are admissible in the operational theory, i.e., $\mathbb{S}\subseteq \mathscr{S}$. In a particular prepare-and-measure experiment, $\mathbb{S}$ will typically be a finite set of source settings. Choosing the setting $S$ prepares a system according to an ensemble of preparation procedures, denoted $\{(p(s|S),P_{[s|S]})\}_{s\in V_S}$, where $\{p(s|S)\}_{s\in V_S}$ is a probability distribution over the preparation procedures $\{P_{[s|S]}\}_{s\in V_S}$ in the ensemble. This means that the source device has one classical input $S$ and two outputs: one output is a classical label $s\in V_S$ identifying the preparation procedure (in the ensemble $\{(p(s|S),P_{[s|S]})\}_{s\in V_S}$) that is carried out when source outcome $s$ is observed for source setting $S$ (this {\em source event} is denoted $[s|S]$), and the other output is a system prepared according to the source event $[s|S]$, i.e., preparation procedure $P_{[s|S]}$, with probability $p(s|S)$.
Thus, the assemblage of possible ensembles that the source device can prepare can be denoted by $\{\{(p(s|S),P_{[s|S]})\}_{s\in V_S}\}_{S\in\mathbb{S}}$.

On the other hand, the measurement device has two inputs, one a classical input $M\in\mathbb{M}$ specifying the choice of measurement setting to be implemented, and the other input receives the system prepared according to prepartion procedure $P_{[s|S]}$ and on which this measurement $M$ is carried out. The measurement device has one classical output $m\in V_M$ denoting the outcome of the measurement $M$ implemented on a system prepared according to $P_{[s|S]}$, and which occurs with probability $p(m|M,S,s)$. The set $\mathbb{M}$ represents, in general, some subset of the set of all measurement settings, $\mathscr{M}$, that are admissible in the operational theory, i.e., $\mathbb{M}\subseteq\mathscr{M}$. In a particular prepare-and-measure experiment, $\mathbb{M}$ will typically be a finite set of measurement settings.

We will be interested in the operational joint probability $p(m,s|M,S)\equiv p(m|M,S,s)p(s|S)$ for this prepare-and-measure experiment for various choices of $M\in\mathbb{M}$,  $S\in\mathbb{S}$. Note how this operational description takes as primitive the {\em operations} carried out in the lab and restricts itself to specifying the probabilities of classical outcomes (i.e., $m,s$) given some interventions (i.e., classical inputs, $M,S$). So far, we haven't assumed any structure on the operational theory describing the schematic of Fig.~\ref{prepmsr} beyond the fact that it is a catalogue of input/output probabilities $$\{\{p(m,s|M,S)\in[0,1]\}_{m\in V_M, s\in V_S}\}_{M\in \mathbb{M}, S\in \mathbb{S}}$$ for various interventions $S\in\mathbb{S}$ and $M\in\mathbb{M}$ that we will consider in a prepare-and-measure experiment. We now require more structure in the operational theory underlying this experiment, beyond a mere specification of these probabilities.

We require that the operational theory admits {\em equivalence relations} that partition experimental procedures of any type, whether preparations or measurements, into equivalence classes of that type. These equivalence relations are defined relative to the operational probabilities (not necessarily restricted to a particular prepare-and-measure experiment) that are admissible in the theory. We will call these equivalence relations ``operational equivalences", in keeping with standard terminology \cite{Spe05}. This means that any distinctions of labels between procedures in an equivalence class of procedures do not affect the operational probabilities associated with the procedures. We specify these equivalence relations for measurement and preparation procedures below.

Two measurement events $[m|M]$ and $[m'|M']$ are said to be operationally equivalent, denoted $[m|M]\simeq[m'|M']$, if there exists no source event in the operational theory that can distinguish them, i.e.,
\begin{equation}\label{opequiv1mmt}
p(m,s|M,S)=p(m',s|M',S)\quad\forall [s|S], s\in V_S, S\in\mathscr{S}.
\end{equation}
Note that the statistical indistinguishability of $[m|M]$ and $[m'|M']$ must hold for all possible source settings $\mathscr{S}$ in the operational theory, not merely the source settings $\mathbb{S}$ that are of direct interest in a particular prepare-and-measure experiment. Hence, the ``distinction of labels", $[m|M]$ or $[m'|M']$, is empirically inconsequential since the two procedures are, in principle, indistinguishable by the lights of the operational theory.

Similarly, two source events $[s|S]$ and $[s'|S']$ are said to be operationally equivalent, denoted $[s|S]\simeq [s'|S']$, if there exists no measurement event in the operational theory that can distinguish them, i.e.,
\begin{align}\label{opequiv1prep}
&p(m,s|M,S)=p(m,s'|M,S'),\nonumber\\
&\forall [m|M], m\in V_M, M\in\mathscr{M}.
\end{align}
Again, the statistical indistinguishability of $[s|S]$ and $[s'|S']$ must hold for all possible measurement settings $\mathscr{M}$, not merely those (i.e., $\mathbb{M}$) that are of direct interest in a particular prepare-and-measure experiment. Similar to measurement events, the ``distinction of labels", $[s|S]$ or $[s'|S']$, is empirically inconsequential since the two procedures are, in principle, indistinguishable by the lights of the operational theory.

Given this equivalence structure for preparation and measurement procedures in the operational theory, we can now formalize the notion of a {\em context}: 

	\begin{definition}
		A context is any distinction of labels between operationally equivalent procedures in the operational theory.
	\end{definition}
To see concrete examples of the kinds of contexts that will be of interest to us in this paper, consider quantum theory. Any mixed quantum state admits multiple convex decompositions in terms of other quantum states, i.e., it can be prepared by coarse-graining over distinct ensembles of quantum states, each ensemble denoted by a different label. In this case, the ``distinction of labels" between different decompositions denotes a distinction of preparation ensembles, which instantiates our notion of a preparation context. Similarly, a given positive operator can be implemented by different positive operator-valued measures (POVMs), and the distinction of labels denoting these different POVMs instantiates our notion of a measurement context.

\subsection{Ontological model}
Given the operational description of the experiment in terms of probabilities $p(m,s|M,S)$, we want to explore the properties of any underlying ontological model for this operational description. Any such ontological model, defined within the ontological models framework \cite{HarriganSpekkens}, takes as primitive the {\em physical system} (rather than {\em operations} on it) that passes between the source and measurement devices, i.e., its basic objects are {\em ontic states} of the system, denoted $\lambda\in\Lambda$, that represent intrinsic properties of the physical system. When a preparation procedure $[s|S]$ is carried out, the source device samples from the space of ontic states $\Lambda$ according to a probability distribution $\{\mu(\lambda|S,s)\in[0,1]\}_{\lambda\in\Lambda}$, where $\sum_{\lambda\in\Lambda}\mu(\lambda|S,s)=1$, and the joint distribution over $s$ and $\lambda$ given $S$, i.e., $\{\mu(\lambda,s|S)\}_{\lambda\in\Lambda}$, is given by $\mu(\lambda,s|S)\equiv\mu(\lambda|S,s)p(s|S)$. On the other hand, when a system in ontic state $\lambda$ is input to the measurement device with measurement setting $M\in\mathbb{M}$, the probability distribution over the measurement outcomes is given by $\{\xi(m|M,\lambda)\in[0,1]\}_{m\in V_M}$, where $\sum_{m\in V_M}\xi(m|M,\lambda)=1$. The operational statistics $$\{\{p(m,s|M,S)\in[0,1]\}_{m\in V_M, s\in V_S}\}_{M\in \mathbb{M}, S\in \mathbb{S}}$$ results from a coarse-graining over $\lambda$, i.e.,
\begin{equation}
p(m,s|M,S)=\sum_{\lambda\in\Lambda}\xi(m|M,\lambda)\mu(\lambda,s|S),
\end{equation}
for all $m\in V_M, s\in V_S, M\in\mathbb{M}, S\in\mathbb{S}$.

Note that the definition of an ontological model above extends to the definition of an ontological model of the operational {\em theory} (as opposed to a particular fragment of the theory representing the experiment) when we take $\mathbb{M}=\mathscr{M}$ and $\mathbb{S}=\mathscr{S}$.

\subsection{Representation of coarse-graining}\label{coarsegrainings}
We will now specify how coarse-graining of procedures in a prepare-and-measure experiment is represented in its description, whether operational or ontological. Namely, if a procedure is defined as a coarse-graining of other procedures, then we require that the representation of such a procedure is defined by the same coarse-graining of the representation of the other procedures.\footnote{Quantum theory is an example of an operational theory that satisfies this requirement because of the linearity of the Born rule with respect to both preparations and measurements. The same is true, more generally, of general probabilistic theories (GPTs) \cite{hardy5axioms, GPTs}. We require this feature in any ontological model as well, regardless of its (non)contextuality.} Implicit in this discussion is the assumption that the operational theory allows one to define new procedures in the set $\mathscr{M}$ or $\mathscr{S}$ by coarse-graining other procedures in these sets, i.e., both $\mathscr{M}$ and $\mathscr{S}$ are closed under coarse-grainings. In particular, one can consider coarse-graining measurement and source settings (belonging to sets $\mathbb{M}$ and $\mathbb{S}$, respectively) actually implemented in the lab to define new measurement and source settings that belong to $\mathscr{M}\backslash\mathbb{M}$ and $\mathscr{S}\backslash\mathbb{S}$, respectively.\footnote{Similarly, we also allow probabilistic mixtures of (preparation or measurement) procedures in the operational theory to define new procedures, i.e., the theory is convex. See the last paragraph of Section \ref{noncontextualitysection} for the role of this convexity in experimental tests of contextuality and Section \ref{spekexample} for an example where a  probabilistic mixture of measurement settings is required in a proof of contextuality.}

\subsubsection{Coarse-graining of measurements}
Let us see how this works for the case of measurement procedures: if a measurement procedure $M$ with measurement events $\{[m|M]\}_{m\in V_M}$ is {\em defined} as a coarse-graining of another measurement procedure $\tilde{M}$ with measurement events $\{[\tilde{m}|\tilde{M}]\}_{\tilde{m}\in V_{\tilde{M}}}$, symbolically denoted by 
\begin{align}\label{mmtpostprocess}
&[m|M]\equiv \sum_{\tilde{m}}p(m|\tilde{m})[\tilde{m}|\tilde{M}],\nonumber\\
&\textrm{where }\forall m,\tilde{m}: p(m|\tilde{m})\in\{0,1\}, \sum_{m}p(m|\tilde{m})=1,
\end{align}
then its representation in the operational description as well as in the ontological description satisfies this coarse-graining relation.\footnote{Note that Eq.~\eqref{mmtpostprocess} is {\em not} an operational equivalence between independent procedures. It is a {\em definition} of a new procedure obtained by coarse-graining another procedure.} More explicitly, the coarse-graining relation of Eq.~\eqref{mmtpostprocess} denotes the following post-processing of $\tilde{M}$: for each $m\in V_M$, relabel each outcome $\tilde{m}\in V_{\tilde{M}}$ to outcome $m$ with probability $p(m|\tilde{m})\in\{0,1\}$; the logical disjunction of those $\tilde{m}$ which are relabelled to $m$ with probability $1$ then defines the measurement event $[m|M]$. Now, in the operational theory, this post-processing is represented by 
\begin{align}
&\forall [s|S],\textrm{ where } s\in V_S, S\in\mathscr{S}:\nonumber\\
&p(m,s|M,S)\equiv \sum_{\tilde{m}}p(m|\tilde{m})p(\tilde{m},s|\tilde{M},S),
\end{align}
and in the ontological model it is represented by 
\begin{equation}
\forall \lambda\in\Lambda: \xi(m|M,\lambda)\equiv \sum_{\tilde{m}}p(m|\tilde{m})\xi(\tilde{m}|\tilde{M},\lambda).
\end{equation}  

As an example, consider a three-outcome measurement $\tilde{M}$ with outcomes $\tilde{m}\in\{1,2,3\}$, which can be classically post-processed to obtain a two-outcome measurement $M$ with outcomes $m\in\{0,1\}$, such that $p(m=0|\tilde{m}=1)=p(m=0|\tilde{m}=2)=1$ and $p(m=1|\tilde{m}=3)=1$. The measurement events of $M$ are then just
\begin{align}
[m=0|M]&\equiv [\tilde{m}=1|\tilde{M}]+[\tilde{m}=2|\tilde{M}],\\
[m=1|M]&\equiv [\tilde{m}=3|\tilde{M}],
\end{align}
where the ``$+$" sign denotes (just as the summation sign in the definition of $[m|M]$ in Eq.~\eqref{mmtpostprocess} did) logical disjunction, i.e., measurement event $[m=0|M]$ is said to occur when $[\tilde{m}=1|\tilde{M}]$ or $[\tilde{m}=2|\tilde{M}]$ occurs. The operational and ontological representations of these measurement events are then given by 
\begin{align}
&\forall [s|S],\textrm{ where } s\in V_S, S\in\mathscr{S}:\nonumber\\
&p(m=0,s|M,S)\equiv \sum_{\tilde{m}=1}^2p(\tilde{m},s|\tilde{M},S),\\
&p(m=1,s|M,S)\equiv p(\tilde{m}=3,s|\tilde{M},S),\\\nonumber\\
&\forall \lambda\in \Lambda:\nonumber\nonumber\\
&\xi(m=0|M,\lambda)\equiv \sum_{\tilde{m}=1}^2\xi(\tilde{m}|\tilde{M},\lambda),\\
&\xi(m=1|M,\lambda)\equiv \xi(\tilde{m}=3|\tilde{M},\lambda).
\end{align}

This requirement on the representation of coarse-graining of measurements is particularly important (and often implicit) when the notion of a measurement context is instantiated by compatibility (or joint measurability), as in the case of KS-contextuality, where one needs to consider coarse-grainings of distinct measurements. For example, consider a measurement setting $M_{12}$ with outcomes $(m_1,m_2)\in V_1\times V_2$ that is coarse-grained over $m_2$ to define an effective measurement setting $M_1^{(2)}$ with measurement events $\{[m_1|M_1^{(2)}]\}_{m_1\in V_1}$. Symbolically, $[m_1|M_1^{(2)}]\equiv \sum_{m_2}[(m_1,m_2)|M_{12}]$, which is represented in the operational theory as $\forall [s|S]: p(m_1,s|M_1^{(2)},S)\equiv \sum_{m_2}p((m_1,m_2),s|M_{12},S)$ and in the ontological model as $\forall \lambda: \xi(m_1|M_1^{(2)},\lambda)\equiv \sum_{m_2}\xi((m_1,m_2)|M_{12},\lambda)$. Similarly, consider another measurement setting $M_{13}$ with outcomes $(m_1,m_3)\in V_1\times V_3$ that is coarse-grained over $m_3$ to define an effective measurement setting $M_1^{(3)}$ with measurement events $\{[m_1|M_1^{(3)}]\}_{m_1\in V_1}$. Symbolically, $[m_1|M_1^{(3)}]\equiv \sum_{m_3}[(m_1,m_3)|M_{13}]$, which is represented in the operational theory as $\forall [s|S]: p(m_1,s|M_1^{(3)},S)\equiv \sum_{m_3}p((m_1,m_3),s|M_{13},S)$ and in the ontological model as $\forall \lambda: \xi(m_1|M_1^{(3)},\lambda)\equiv \sum_{m_3}\xi((m_1,m_3)|M_{13},\lambda)$. 

Now, imagine that the following operational equivalence holds at the operational level: $[m_1|M_1^{(2)}]\simeq[m_1|M_1^{(3)}]$. KS-noncontextuality is then the assumption that  $\sum_{m_2}\xi((m_1,m_2)|M_{12},\lambda)=\sum_{m_3}\xi((m_1,m_3)|M_{13},\lambda)$ (i.e., $\xi(m_1|M_1^{(2)},\lambda)=\xi(m_1|M_1^{(3)},\lambda)$) for all $\lambda$ and that  $\xi((m_1,m_2)|M_{12},\lambda),\xi((m_1,m_3)|M_{13},\lambda)\in\{0,1\}$ for all $\lambda$. This assumption applied to multiple (compatible) subsets of a set of carefully chosen measurements can then provide a proof of the KS theorem, i.e., there exist sets of measurements in quantum theory such that their operational statistics cannot be emulated by a KS-noncontextual ontological model. 

The key point here is this: the requirement that coarse-graining relations between measurements be respected by their representations in the ontological model is independent of the KS-(non)contextuality of the ontological model.\footnote{In our example, this requirement has to do with the definitions of $\xi(m_1|M_1^{(2)},\lambda)$ and $\xi(m_1|M_1^{(3)},\lambda)$, not their ontological equivalence. The ontological equivalence only comes into play when invoking KS-noncontextuality.} However, this requirement is {\em necessary} for the assumption of KS-noncontextuality to produce a contradiction with quantum theory; on the other hand, a KS-contextual ontological model (while respecting the coarse-graining relations) can always emulate quantum theory. In this sense, the representation of coarse-grainings is baked into an ontological model from the beginning (just as it is baked into an operational description), before any claims about its (non)contextuality.\footnote{One could, of course, choose to not respect the coarse-graining relations and define a notion of an ontological model without them. In such a model, one could treat every measurement obtained by coarse-graining another (parent) measurement as a fundamentally new measurement with response functions {\em not} respecting the coarse-graining relations with the parent measurement's response functions, even if such coarse-graining relations are respected in the operational description. Such an ontological model, however, will not be able to articulate the ingredients needed for a proof of the KS theorem and we will not consider it here. Indeed, in the absence of the requirement that coarse-graining relations be respected in an ontological model, one can easily construct an ontological model that is ``KS-noncontextual" for any operational theory. The interested reader may look at Appendix \ref{ontmodelwocoarsegraining} for more details, perhaps after looking at Section \ref{noncontextualitysection} for the relevant definitions of noncontextuality.}

\subsubsection{Coarse-graining of preparations}
Let us now consider the representation of coarse-grainings for preparation procedures. This works in a way similar to the case of measurement procedures which we have already outlined. If an ensemble of source events $\{[s|S]\}_{s\in V_S}$ is defined as a coarse-graining of another ensemble, $\{[\tilde{s}|\tilde{S}]\}_{\tilde{s}\in V_{\tilde{S}}}$, symbolically denoted as
\begin{align}\label{preppostprocess}
&[s|S]\equiv\sum_{\tilde{s}}p(s|\tilde{s})[\tilde{s}|\tilde{S}], \textrm{where }\nonumber\\
&\forall s,\tilde{s}: p(s|\tilde{s})\in\{0,1\}, \sum_sp(s|\tilde{s})=1,
\end{align}
then its representation should satisfy the same coarse-graining relation in any description, operational or ontological. More explicitly, this coarse-graining denotes the following post-processing: for any $s\in V_S$, relabel each outcome $\tilde{s}\in V_{\tilde{S}}$ to outcome $s$ with probability $p(s|\tilde{s})\in\{0,1\}$; the logical disjunction of those $\tilde{s}$ which are relabelled to $s$ with probability 1 then defines the source event $[s|S]$. Now, in the operational theory, this coarse-graining is represented by 
\begin{align}
&\forall [m|M],\textrm{ where } m\in V_M, M\in\mathscr{M}:\nonumber\\
&p(m,s|M,S)\equiv \sum_{\tilde{s}}p(s|\tilde{s})p(m,\tilde{s}|M,\tilde{S}),
\end{align}
and in the ontological model it is represented by 
\begin{equation}
\forall \lambda\in\Lambda: \mu(\lambda,s|S)\equiv \sum_{\tilde{s}}p(s|\tilde{s})\mu(\lambda,\tilde{s}|\tilde{S}).
\end{equation}
In this paper, we will focus on a specific type of coarse-graining: namely, completely coarse-graining over the outcomes of a source setting, say $\{[\tilde{s}|\tilde{S}]\}_{\tilde{s}\in V_{\tilde{S}}}$, to yield an effective one-outcome source-setting, denoted $\tilde{S}_{\top}$, associated with a single source event $\{[\top|\tilde{S}_{\top}]\}$, where $[\top|\tilde{S}_{\top}]\equiv \sum_{\tilde{s}}[\tilde{s}|\tilde{S}]$. In the operational theory, this coarse-graining is represented by
\begin{align}
&\forall [m|M],\textrm{ where } m\in V_M, M\in\mathscr{M}:\nonumber\\
&p(m,\top|M,\tilde{S}_{\top})\equiv \sum_{\tilde{s}}p(m,\tilde{s}|M,\tilde{S}),
\end{align}
and in the ontological model it is represented by 
\begin{equation}
\forall \lambda\in\Lambda: \mu(\lambda,\top|\tilde{S}_{\top})\equiv \sum_{\tilde{s}}\mu(\lambda,\tilde{s}|\tilde{S}).
\end{equation}
Hence, we use the notation $[\top|\tilde{S}_{\top}]$ to denote the source event that ``at least one of the source outcomes in the set $V_{\tilde{S}}$ occurs for source setting $\tilde{S}$" (i.e., the logical disjunction of $\tilde{s}\in V_{\tilde{S}}$), formally denoting the choice of $\tilde{S}$ and the subsequent coarse-graining over $\tilde{s}$ by the ``source setting" $\tilde{S}_{\top}$ and the definite outcome of this source setting by ``$\top$". This source event always occurs, i.e., $p(\top|\tilde{S}_{\top})=1$, so $p(m,\top|M,\tilde{S}_{\top})=p(m|M,\tilde{S}_{\top},\top)$ and $\mu(\lambda,\top|\tilde{S}_{\top})= \mu(\lambda|\tilde{S}_{\top},\top)$.

This notion of coarse-graining over all the outcomes of a source setting allows us to define a notion of operational equivalence between the source settings themselves. More precisely, two {\em source settings} $S$ and $S'$ are said to be operationally equivalent, denoted $[\top|S_{\top}]\simeq [\top|S'_{\top}]$, if no measurement event can distinguish them once all their outcomes are coarse-grained over, i.e.,
\begin{eqnarray}\label{opequiv2prep}
&&\sum_{s\in V_S}p(m,s|M,S)=\sum_{s'\in V_{S'}}p(m,s'|M,S')\nonumber\\
&\quad&\forall [m|M], m\in V_M, M\in\mathscr{M}.
\end{eqnarray}

In quantum theory, this would correspond to the operational equivalence $\sum_sp(s|S)\rho_{[s|S]}=\sum_{s'}p(s'|S')\rho_{[s'|S']}$ for the density operator obtained by completely coarse-graining over two distinct ensembles of quantum states, $\{(p(s|S),\rho_{[s|S]})\}_{s\in V_S}$ and $\{(p(s'|S'),\rho_{[s'|S']})\}_{s'\in V_{S'}}$ on some Hilbert space $\mathcal{H}$.

\subsection{Joint measurability (or compatibility)}\label{compatibilitysection}
A given measurement procedure, $\{[m|M]\}_{m\in V_M}$ for some $M\in\mathscr{M}$, in the operational description can be coarse-grained in many different ways to define new effective measurement procedures. The coarse-grained measurement procedures thus obtained from $\{[m|M]\}_{m\in V_M}$ are then said to be jointly measurable (or compatible), i.e., they can be jointly implemented by the same measurement procedure $\{[m|M]\}_{m\in V_M}$ which we refer to as their parent or joint measurement. Formally, a set $\mathcal{C}$ of measurement procedures $$\{\{[m_i|M_i]\}_{m_i\in V_{M_i}}\big|i\in\{1,2,3,\dots,|\mathcal{C}|\}\}$$
is said to be jointly measurable (or compatible) if it arises from coarse-grainings of a single measurement procedure $M\in\mathscr{M}$, i.e., for all $ \{[m_i|M_i]\}_{m_i\in V_{M_i}}\in\mathcal{C}$
\begin{equation}
[m_i|M_i]\equiv\sum_{m\in V_M}p(m_i|m)[m|M], 
\end{equation}
where for all $i,m,m_i$: $p(m_i|m)\in\{0,1\}$ and $\sum_{m_i\in V_{M_i}}p(m_i|m)=1$. In terms of the operational probabilities, this means that
\begin{align}
&\forall [s|S], s\in V_S, S\in\mathscr{S}\textrm{ and }\forall \{[m_i|M_i]\}_{m_i\in V_{M_i}}\in\mathcal{C}:\nonumber\\
&p(m_i,s|M_i,S)\equiv \sum_{m\in V_M}p(m_i|m)p(m,s|M,S).
\end{align}

If, on the other hand, a set of measurement procedures cannot arise from coarse-grainings of any single measurement procedure, then the measurement procedures in the set are said to be {\em incompatible}, i.e., they cannot be jointly implemented. 

Note that we will also often refer to a measurement procedure $\{[m_i|M_i]\}_{m_i\in V_{M_i}}$ by just its measurement setting, $M_i$, and thus speak of the (in)compatibility of measurement settings. Another notion that we will need to refer to is the joint measurability of measurement events: a set of measurement events that arise as outcomes of a single measurement setting are said to be jointly measurable, e.g., all the measurement events in $\{[m|M]\}_{m\in V_M}$ are jointly measurable since they arise as outcomes of a single measurement setting $M$.

As a quantum example, consider a commuting pair of projective measurements, say $\{\Pi_1,I-\Pi_1\}$ and $\{\Pi_2,I-\Pi_2\}$, where $\Pi_1$ and $\Pi_2$ are projectors on some Hilbert space $\mathcal{H}$ such that $\Pi_1\Pi_2=\Pi_2\Pi_1$ and $I$ is the identity operator on $\mathcal{H}$. This pair is jointly implementable since they can be obtained by coarse-graining the outcomes of the joint projective measurement given by  $\{\Pi_1\Pi_2,\Pi_1(I-\Pi_2),(I-\Pi_1)\Pi_2,(I-\Pi_1)(I-\Pi_2)\}$.
\subsection{Noncontextuality}\label{noncontextualitysection}
It is always possible to build an ontological model reproducing the predictions of any operational theory, while respecting the coarse-graining relations.\footnote{Note that we will always assume coarse-graining relations are respected in any ontological model. The exception is (some of) the discussion in Section \ref{coarsegrainings} and Appendix \ref{ontmodelwocoarsegraining} where we consider the alternative possibility.} A trivial example of such an ontological model is one where ontic states $\lambda$ are identified with the preparation procedures $P_{[s|S]}$ (where $s\in V_S$ and $S\in \mathscr{S}$) and we have $\mu(\lambda,s|S)\equiv\delta_{\lambda,\lambda_{[s|S]}}p(s|S)$, where ontic state $\lambda_{[s|S]}$ is the one deterministically sampled by the preparation procedure $P_{[s|S]}$. Further, the response functions are identified with operational probabilities as $\xi(m|M,\lambda_{[s|S]})\equiv p(m|M,S,s)$. Then we have $\sum_{\lambda\in\Lambda}\xi(m|M,\lambda)\mu(\lambda,s|S)=\xi(m|M,\lambda_{[s|S]})p(s|S)=p(m,s|M,S)$. Also, coarse-graining relations of the type $[\tilde{m}|\tilde{M}]\equiv \sum_{m}p(\tilde{m}|m)[m|M]$ and $[\tilde{s}|\tilde{S}]\equiv \sum_{s}p(\tilde{s}|s)[s|S]$ that are respected in the operational description are also respected in this ontological description: that is, we have  $\forall\lambda\in\Lambda: \xi(\tilde{m}|\tilde{M},\lambda)\equiv \sum_{m}p(\tilde{m}|m)\xi(m|M,\lambda)$ and $\forall\lambda\in\Lambda: \mu(\lambda,\tilde{s}|\tilde{S})\equiv \sum_{s}p(\tilde{s}|s)\mu(\lambda,s|S)$. 

Hence, it is only when additional assumptions are imposed on an ontological model that deciding its existence becomes a nontrivial problem. Such additional assumptions must, of course, play an explanatory role to be worth investigating. The assumption we are interested in is {\em noncontextuality}, applied to both preparation and measurement procedures. Motivated by the methodological principle of the identity of indiscernables \cite{Spe05}, noncontextuality is an inference from the operational description to the ontological description of an experiment. It posits that the equivalence structure in the operational description is preserved in the ontological description, i.e., the reason one cannot distinguish two operationally equivalent representations of procedures based on their operational statistics is that there is, ontologically, no difference in their representations. We now formally define the notion of noncontextuality in its generalized form due to Spekkens \cite{Spe05}.

Mathematically, the assumption of measurement noncontextuality entails that
\begin{align}\label{ncmmt}
&[m|M]\simeq[m'|M']\nonumber\\
&\Rightarrow \xi(m|M,\lambda)=\xi(m'|M',\lambda), \forall\lambda\in\Lambda,
\end{align}
while the assumption of preparation noncontextuality entails that 
\begin{eqnarray}\label{ncprep}
&&[s|S]\simeq[s'|S']\Rightarrow\mu(\lambda,s|S)=\mu(\lambda,s'|S')\quad\forall\lambda\in\Lambda,\nonumber\\
&&[\top|S_{\top}]\simeq [\top|S'_{\top}]\Rightarrow\mu(\lambda|S)=\mu(\lambda|S')\quad\forall\lambda\in\Lambda.
\end{eqnarray}
Here we denote $\mu(\lambda|S)\equiv\sum_{s\in V_S}\mu(\lambda,s|S)$, etc., for simplicity of notation, rather than use the notation $\mu(\lambda,\top|S_{\top})$, etc., for these coarse-grained probability distributions. Note that since coarse-grainings are respected in any ontological model we consider, we indeed have that $\mu(\lambda,\top|S_{\top})\equiv \sum_{s\in V_S}\mu(\lambda,s|S)$.

These are the assumptions of noncontextuality -- termed {\em universal noncontextuality} -- that form the basis of our approach to noise-robust noncontextuality inequalities \cite{KunjSpek, exptlpaper, KrishnaSpekkensWolfe, SchmidSpek, KunjSpek17,algorithmic,cciosatalk}. Note that the traditional notion of KS-noncontextuality entails, besides measurement noncontextuality above, the assumption of {\em outcome-determinism}, i.e., for any measurement event $[m|M]$, $\xi(m|M,\lambda)\in\{0,1\}$ for all $\lambda\in\Lambda$.

It is important to note that in order for our notion of operational equivalence to be experimentally testable, we need that each of sets $\mathbb{M}$ and $\mathbb{S}$ includes a tomographically complete set of measurements and preparations, respectively. That is, the prepare-and-measure experiment testing contextuality can probe a tomographically complete set of preparations and measurements. Of course, the set of all possible measurements in a theory ($\mathscr{M}$) is (by definition) tomographically complete for any preparation in the theory and, similarly, the set of all possible preparations ($\mathscr{S}$) in a theory is tomographically complete for any measurement in the theory. However, there may exist smaller (finite) sets of preparations and measurements in the theory that are tomographically complete and in that case we require that $\mathbb{S}$ and $\mathbb{M}$ include such tomographically complete sets, even if they don't include all possible preparations and measurements in the theory. For example, when the operational theory is quantum theory for a qubit, the three spin measurements $\{\sigma_x,\sigma_y,\sigma_z\}$ are tomographically complete for any qubit preparation, so we require that $\mathbb{M}$ includes these three measurements even if it doesn't include every other possible measurement on a qubit. While the requirement that $\mathbb{S}$ and $\mathbb{M}$ include tomographically complete sets doesn't directly reflect in our {\em theoretical} derivation of the noise-robust noncontextuality inequalities later, it is crucial for {\em experimentally} verifying the operational equivalences (cf.~Eqs.~\eqref{opequiv1mmt},\eqref{opequiv1prep},\eqref{opequiv2prep}) we need to even invoke the assumption of noncontextuality (cf.~Eqs.~\eqref{ncmmt},\eqref{ncprep}). Further, this assumption on $\mathbb{M}$ and $\mathbb{S}$ has so far been necessary to be able to implement an actual noise-robust contextuality experiment \cite{exptlpaper}, besides the requirement that the operational theory be convex, i.e., probabilistic mixtures of procedures in the theory (whether preparations or measurements) are also valid procedures in the theory. We refer the reader to Refs.~\cite{exptlpaper,robtalk,gpttompaper} for a discussion of what tomographic completeness entails for (convex) operational theories formalized as general probabilistic theories (GPTs). Although we will not discuss it in this paper, see Ref.~\cite{wotomocomp} for some recent work towards relaxing the tomographic completeness requirement for the set of measurement settings.
\subsection{An example of Spekkens contextuality: the fair coin flip inequality}\label{spekexample} 
We recap here an example of Spekkens contextuality that has been experimentally demonstrated \cite{exptlpaper} to give the reader a flavour of the general approach we are going to adopt in the rest of this paper with regard to Kochen-Specker type scenarios. We call the inequality tested in Ref.~\cite{exptlpaper} the ``fair coin flip" inequality.

Consider a prepare-and-measure scenario with three source settings, denoted $\mathbb{S}\equiv \{S_1,S_2,S_3\}$, such that $V_{S_i}\equiv\{0,1\}$ and we have $p(s_i=0|S_i)=p(s_i=1|S_i)=1/2$ for all $i\in\{1,2,3\}$. Each $S_i$ thus corresponds to the ensemble of preparation procedures $\{(p(s_i|S_i),P_{[s_i|S_i]})\}_{s_i\in V_{S_i}}$ and we have the following operational equivalence among the source settings after coarse-graining:
\begin{equation}\label{fcfsourceeq}
[\top|S_{1_{\top}}]\simeq [\top|S_{2_{\top}}]\simeq [\top|S_{3_{\top}}].
\end{equation}
There are four measurement settings in this scenario, denoted $\mathbb{M}\equiv\{M_1,M_2,M_3, M_{\rm fcf}\}$, such that $V_{M_i}\in\{0,1\}$ for all $i\in\{1,2,3,{\rm fcf}\}$. The measurement setting $M_{\rm fcf}$ is a fair coin flip, i.e., it is insensitive to the preparation procedure preceding it and yields the outcome $m_{\rm fcf}=0$ or $1$ with equal probability for any preparation procedure $P_{[s|S]}$, i.e., $p(m_{\rm fcf}=0|M_{\rm fcf},S,s)=p(m_{\rm fcf}=1|M_{\rm fcf},S,s)=1/2$ for all $[s|S]$. 

We also {\em define} a measurement procedure $M_{\rm mix}$ as a classical post-processing of $M_1,M_2,M_3$, i.e., its measurement events $\{[m_{\rm mix}|M_{\rm mix}]\}_{m_{\rm mix}=0}^1$ are defined by the classical post-processing relation

\begin{equation}
[m_{\rm mix}|M_{\rm mix}]\equiv \sum_{i=1}^3 p(i)\sum_{m_i=0}^1p(m_{\rm mix}|m_i)[m_i|M_i],
\end{equation}
which symbolically denotes the following post-processing of measurements $M_1,M_2,M_3$: consider a uniform probability distribution $\left\{p(i)=\frac{1}{3}\right\}_{i=1}^3$ over the measurement settings $\{M_i\}_{i=1}^3$ and relabel the respective measurement outcomes, i.e., $\{m_i\in\{0,1\}\}_{i=1}^3$, to a measurement outcome $m_{\rm mix}\in\{0,1\}$ according to the probability distributions $$\{\{p(m_{\rm mix}|m_i)=\delta_{m_{\rm mix},m_i}\}_{m_{\rm mix}\in\{0,1\}}\}_{i=1}^3;$$ coarse-graining over $m_i$ and $i$ then yields the effective measurement setting $M_{\rm mix}$ with outcomes labelled by $m_{\rm mix}\in\{0,1\}$. In contrast to the kinds of coarse-graining (over measurement outcomes) that appear in KS-noncontextuality (which we discussed in Section \ref{coarsegrainings}), the (probabilistic) coarse-graining here is over the measurement settings themselves while retaining the outcome labels.\footnote{We did not discuss these more general types of classical post-processing in Section \ref{coarsegrainings} because they are not relevant to the treatment of Kochen-Specker type scenarios in the Spekkens framework. The example we present here is from Ref.~\cite{exptlpaper}, which is not of Kochen-Specker type. The general principle underlying the representation of such classical post-processings is, however, the same: they should be respected in the operational as well as the ontological description.} We require that this coarse-graining relation be respected in the operational as well as the ontological description. In the operational description, this coarse-graining is represented by 
\begin{align}
&\forall [s|S], b\in\{0,1\}:\nonumber\\
&p(m_{\rm mix}=b,s|M_{\rm mix},S)\equiv\frac{1}{3}\sum_{i=1}^3p(m_i=b,s|M_i,S).
\end{align}
 
We require the following operational equivalence between measurement events of $M_{\rm mix}$ and $M_{\rm fcf}$ with respect to which we invoke the assumption of measurement noncontextuality:
\begin{equation}\label{fcfmmteq}
\forall b\in\{0,1\}: [m_{\rm mix}=b|M_{\rm mix}]\simeq [m_{\rm fcf}=b|M_{\rm fcf}]
\end{equation}

If we then look at an operational quantity quantifying source-measurement correlations, namely,
\begin{equation}
{\rm Corr}_{\rm fcf}\equiv \sum_{i=1}^3\frac{1}{3}\sum_{m_i,s_i}\delta_{m_i,s_i}p(m_i,s_i|M_i,S_i),
\end{equation}
then the assumption of preparation noncontextuality applied to operational equivalence in Eq.~\eqref{fcfsourceeq} (so that $\mu(\lambda|S_1)=\mu(\lambda|S_2)=\mu(\lambda|S_3)$ for all $\lambda\in\Lambda$) and the assumption of measurement noncontextuality applied to the operational equivalence in Eq.~\eqref{fcfmmteq} (so that $\frac{1}{3}\xi(0|M_1,\lambda)+\frac{1}{3}\xi(0|M_2,\lambda)+\frac{1}{3}\xi(0|M_3,\lambda)=\frac{1}{2}$ for all $\lambda\in\Lambda$) lead to the following constraint:
\begin{equation}\label{fcfineq}
{\rm Corr}_{\rm fcf}\leq \frac{5}{6}.
\end{equation}
To see how this is obtained, note that
\begin{align}
&\sum_{i=1}^3\frac{1}{3}\sum_{m_i,s_i}\delta_{m_i,s_i}p(m_i,s_i|M_i,S_i)\nonumber\\
&=\sum_{i=1}^3\frac{1}{3}\sum_{m_i,s_i}\delta_{m_i,s_i}\sum_{\lambda\in\Lambda}\xi(m_i|M_i,\lambda)\mu(\lambda,s_i|S_i)\nonumber\\
&\leq \sum_{i=1}^3\frac{1}{3}\sum_{\lambda\in\Lambda}\max_{m_i}\xi(m_i|M_i,\lambda)\sum_{m_i,s_i}\delta_{m_i,s_i}\mu(\lambda,s_i|S_i)\nonumber\\
&=\sum_{i=1}^3\frac{1}{3}\sum_{\lambda\in\Lambda}\zeta(M_i,\lambda)\sum_{s_i}\mu(\lambda,s_i|S_i)\nonumber\\
&=\sum_{\lambda\in\Lambda}\sum_{i=1}^3\frac{1}{3}\zeta(M_i,\lambda)\nu(\lambda),
\end{align}
where we have that $\zeta(M_i,\lambda)\equiv\max_{m_i}\xi(m_i|M_i,\lambda)$ and that $\nu(\lambda)\equiv\mu(\lambda|S_i)$ for all $i\in\{1,2,3\}$. This allows us to put the upper bound 
\begin{align}
{\rm Corr}_{\rm fcf}&\leq\max_{\lambda\in\Lambda}\frac{1}{3}\sum_{i=1}^3\zeta(M_i,\lambda),
\end{align}
which, subject to the constraint (from measurement noncontextuality) that $\frac{1}{3}\xi(0|M_1,\lambda)+\frac{1}{3}\xi(0|M_2,\lambda)+\frac{1}{3}\xi(0|M_3,\lambda)=\frac{1}{2}$, yields Eq.~\eqref{fcfineq}.\footnote{The reader may look at Appendix B.1 of  Ref.~\cite{exptlpaper} to convince themselves that the maximum is achieved for an assignment of response functions of the type $\xi(0|M_1,\lambda)=1$, $\xi(0|M_2,\lambda)=\frac{1}{2}$ and $\xi(0|M_1,\lambda)=0$ for some $\lambda$.}
It turns out that in quantum theory the sources and measurements required for this scenario can be realized on a qubit and they can, in principle, achieve the value ${\rm Corr}=1$. This can be achieved by taking the three preparations to be the trine preparations on an equatorial plane (say, the Z-X plane) of the Bloch sphere and the measurements $\{M_i\}_{i=1}^3$ to be the trine measurements, i.e.,
\begin{align}
&\rho_{[s_i=0|S_i]}\equiv\frac{1}{2}(\mathbb{I}+
\vec{\sigma}.\vec{n}_i)\equiv\Pi^0_i,\nonumber\\
&\rho_{[s_i=1|S_i]}\equiv\frac{1}{2}(\mathbb{I}-
\vec{\sigma}.\vec{n}_i)\equiv\Pi^1_i,\nonumber\\
&E_{[m_i=0|M_i]}\equiv\Pi^0_i,\nonumber\\
&E_{[m_i=1|M_i]}\equiv\Pi^1_i,
\end{align}
where $\vec{n}_1\equiv(0,0,1)$, $\vec{n}_2\equiv(\frac{\sqrt{3}}{2},0,-\frac{1}{2})$, $\vec{n}_3\equiv(-\frac{\sqrt{3}}{2},0,-\frac{1}{2})$, and $\vec{\sigma}\equiv(\sigma_x,\sigma_y,\sigma_z)$ denotes the three Pauli matrices $\sigma_x=\begin{pmatrix}
0 & 1\\ 1 & 0
\end{pmatrix}$, $\sigma_y=\begin{pmatrix}
0 & -i\\ i & 0
\end{pmatrix}$, and $\sigma_z=\begin{pmatrix}
1 & 0\\ 0 & -1
\end{pmatrix}$. The operational equivalences are then easy to verify:
\begin{align}
\rho_{[\top|S_{i_{\top}}]}&=\frac{\mathbb{I}}{2},\quad\forall i\in\{1,2,3\},\nonumber\\
\frac{1}{3}\sum_{i=1}^3\Pi^0_i&=\frac{\mathbb{I}}{2}.
\end{align}
The quantity ${\rm Corr}_{\rm fcf}=1$ from this quantum realization.
The experimental violation of the noise-robust noncontextuality inequality, Eq.~\eqref{fcfineq}, was demonstrated in Ref.~\cite{exptlpaper}, where more details may be found. Note that the fair coin flip inequality, Eq.\eqref{fcfineq}, is not inspired by the kinds of operational equivalences that are relevant in a proof of the Kochen-Specker theorem, but employs other kinds of operational equivalences allowed in the Spekkens framework \cite{Spe05}, i.e., the operational equivalences in Eqs.~\eqref{fcfsourceeq} and \eqref{fcfmmteq} do not arise from the same measurement outcome being shared by different measurements.

Our goal in the present paper is to provide a framework for noise-robust noncontextuality inequalities obtained from statistical proofs of the KS theorem, in particular those that are covered by the CSW framework \cite{CSW}, so that such inequalities can be put to an experimental test along the lines of Ref.~\cite{exptlpaper} within the Spekkens framework. Hence, the operational equivalences between measurement events that will be of interest to us in this paper are precisely those which allow for a proof of the KS theorem, i.e., those which correspond to the same measurement outcome (e.g., a projector) being shared by different measurements (e.g., projective measurements).

\subsection{Connection to Bell scenarios}
As further motivation to study the questions we are posing, note that one can also view the general prepare-and-measure scenario we are considering in this paper (Fig.~\ref{prepmsr}) as arising on one wing of a two-party Bell experiment: that is, given two parties -- Alice and Bob -- sharing an entangled state and performing local measurements in a Bell experiment, one can view each choice of measurement setting on Alice's side as preparing an ensemble of states on Bob's side; on account of no-signalling, the reduced state on Bob's side will be the same regardless of Alice's choice of measurement setting, i.e., all the ensembles corresponding to Alice's measurement settings (hence, Bob's source settings) will be operationally equivalent. 

For example, consider a Bell experiment where Alice has two choices of measurement settings, $M^A_x\equiv\sigma_x$ or $M^A_z\equiv\sigma_z$, and she shares a Bell state with Bob: $|\psi\rangle=\frac{1}{\sqrt{2}}\left(|00\rangle+|11\rangle\right)$. Bob has access to some set of measurement settings $\mathbb{M}^B\equiv\{M^B_j\}_j$ on his system. When Alice measures $M^A_x$, she prepares the ensemble of states $S^A_x\equiv\{(1/2,\rho_{[s^A_x=0|S^A_x]}\equiv|+\rangle\langle+|),(1/2,\rho_{[s^A_x=1|S^A_x]}\equiv|-\rangle\langle-|)\}$ on Bob's side and when she measures $M^A_z$ she prepares the ensemble of states $S^A_z\equiv\{(1/2,\rho_{[s^A_z=0|S^A_z]}\equiv|0\rangle\langle0|),(1/2,\rho_{[s^A_z=1|S^A_z]}\equiv|1\rangle\langle1|)\}$. These ensembles are operationally equivalent, yielding the maximally mixed state on coarse-graining, i.e.,
\begin{equation}
\frac{1}{2}|0\rangle\langle 0|+\frac{1}{2}|1\rangle\langle 1|=\frac{1}{2}|+\rangle\langle +|+\frac{1}{2}|-\rangle\langle -|=\frac{\mathbb{I}}{2}.
\end{equation}

 The quantity of interest in a Bell experiment $p(m^A_i,m^B_j|M^A_i,M^B_j)$ ($i\in\{x,z\}$) is then formally the same as the quantity $p(s^A_i,m^B_j|S^A_i,M^B_j)$ that we are interested in our prepare-and-measure scenario. In the ontological model describing the effective prepare-and-measure experiment on Bob's system, we have the following: 
 \begin{align}
 &p(s^A_i,m^B_j|S^A_i,M^B_j)\nonumber\\
 &=\sum_{\lambda}{\rm Pr}(m^B_j|M^B_j,\lambda){\rm Pr}(\lambda,s^A_i|S^A_i)\nonumber\\
 &=\sum_{\lambda}{\rm Pr}(m^B_j|M^B_j,\lambda){\rm Pr}(s^A_i|S^A_i,\lambda){\rm Pr}(\lambda|S^A_i).
 \end{align} 
 Assuming preparation noncontextuality relative to the operational equivalence $[\top|S^A_x]\simeq[\top|S^A_z]$, we have ${\rm Pr}(\lambda|S^A_x)={\rm Pr}(\lambda|S^A_z)\equiv{\rm Pr}(\lambda)$, so that 
  \begin{align}
 &p(s^A_i,m^B_j|S^A_i,M^B_j)\nonumber\\
 &=\sum_{\lambda}{\rm Pr}(s^A_i|S^A_i,\lambda){\rm Pr}(m^B_j|M^B_j,\lambda){\rm Pr}(\lambda),
 \end{align} 
 which formally resembles the expression for local causality when applied to the corresponding two-party Bell experiment:
  \begin{align}
 &p(m^A_i,m^B_j|M^A_i,M^B_j)\nonumber\\
 &=\sum_{\lambda}{\rm Pr}(m^A_i|M^A_i,\lambda){\rm Pr}(m^B_j|M^B_j,\lambda){\rm Pr}(\lambda).
 \end{align}

If no other assumption of noncontextuality is invoked besides the one applied to the operational equivalence of source settings on Bob's system, then the constraints on $p(s^A_i,m^B_j|S^A_i,M^B_j)$ will be the same as the constraints on $p(m^A_i,m^B_j|M^A_i,M^B_j)$ from Bell inequalities.

Note, however, that the response functions ${\rm Pr}(m^B_j|M^B_j,\lambda)$ and ${\rm Pr}(m^A_j|M^A_j,\lambda)$ can be  completely arbitrary in a locally causal ontological model for the Bell experiment and the same applies to the distributions ${\rm Pr}(s^A_i|S^A_i,\lambda)$ and ${\rm Pr}(m^B_j|M^B_j,\lambda)$ in a preparation noncontextual model of the corresponding prepare-and-measure scenario on Bob's side. We will be interested in imposing {\em additional} constraints on the response functions ${\rm Pr}(m^B_j|M^B_j,\lambda)$ of the prepare-and-measure scenario (on Bob's side) that follow from the assumption of measurement noncontextuality applied to operational equivalences between measurement events on Bob's side. In particular, we are interested in those operational equivalences between measurement events that are required by any statistical proof of the Kochen-Specker theorem \cite{KCBS,KunjSpek17}. We develop this approach more carefully in the following sections.

\section{Hypergraph approach to Kochen-Specker scenarios in the Spekkens framework}\label{hypergraphsection}
Having set up the framework needed to articulate the relevant notions in Section \ref{spekframework}, we now proceed to consider Kochen-Specker type experimental scenarios in this framework. To do this, we will use the language of hypergraphs and their subgraphs to represent the operational equivalences between measurement events that are required in a Kochen-Specker argument as well as the operational equivalences between source settings that we will invoke in our generalization. The (hyper)graph-theoretic ingredients of our approach will represent those aspects of the general framework of Section \ref{spekframework} that are necessary to go from the CSW framework for KS-contextuality to a hypergraph framework for Spekkens contextuality applied to Kochen-Specker type experimental scenarios.

Our presentation will be a hybrid one, discussing features of the CSW framework \cite{CSW10, CSW} in the notation of the AFLS framework \cite{AFLS}, but extending both in ways appropriate for the purpose of this paper. Our goal is to demonstrate how the graph-theoretic invariants of CSW \cite{CSW} can be repurposed towards obtaining noise-robust noncontextuality inequalities.

 We do this in two parts: first, we define a representation of measurement events in the manner of Refs.~\cite{CSW,AFLS}, and then we define a representation of source events in the spirit of Ref.~\cite{KunjSpek}.

\subsection{Measurements}
The basic object for representing measurements is a hypergraph, $\Gamma$, with a finite set of vertices $V(\Gamma)$ such that each vertex $v\in V(\Gamma)$ denotes a measurement outcome, and a set of hyperedges $E(\Gamma)$ such that each hyperedge $e\in E(\Gamma)$ is a subset of $V(\Gamma)$ and denotes a measurement consisting of outcomes in $e$. Here, $E(\Gamma)\subseteq 2^{V(\Gamma)}$ and $\bigcup_{e\in E(\Gamma)}e=V(\Gamma)$. Such a hypergraph satisfies the definition of a {\em contextuality scenario} \`a la AFLS \cite{AFLS}. We will further assume, unless specified otherwise, that the hypergraph is {\em simple}: that is, for all $e_1, e_2 \in E(\Gamma)$, $e_1\subseteq e_2\Rightarrow e_1=e_2$, or that no hyperedge is a strict subset of another. Such hypergraphs are also called {\em Sperner families} \cite{sperner}. Two measurement events are said to be {\em (mutually) exclusive} if the vertices denoting them appear in a common hyperedge, i.e., if they can be realized as outcomes of a single measurement setting.

The structure of a {\em contextuality scenario} $\Gamma$ represents the operational equivalences between measurement events that are of interest in a Kochen-Specker argument. We emphasize here that we take the operational theory to be fundamental and the contextuality scenario for a particular Kochen-Specker argument to be derived from (and as a graphical representation of) the operational equivalences in the operational theory (cf.~Section \ref{spekframework}). In particular, depending on the operational equivalences that an operational theory can exhibit (by virtue of (in)compatibility relations between measurements), it may or may not allow some contextuality scenario to be realized by measurement events in the theory. The fact that a given vertex, say $v\in V(\Gamma)$, appears in multiple hyperedges, say $E'\equiv\{e\in E(\Gamma)|v\in e\}$,  means that the measurement events corresponding to this vertex, i.e., $\{[v|e]\}_{e\in E'}$, are operationally equivalent, and the equivalence class of these measurement events is denoted by the vertex $v$ itself. In the case of quantum theory, for example, $v$ can represent a positive operator that appears in different positive operator-valued measures (POVMs) represented by the hyperedges.

A {\em probabilistic model} on $\Gamma$ is an assignment of probabilities to the vertices $v\in V(\Gamma)$ such that $p(v)\geq0$ for all $v\in V(\Gamma)$ and $\sum_{v\in e}p(v)=1$ for all $e\in E(\Gamma)$. As we have noted, every vertex $v$ represents an {\em equivalence class} of measurement events, denoted $[\mathcal{m}|\mathcal{M}]$, and every hyperedge $e$ represents an {\em equivalence class} of measurement procedures, denoted $\mathcal{M}$.\footnote{Note that two measurement procedures with measurement settings $M$ and $M'$ are operationally equivalent if every measurement event of one is operationally equivalent to a distinct measurement event of the other. That is, there is a bijective correspondence (of operational equivalence) between the two sets of measurement events. In quantum theory, for example, a given POVM (which is what a hyperedge would represent), say $\{E_k\}_k$, can be implemented in many possible ways, each such measurement procedure corresponding to different quantum instrument. Mathematically, these different procedures can be represented by different sets of operators $\{O_k\}_k$ such that $E_k=O_k^{\dagger}O_k$ for all $k$ and $\sum_kO_k^{\dagger}O_k=\mathbb{I}$.} 
The fact that each $v$ represents an equivalence class of measurement events means that 
\begin{enumerate}
	\item any probabilistic model $p$ on $\Gamma$, realized by {\em operational} probabilities for a given source event -- that is, where for all $v\in V(\Gamma)$ and a given $[s|S]$, $p(v)\equiv p(v|S,s)\equiv p(\mathcal{m}|\mathcal{M},S,s)$ -- is consistent with the operational equivalences represented by $\Gamma$, and
	\item any probabilistic model on $\Gamma$, realized by {\em ontological} probabilities for a given ontic state -- that is, where for all $v\in V(\Gamma)$ and a given ontic state $\lambda$, $p(v)\equiv p(v|\lambda)\equiv\xi(\mathcal{m}|\mathcal{M},\lambda)$ -- respects (by definition) the assumption of measurement noncontextuality with respect to the presumed operational equivalences between measurement events.
\end{enumerate}

We will therefore often write $p(\mathcal{m},s|\mathcal{M},S)$ as $p(v,s|S)$ and $p(\mathcal{m}|\mathcal{M},S,s)$ as $p(v|S,s)$, where $[s|S]$ is a source event. Similarly, we will also write $\xi(\mathcal{m}|\mathcal{M},\lambda)$ as $p(v|\lambda)$, where $\lambda$ is an ontic state.

{\em Orthogonality graph of $\Gamma$, $O(\Gamma)$:} Given the hypergraph $\Gamma$, we construct its orthogonality graph $O(\Gamma)$: that is, the vertices of $O(\Gamma)$ are given by $V(O(\Gamma))\equiv V(\Gamma)$, and the edges of $O(\Gamma)$ are given by  $E(O(\Gamma))\equiv\{\{v,v'\}|v,v'\in e \textrm{ for some }e\in E(\Gamma)\}$. Each edge of $O(\Gamma)$ denotes the exclusivity of the two measurement events it connects, i.e., the fact that they can occur as outcomes of a single measurement.

For any Bell-KS inequality constraining correlations between measurement events from $O(\Gamma)$ (when all measurements are implemented on a given source event), we construct a subgraph $G$ of $O(\Gamma)$ such that the vertices of $G$, i.e., $V(G)$, correspond to measurement events that appear in the inequality with nonzero coefficients, and two vertices share an edge in $G$ if and only if they share an edge in $O(\Gamma)$. More explicitly, consider a Bell-KS expression 
\begin{equation}
R([s|S])\equiv\sum_{v\in V(G)}w_v p(v|S,s),
\end{equation}
where $w_v>0$ for all $v\in V(G)$. A Bell-KS inequality imposes a constraint of the form $R([s|S])\leq R_{\rm KS}$, where $R_{\rm KS}$ is the upper bound on the expression in any operational theory that admits a KS-noncontextual ontological model. Often, but not always, these inequalities are simply of the form where $w_v=1$ for all $v\in V(G)$. In keeping with the CSW notation \cite{CSW}, we will denote the general situation by a {\em weighted} graph $(G,w)$, where $w$ is a function that maps vertices $v\in V(G)$ to weights $w_v>0$. See Figures \ref{fig2} and \ref{fig3} for an example from the Klyachko-Can-Binicio\u{g}lu-Shumovsky (KCBS) scenario \cite{KCBS,CSW}.
\begin{figure}
	\centering
	\includegraphics[scale=0.3]{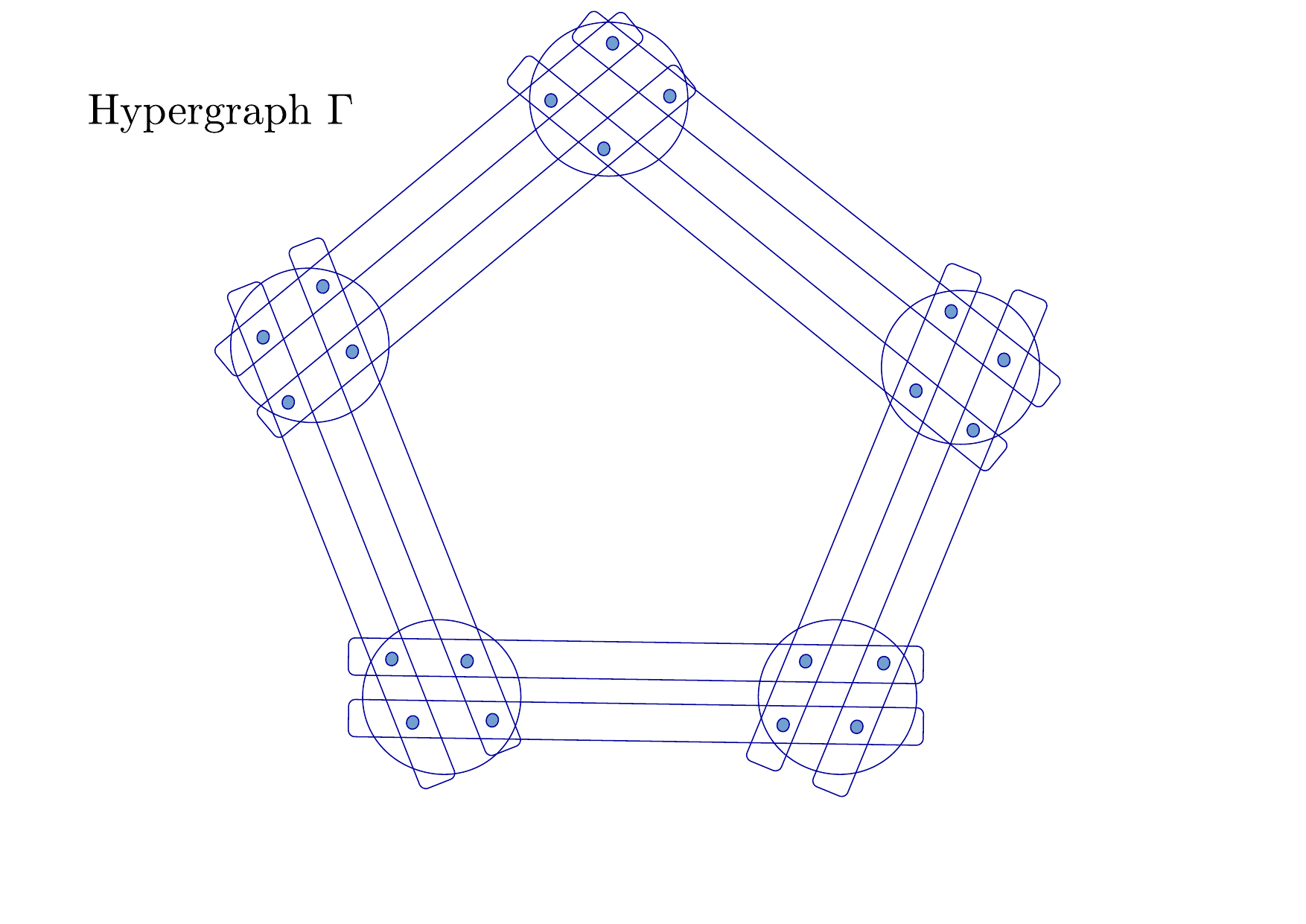}
	\caption{The KCBS scenario with 4-outcome joint measurements, visualized as a hypergraph $\Gamma$ \cite{KCBS, CSW, KunjSpek17}.}
	\label{fig2}
\end{figure}

\begin{figure}
	\centering
	\includegraphics[scale=0.3]{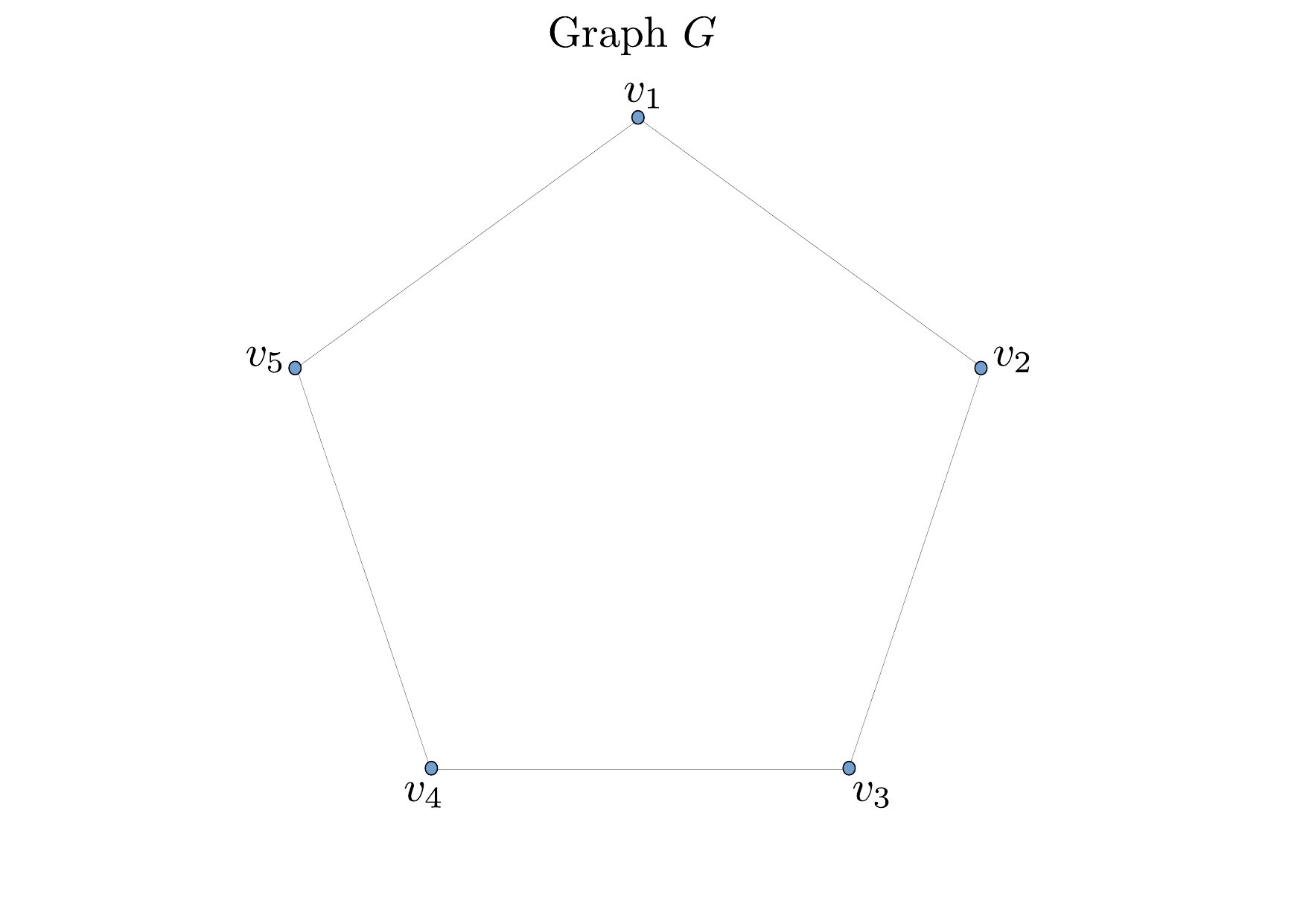}
	\caption{A subgraph of KCBS hypergraph $\Gamma$, representing orthogonality relations of the events of interest in the KCBS inequality \cite{KCBS,CSW}.}
	\label{fig3}
\end{figure}

Below, we make some remarks clarifying the scope of the framework described above before we move to the case of sources.

\subsubsection{Classification of probabilistic models} 
We classify the probabilistic models on a hypergraph $\Gamma$ as follows:
\begin{itemize}
	\item KS-noncontextual probabilistic models, $\mathcal{C}(\Gamma)$: a probabilistic model which is a convex combination of deterministic assignments $p:V(\Gamma)\rightarrow\{0,1\}$, where $\sum_{v\in e}p(v)=1$ for all $e\in E(\Gamma)$. In Ref.~\cite{AFLS}, this is referred to as a ``classical model".\footnote{We use a different term because we are advocating a revision of the notion of classicality from KS-noncontextuality to generalized  noncontextuality \`a la Spekkens.}
	
	Note that we call $\Gamma$ {\em KS-colourable} if $\mathcal{C}(\Gamma)\neq\varnothing$ and we call it {\em KS-uncolourable} if $\mathcal{C}(\Gamma)=\varnothing$. Our terminology here is inspired by the traditional usage of the term ``Kochen-Specker colouring" to refer to an assignment of two colours to vectors satisfying some orthogonality relations under the colouring constraints of the KS theorem \cite{SEPKS}.
	
	\item Consistent exclusivity satisfying probabilistic models, $\mathcal{CE}^1(\Gamma)$: a probabilistic model on $\Gamma$,
	$p:V(\Gamma)\rightarrow [0,1]$, such that (in addition to satisfying the definition of a {\em probabilistic model}), $\sum_{v\in c}p(v)\leq 1$ for all cliques $c$ in the orthogonality graph $O(\Gamma)$. This is the same as the set of $E1$ probabilistic models of Ref.~\cite{CSW}.
	
	Note that a {\em clique} in the orthogonality graph $O(\Gamma)$ is a set of vertices that are pairwise exclusive (i.e., every vertex in this set shares an edge with every other vertex). 
	
	\item General probabilistic models, $\mathcal{G}(\Gamma)$: Any $p$ that satisfies the definition of a probabilistic model is a general probabilistic model, i.e., it can arise from measurements in some general probabilistic theory \cite{hardy5axioms} that isn't necessarily quantum.
	
	The set of all probabilistic models $\mathcal{G}(\Gamma)$ (for any $\Gamma$) forms a polytope since it is defined by just the positivity and normalization constraints on the probabilities. The extremal points (or vertices) of this polytope fall into two categories that will interest us: deterministic and indeterministic. The deterministic extremal points are the $p:V(\Gamma)\rightarrow\{0,1\}$ such that $\sum_{v\in e}p(v)=1$ for all $e\in E(\Gamma)$ and we denote the set of these points by $\mathcal{G}(\Gamma)|_{\rm det}$. The indeterministic extremal points are the $p\in\mathcal{G}(\Gamma)$ which are not deterministic and which, furthermore, cannot be expressed as a convex mixture of other points in $\mathcal{G}(\Gamma)$. We denote the set of indeterministic extremal points by $\mathcal{G}(\Gamma)|_{\rm ind}$. Clearly, $\mathcal{G}(\Gamma)|_{\rm det}\subsetneq \mathcal{C}(\Gamma)$ and $\mathcal{G}(\Gamma)|_{\rm ind}\subseteq\mathcal{G}(\Gamma)\backslash\mathcal{C}(\Gamma)$.	
\end{itemize}
Overall, we have
\begin{equation}
\mathcal{C}(\Gamma)\subseteq\mathcal{CE}^1(\Gamma)\subseteq\mathcal{G}(\Gamma)
\end{equation}
for any hypergraph $\Gamma$.

\subsubsection{Distinguishing two consequences of Specker's principle: Structural Specker's principle vs.~Statistical Specker's principle}\label{Speck1}
The CSW framework \cite{CSW} restricts the scope of probabilistic models on a hypergraph to those satisfying consistent exclusivity (the $E1$ probabilistic models), motivated by what is sometimes called {\em Specker's principle} \cite{speckerprinciple}: that is, 
\begin{quote}
``if you have several questions and you can answer any two of them, then you can also answer all of them"	
\end{quote}
 If by ``questions" we understand {\em measurement settings}, then the principle says that a set of pairwise jointly implementable measurement settings is itself jointly implementable. Note that when we say a set of {\em measurement settings} is ``jointly implementable", ``jointly measurable", or ``compatible", we mean that there exists another choice of a single measurement setting in the theory such that this measurement setting can reproduce the statistics of all the measurement settings in the set by coarse-graining.\footnote{The reader may recall from Section \ref{compatibilitysection} the general definition of compatibility. Also, see Ref.~\cite{notesonjm} for an overview of joint measurability in quantum theory.} As such, in its application to measurement settings, Specker's principle is a constraint on the measurements allowed in a physical theory that respects it, e.g., measurement settings that correspond to PVMs (projection valued measures) in quantum theory. This is, for example, the reading adopted in Ref.~\cite{almostquantumsharp}, where the failure of Specker's principle in any almost quantum theory was demonstrated. On the other hand, we will often also refer to the ``joint measurability" of a set of {\em measurement events}, by which we mean that this set of measurement events is a subset of the set of measurement outcomes for some choice of measurement setting. At the level of {\em measurement events},\footnote{Recall that a {\em measurement event} is a measurement outcome given a choice of measurement setting, e.g., a projector that appears in a particular PVM in quantum theory.} then, there are two distinct ways to read Specker's principle that one needs to keep in mind which we distinguish as {\em structural Specker's principle} vs.~{\em statistical Specker's principle}. We define these two readings below:
\begin{itemize}
	\item {\em Structural Specker's principle} imposes a {\em structural} constraint on a contextuality scenario $\Gamma$.
	
	This (strong) reading of Specker's principle applies to any set of measurement events, say $\frak{M}\subseteq V(\Gamma)$, where every pair of measurement events can arise as outcomes of a single measurement: that is, for each pair $\{v,v'\}\subseteq\frak{M}$, there exists some $e\in E(\Gamma)$ such that $\{v,v'\}\subseteq e$. The principle then states: 
	
	{\em Given a set $\frak{M}$ of pairwise jointly measurable measurement events in some contextuality scenario $\Gamma$, all the measurement events in $\frak{M}$ are jointly measurable, i.e., all the measurement events in the set can arise as outcomes of a single measurement: $\frak{M}\subseteq e$ for some $e\in E(\Gamma)$.}
	
	Alternatively, the constraint of structural Specker's principle can be restated as: 
	
	{\em Every clique in the orthogonality graph of $\Gamma$, $O(\Gamma)$, is a subset of some hyperedge in $\Gamma$.}
	
	Note that we haven't said anything directly about {\em probabilities} here: any $\Gamma$ satisfying the above property is said to satisfy structural Specker's principle. 
			
	\item {\em Statistical Specker's principle (or consistent exclusivity)} imposes a {\em statistical} constraint on probabilistic models on any contextuality scenario $\Gamma$ representing measurement events in an operational theory. 
	
	This (weak) reading of Specker's principle imposes an additional constraint on a probabilistic model $p\in\mathcal{G}(\Gamma)$ (thus defining $\mathcal{CE}^1(\Gamma)\subseteq\mathcal{G}(\Gamma)$), namely: 
	
	{\em Given a set $\frak{M}$ of pairwise jointly measurable measurement events, $p$ satisfies 
	$\sum_{v\in\frak{M}}p(v)\leq 1$.}
	
	This can also be expressed as:
	
	{\em A probabilistic model $p\in\mathcal{G}(\Gamma)$ is said to satisfy statistical Specker's principle if the sum of probabilities it assigns to the vertices of every clique in the orthogonality graph of $\Gamma$, $O(\Gamma)$, does not exceed 1, i.e., $\sum_{v\in c}p(v)\leq 1$ for all cliques $c$ in $O(\Gamma)$.}
	
	All probabilistic models that satisfy this constraint define the set of probabilistic models $\mathcal{CE}^1(\Gamma)$ (or $E1$) for any contextuality scenario $\Gamma$ {\em regardless} of whether $\Gamma$ satisfies structural Specker's principle. Clearly, $\mathcal{CE}^1(\Gamma)\subseteq \mathcal{G}(\Gamma)$.
	
	Any probabilistic model $p$ on $\Gamma$ such that $p\in\mathcal{CE}^1(\Gamma)$ is said to satisfy statistical Specker's principle or, equivalently, {\em consistent exclusivity} \cite{AFLS}.
\end{itemize}

Probabilistic models on any hypergraph $\Gamma$ which satisfies the (strong) structural Specker's principle obviously satisfy the (weak) statistical Specker's principle. This holds simply on account of the structure of such $\Gamma$: that is, for all $\Gamma$ satisfying structural Specker's principle, we have $\mathcal{CE}^1(\Gamma)=\mathcal{G}(\Gamma)$. To see this, note that every clique $c$ in $O(\Gamma)$ is a subset of some hyperedge in $\Gamma$, hence for every clique $c$, $\sum_{v\in c}p(v)\leq 1$ for all $p\in\mathcal{G}(\Gamma)$, i.e., $p\in\mathcal{CE}^1(\Gamma)$.\footnote{This partially answers the open Problem 7.2.3 of Ref.~\cite{AFLS}.} On the other hand, it remains an {\em open question} whether the converse is true:

{\em That is, given that $\mathcal{CE}^1(\Gamma)=\mathcal{G}(\Gamma)$ for some $\Gamma$, is it the case that $\Gamma$ must then necessarily satisfy structural Specker's principle, namely, that every clique in $O(\Gamma)$ is a subset of some hyperedge in $\Gamma$?}

A positive answer to this question would answer Problem 7.2.3 of Ref.~\cite{AFLS} asking for a characterization of $\Gamma$ for which $\mathcal{CE}^1(\Gamma)=\mathcal{G}(\Gamma)$.

\subsubsection{What does it mean for an operational theory to satisfy structural/statistical Specker's principle?}\label{Speck2}

We have so far defined structural Specker's principle as a constraint on $\Gamma$ and statistical Specker's principle as a constraint on a probabilistic model on any $\Gamma$. Any operational theory would typically allow many possible $\Gamma$ to be realized by its measurement events as well as many possible probabilistic models to be realized on any $\Gamma$ representing its measurement events. Note that when we say that a particular $\Gamma$ is ``realizable" or ``allowed" by an operational theory, we mean that there exist measurement events in the operational theory that satisfy the operational equivalences required by $\Gamma$.\footnote{Realizability of a particular $\Gamma$ in an operational theory depends on the (in)compatibility relations that the operational theory allows between its measurements (cf.~Section \ref{compatibilitysection}). Recall that incompatibility of measurements is {\em necessary} for KS-contextuality to be witnessed and the structure of $\Gamma$ depends on this incompatibility.} Further, given such a $\Gamma$, the realizability of a probabilistic model on it by the operational theory means that there exists a source event in the operational theory that assigns probabilities to the measurement events in $\Gamma$ according to the probabilistic model. It will be useful for our discussion to define what it means for an operational theory, say $\mathbb{T}$, to  satisfy structural or statistical Specker's principle. But before we do that, let us formally specify what it means for $\mathbb{T}$ to satisfy Specker's principle:

{\bf $\mathbb{T}$ satisfies Specker's principle}: {\em An operational theory $\mathbb{T}$ is said to satisfy Specker's principle if, for any set of measurement settings in $\mathbb{T}$ that are pairwise jointly implementable, it follows that they are all jointly implementable in $\mathbb{T}$.}\footnote{Recall from Section \ref{compatibilitysection} the definition of joint implementability (or joint measurability) of some set of measurement settings.} 

We denote by $\mathcal{T}(\Gamma)$ the set of probabilistic models achievable on $\Gamma$ by an operational theory $\mathbb{T}$, i.e., for any $p\in\mathcal{T}(\Gamma)$, we have that $\forall v\in V(\Gamma): p(v)=p(v|S,s)$ for some source event $[s|S]$ possible in the operational theory $\mathbb{T}$.\footnote{Note that if the operational theory does not admit measurement events (represented by vertices) exhibiting the operational equivalences represented by $\Gamma$ (that is, $\mathbb{T}$ does not allow $\Gamma$), then we have that $\mathcal{T}(\Gamma)$ is an empty set.} Since an operational theory can only put further constraints on probabilistic models in $\mathcal{G}(\Gamma)$, we obviously have: $\mathcal{T}(\Gamma)\subseteq\mathcal{G}(\Gamma)$.

\begin{enumerate}
	\item 
{\bf $\mathbb{T}$ satisfies statistical Specker's principle:} {\em We say an operational theory $\mathbb{T}$ satisfies statistical Specker's principle if $\mathcal{T}(\Gamma)\subseteq\mathcal{CE}^1(\Gamma)\subseteq\mathcal{G}(\Gamma)$ for all $\Gamma$.}\footnote{That is, instead of considering only a particular probabilistic model on a particular $\Gamma$, we now consider the satisfaction of statistical Specker's principle by a whole set of probabilistic models, namely, $\mathcal{T}(\Gamma)$, for all $\Gamma$.}

Since the satisfaction of statistical Specker's principle is a constraint on the statistical predictions of $\mathbb{T}$, there must be some {\em fact} about the structure of theory $\mathbb{T}$ that leads to this constraint. This fact  enforcing statistical Specker's principle could be some restriction arising from the structure of allowed measurement events and/or even the structure of allowed preparations in the operational theory $\mathbb{T}$. For instance, this is the case for quantum theory when one only considers projective measurements implemented on an arbitrary quantum state, i.e., $\mathcal{Q}(\Gamma)\subseteq\mathcal{CE}^1(\Gamma)\subseteq\mathcal{G}(\Gamma)$, where $\mathcal{Q}(\Gamma)$ denotes the set of probabilistic models that can be obtained in this way. More generally, one could relax the no-restriction hypothesis \cite{CDP} in some particular way in $\mathbb{T}$ so that not all probabilistic models in $\mathcal{G}(\Gamma)$ are allowed in $\mathcal{T}(\Gamma)$. In the case of quantum theory, restricting attention to only projective measurements (as we just pointed out) rather than the more general case allowing arbitrary POVMs is one way of restricting the set of possible probabilistic models realizable with quantum states and measurements to a strict subset of $\mathcal{G}(\Gamma)$. Allowing arbitrary POVMs would lead to a violation of statistical Specker's principle by probabilistic models arising from quantum theory.\footnote{See Appendices \ref{allegedstrawman} (specifically \ref{trivialpovms}) and \ref{trivial} for other consequences of allowing arbitrary POVMs, in particular the trivial `classical' ones.}

Let us now define what it means for an operational theory $\mathbb{T}$ to satisfy structural Specker's principle.

\item {\bf $\mathbb{T}$ satisfies structural Specker's principle:} {\em An operational theory $\mathbb{T}$ is said to satisfy structural Specker's principle if for any set of measurement events that are pairwise jointly measurable, i.e, measurement events in each pair arise as outcomes of some measurement in the theory, it is the case that all the measurement events in the set are jointly measurable, i.e., all the measurement events in the set arise as outcomes of a single measurement in the theory.}
\end{enumerate}

We now show that a theory $\mathbb{T}$ that satisfies Specker's principle also satisfies structural Specker's principle.

\begin{theorem}\label{SPimpliesStrSP}
	If an operational theory $\mathbb{T}$ satisfies Specker's principle, then it also satisfies structural Specker's principle.
\end{theorem}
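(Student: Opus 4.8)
The plan is to translate the conclusion, which is about measurement \emph{events}, into a statement about measurement \emph{settings}, to which the hypothesis applies directly. So suppose $\{v_k\}_{k\in K}$ is a set of measurement events of $\mathbb{T}$ that is pairwise jointly measurable: for each pair $k\neq l$ there is a measurement setting of $\mathbb{T}$ among whose outcomes both $v_k$ and $v_l$ occur. The goal is to exhibit a single measurement setting of $\mathbb{T}$ among whose outcomes all the $v_k$ occur.

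First I would, for each $k$, pick some measurement having $v_k$ as an outcome and coarse-grain it to the binary measurement $\widehat M_k$ with outcomes $v_k$ and $\neg v_k$, where $\neg v_k$ is the disjunction of all the remaining outcomes. Since $\mathscr{M}$ is closed under coarse-graining, $\widehat M_k$ is a bona fide measurement setting of $\mathbb{T}$, and a short argument shows it is independent, up to operational equivalence, of the measurement one started from. Next I would verify that $\{\widehat M_k\}_{k\in K}$ is pairwise jointly implementable: for $k\neq l$, coarse-grain the common measurement witnessing the exclusivity of $v_k$ and $v_l$ to the (at most) ternary measurement $N_{kl}$ with outcomes $v_k$, $v_l$, and $\neg(v_k\vee v_l)$; then $\widehat M_k$ and $\widehat M_l$ are both coarse-grainings of $N_{kl}$, so $N_{kl}$ jointly implements the pair. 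At this point the hypothesis kicks in: $\mathbb{T}$ satisfies Specker's principle and $\{\widehat M_k\}_{k\in K}$ is pairwise jointly implementable, so there is a single measurement $M^{\ast}$ of $\mathbb{T}$ of which every $\widehat M_k$ is a coarse-graining.

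Writing $A_k\subseteq V_{M^{\ast}}$ for the set of $M^{\ast}$-outcomes relabelled to $v_k$ under $M^{\ast}\to\widehat M_k$, the event $v_k$ is the disjunction of $\{[m|M^{\ast}]:m\in A_k\}$. If the sets $A_k$ were pairwise disjoint I would be done: the coarse-graining of $M^{\ast}$ whose outcomes are the $v_k$ together with the single remainder outcome $\bigvee_{m\notin\bigcup_k A_k}[m|M^{\ast}]$ is a measurement setting of $\mathbb{T}$ having all the $v_k$ among its outcomes, which is exactly structural Specker's principle.

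The step I expect to be the real obstacle is establishing that the $A_k$ may be taken pairwise disjoint, i.e., that no outcome of $M^{\ast}$ simultaneously forces two exclusive events $v_k$ and $v_l$. Mere pairwise exclusivity does not suffice here: it only yields $\sum_{v\in c}p(v)\le 1$ along the edge $\{v_k,v_l\}$ of $O(\Gamma)$, which bounds the probability of such a ``spurious'' outcome by $1/2$ rather than forcing it to vanish. The route I would take is to argue that any such outcome is in fact a null event for every source event, using that $v_k$ and $v_l$ genuinely occur as \emph{distinct} outcomes of $N_{kl}$, together with a further application of Specker's principle to an enlarged family of measurements refining both $M^{\ast}$ and the relevant $N_{kl}$; outcomes shown to be null can then be absorbed into the remainder outcome. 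Making this airtight — in particular arranging the enlarged family to be genuinely pairwise jointly implementable without circularly presupposing the $|K|$-fold joint measurability we are after — is where the care is needed, and an induction on $|K|$ that feeds partial joint measurements back into the argument may be the cleanest way to close the gap.
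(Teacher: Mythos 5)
Your construction is, step for step, the one the paper uses: the binary measurement $\widehat M_k$ per event is the paper's $M_v$ (with $[1|M_v]$ the occurrence of $v$ and $[0|M_v]$ the coarse-graining of the rest of any hyperedge containing $v$); your ternary witness $N_{kl}$ of pairwise joint implementability is the paper's $M_{vv'}$ with events $[10|M_{vv'}],[01|M_{vv'}],[00|M_{vv'}]$ and with $[11|M_{vv'}]$ null; and the hypothesis is invoked exactly as you do, at the level of the settings $\{M_v\}_{v\in V_{\rm 2JM}}$, to produce a global joint measurement. So up to the final step you have reproduced the paper's argument.

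The step you single out as the real obstacle --- that the relabelling sets $A_k$ may be taken pairwise disjoint, equivalently that every outcome $[\vec{b}\,|M_{V_{\rm 2JM}}]$ with two or more $1$'s is null --- is precisely the point at which the paper's proof is least explicit: it simply declares that ``all the other measurement events of $M_{V_{\rm 2JM}}$ are null events that never occur, i.e., they are assigned probability zero by every source event,'' and then identifies each $v_k$ with the single surviving outcome $[(0\cdots 1\cdots 0)|M_{V_{\rm 2JM}}]$. In other words, the paper takes the joint measurement to be \emph{constructed} with those co-occurrence outcomes absent, rather than deriving their nullity from the marginals of whatever joint measurement Specker's principle happens to supply. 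Your diagnosis is correct: pairwise exclusivity only yields $p(v_k)+p(v_l)\le 1$ for all source events, which bounds but does not annihilate the probability of an outcome of $M^{\ast}$ that is relabelled to $1$ under both $M^{\ast}\to\widehat M_k$ and $M^{\ast}\to\widehat M_l$, and the paper's own Appendix C (on the multiplicity of joint POVMs for a fixed compatible pair) shows that a pairwise marginal of an arbitrary joint measurement need not coincide with the ``exclusive'' joint measurement $N_{kl}$ built from the witnessing hyperedge. So if one insists on working with an arbitrary joint measurement guaranteed by Specker's principle, the gap you leave open is genuine, and the repair you sketch (refining $M^{\ast}$ against the $N_{kl}$ by a further application of Specker's principle, with an induction on $|K|$) is not carried out in the paper either; the paper's proof is shorter only because it builds the nullity into the definition of $M_{V_{\rm 2JM}}$.
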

\begin{proof}
	The argument here relies on the fact that the operational theory $\mathbb{T}$ is such that measurement settings can be coarse-grained to yield new measurement settings with fewer outcomes. Operationally, this just corresponds to binning some subsets of outcomes together in a measurement procedure. The operational theories we consider in this paper satisfy this property, as outlined in Section \ref{coarsegrainings} on coarse-graining.
	 
	The argument proceeds, for any $\Gamma$ realizable in $\mathbb{T}$, by constructing a set of binary-outcome measurement settings for any given set of pairwise jointly measurable vertices in $\Gamma$. These measurement settings are, by construction, pairwise jointly measurable, so Specker's principle applied to them implies that they are all jointly measurable. This in turn means that the pairwise jointly measurable vertices in the given set are also all realizable as outcomes of a single measurement setting. Hence, the theory $\mathbb{T}$ satisfies structural Specker's principle. We detail the argument below.
	
	Consider a contextuality scenario $\Gamma$ realizable in $\mathbb{T}$. To each vertex $v\in V(\Gamma)$, we can associate a measurement setting $M_v$ with two possible outcomes labelled $\{0,1\}$ such that $[1|M_v]$ denotes the occurrence of $v$ and $[0|M_v]$ denotes the non-occurrence of $v$, i.e., $p(v|S,s)=p(1|M_v,S,s)$ and $1-p(v|S,s)=p(0|M_v,S,s)$ for any probabilistic model on $\Gamma$ induced by some source event $[s|S]$. The measurement setting $M_v$ can be obtained in various (operationally equivalent) ways from the hyperedges that $v\in V(\Gamma)$ appears in: for each hyperedge $e\in E(\Gamma)$ such that $v\in e$, we have that the binary-outcome measurement setting consisting of the vertices $\{v, e\backslash v\}$ --- where $e\backslash v$ denotes a coarse-graining over all the measurement outcomes of $e$ except $v$ ---  is operationally equivalent to $M_v$.
	
	Now, for any pair of vertices $\{v,v'\}$ that appear in a common hyperedge of $\Gamma$, consider the two corresponding measurement settings $\{M_v, M_{v'}\}$ such that they are jointly measurable and their outcomes are mutually exclusive. The measurement events that can possibly occur in their joint measurement, denoted $M_{vv'}$, are $[10|M_{vv'}], [01|M_{vv'}]$ and $[00|M_{vv'}]$. The probability of $[11|M_{vv'}]$ is always zero, reflecting the fact that $v$ and $v'$ are mutually exclusive. Here, the coarse-graining relations are: $[1|M_v]\equiv [10|M_{vv'}], [1|M_{v'}]\equiv [01|M_{vv'}], [0|M_v]\equiv [00|M_{vv'}]+[01|M_{vv'}],[0|M_{v'}]\equiv [00|M_{vv'}]+[10|M_{vv'}]$.
	
	The joint measurement $M_{vv'}$ can be constructed from any hyperedge that $v$ and $v'$ appear in: for any $e\in E(\Gamma)$ such that $\{v,v'\}\subseteq e$, we have that $[10|M_{vv'}]$ is a measurement event corresponding to $v$,\footnote{Recall that every vertex $v\in V(\Gamma)$ is an {\em equivalence class} of measurement events $[v|e]\simeq [v|e']$ for all $e,e'$ such that $v\in e$ and $v\in e'$.} $[01|M_{vv'}]$ corresponds to $v'$, $[00|M_{vv'}]$ corresponds to $e\backslash\{v,v'\}$ (the coarse-graining of all measurement outcomes in $e$ except $v$ and $v'$), and $[11|M_{vv'}]$ denotes the null event $\varnothing\subseteq e$. This means $p(10|M_{vv'},S,s)+p(01|M_{vv'},S,s)+p(00|M_{vv'},S,s)=p(v|S,s)+p(v'|S,s)+p(e\backslash\{v,v'\}|S,s)=1$ and $p(11|M_{vv'},S,s)=0$ for any probabilistic model (induced by some source event $[s|S]$) on $\Gamma$.
	
	Consider now any set of vertices in $\Gamma$ that is pairwise jointly measurable, denoted $V_{\rm 2JM}\subseteq V(\Gamma)$. We need to show that any such set of vertices $V_{\rm 2JM}$ is jointly measurable, i.e., the theory $\mathbb{T}$ realizing $\Gamma$ admits a single measurement such that all the vertices in $V_{\rm 2JM}$ arise as outcomes of this measurement. 
	
	Now, the two-outcome measurement settings $\{M_{v}|v\in V_{\rm 2JM}\}$ we have defined are pairwise jointly measurable and as such, following Specker's principle, they should all be jointly measurable in theory $\mathbb{T}$. The joint measurement corresponding to them can be defined as
	\begin{equation}
	M_{V_{\rm 2JM}}\equiv\{[\vec{b}|M_{V_{\rm 2JM}}]\big|
	\vec{b}
	\in \{0,1\}^{V_{\rm 2JM}}\},
	\end{equation}
	where each event $[\vec{b}|M_{V_{\rm 2JM}}]$ in the joint measurement $M_{V_{\rm 2JM}}$ represents a particular set of outcomes for measurements in the set $\{M_v|v\in V_{\rm 2JM}\}$.
	
	Denoting $V_{\rm 2JM}\equiv\{v_1,v_2,\dots, v_{|V_{\rm 2JM}|}\}$, we have that 
	\begin{align}
	&[(10\dots0)|M_{V_{\rm 2JM}}]\equiv [1|M_{v_1}],\nonumber\\
	&[(01\dots0)|M_{V_{\rm 2JM}}]\equiv [1|M_{v_2}],\nonumber\\
	&\vdots\nonumber\\ 
	&[(00\dots1)|M_{V_{\rm 2JM}}]\equiv [1|M_{v_{|V_{\rm 2JM}|}}],\nonumber\\
	&[(00\dots0)|M_{V_{\rm 2JM}}]\nonumber\\
	&\equiv [0|M_{v_1}]+[0|M_{v_2}]+\dots+[0|M_{v_{|V_{\rm 2JM}|}}],
	\end{align}
	where $[0|M_{v_1}]+[0|M_{v_2}]+\dots+[0|M_{v_{|V_{\rm 2JM}|}}]$ denotes the measurement event obtained by coarse-graining the  measurement events in $\{[0|M_v]|v\in|V_{\rm 2JM}|\}$. All the other measurement events of $M_{V_{\rm 2JM}}$ are null events that never occur, i.e., they are assigned probability zero by every source event. 
	Thus, using Specker's principle applied to the binary-outcome measurement settings defined for the vertices in $V_{\rm 2JM}$, we have that the pairwise jointly measurable vertices in $V_{\rm 2JM}$ are all jointly measurable, appearing as outcomes of a single measurement $M_{V_{\rm 2JM}}$. 
	
\end{proof}

Having established Theorem \ref{SPimpliesStrSP}, we now proceed to show that a theory which satisfies structural Specker's principle also satisfies statistical Specker's principle.  To do this, we consider a contextuality scenario $\Gamma$ which may not satisfy structural Specker's principle and from it construct a contextuality scenario $\Gamma'$ which does satisfy the principle. The construction proceeds as follows:

\begin{enumerate}
	\item Construct $O(\Gamma)$.
	
	\item Turn each clique in $O(\Gamma)$ that {\em is} a hyperedge in $\Gamma$ to  a hyperedge in a new hypergraph $\Gamma'$. That is, $\Gamma'$ is such that $V(\Gamma)\subseteq V(\Gamma')$ and $E(\Gamma)\subseteq E(\Gamma')$. 
	
	\item Turn each maximal clique $c$ in $O(\Gamma)$ that is {\em not} a hyperedge in $\Gamma$ to a hyperedge in $\Gamma'$ and include an additional vertex $v_c$ in this hyperedge. Here, a {\em maximal clique} in a graph is a clique that is not a strict subset of another clique, i.e., there is no vertex outside the clique that shares an edge with each vertex in the clique.
	
	We then have for the hyperedges of $\Gamma'$, 
	\begin{equation}
	E(\Gamma')=E(\Gamma)\cup\{c\cup \{v_c\}\}_{c\in C},
	\end{equation}
	where $C$ is the set of maximal cliques in $O(\Gamma)$ that are not hyperedges in $\Gamma$.
	
	Note that as long as a theory $\mathbb{T}$ satisfies structural Specker's principle, converting maximal cliques in $O(\Gamma)$ that are not hyperedges in $\Gamma$ to hyperedges in $\Gamma'$ is a valid move within the theory since the resulting hyperedge would indeed constitute a valid measurement in the theory.
	
	If $C=\varnothing$ (i.e., $\Gamma$ satisfies structural Specker's principle), then we just have $E(\Gamma')=E(\Gamma)$.

	\item The resulting contextuality scenario $\Gamma'$ is thus given by: $V(\Gamma')=V(\Gamma)\cup \{v_c\}_{c\in C}$ and $E(\Gamma')=E(\Gamma)\cup\{c\cup \{v_c\}\}_{c\in C}$.
	
	If $C=\varnothing$ we just have $V(\Gamma')=V(\Gamma)$ and $E(\Gamma')=E(\Gamma)$ so that $\Gamma'=\Gamma$ (i.e., the two hypergraphs are isomorphic).
	
\end{enumerate}

Our construction of $\Gamma'$ leads to the following properties:
\begin{itemize}
	\item $\Gamma'$ satisfies structural Specker's principle (by construction) since every clique in $O(\Gamma')$ is a subset of some hyperedge in $\Gamma'$. Hence, it's also the case that statistical Specker's principle holds for probabilistic models on $\Gamma'$ as $\mathcal{CE}^1(\Gamma')=\mathcal{G}(\Gamma')$.

Note that the construction of $\Gamma'$ relied on the fact that the theory we are considering satisfies structural Specker's principle. If the theory doesn't satisfy this principle, but one goes ahead with the construction of $\Gamma'$, then the new hyperedges in $\Gamma'$ may not constitute valid measurements in the theory.

\item Probabilistic models in $\mathcal{G}(\Gamma')$ are in bijective  correspondence with probabilistic models in $\mathcal{CE}^1(\Gamma)$: for any probabilistic model $p_{\Gamma}\in \mathcal{CE}^1(\Gamma)$, there exists a unique probabilistic model $p_{\Gamma'}\equiv f(p_{\Gamma})\in\mathcal{G}(\Gamma')$, where the function $f$ is given by $p_{\Gamma'}(v)\equiv f(p_{\Gamma})(v)=p_{\Gamma}(v)$ for all $v\in V(\Gamma)$ and $p_{\Gamma'}(v_c)\equiv f(p_{\Gamma})(v_c)=1-\sum_{v\in c}p_{\Gamma}(v)$ for all $c\in C$.\footnote{Recall that $\{v_c\}_{c\in C}=V(\Gamma')\backslash V(\Gamma)$.} Similarly, for any $p_{\Gamma'}\in\mathcal{G}(\Gamma')$, there exists a unique probabilistic model $p_{\Gamma}\equiv g(p_{\Gamma'})\in\mathcal{CE}^1(\Gamma)$ given by $p_{\Gamma}(v)\equiv g(p_{\Gamma'})(v)=p_{\Gamma'}(v)$ for all $v\in V(\Gamma)$, i.e., we simply ignore the probabilities assigned to the vertices $v_c\in V(\Gamma')\backslash V(\Gamma)$ which do not appear in $\Gamma$. Now note that the functions $f$ and $g$ are inverses of each other: $g(f(p_{\Gamma}))=g(p_{\Gamma'})=p_{\Gamma}$ and $f(g(p_{\Gamma'}))=f(p_{\Gamma})=p_{\Gamma'}$. Hence, there is a bijective correspondence between $\mathcal{G}(\Gamma')$ and $\mathcal{CE}^1(\Gamma)$.

\item Hence, the set of probabilistic models on $\Gamma$ that satisfy statistical Specker's principle, i.e., $\mathcal{CE}^1(\Gamma)$, are in one-to-one correspondence with the set of probabilistic models on $\Gamma'$ which (by construction) satisfies structural Specker's principle so that $\mathcal{CE}^1(\Gamma')=\mathcal{G}(\Gamma')$. 

We therefore have that $\mathcal{CE}^1(\Gamma)=\mathcal{CE}^1(\Gamma')|_{V(\Gamma)}$, where $\mathcal{CE}^1(\Gamma')|_{V(\Gamma)}$ denotes the probabilistic models induced on $\Gamma$ by those on $\Gamma'$ (ignoring the probabilities assigned to vertices in $V(\Gamma')\backslash V(\Gamma)$).

\end{itemize}

It is conceivable that a particular $\Gamma$ may not admit probabilistic models from an operational theory $\mathbb{T}$, i.e., $\mathcal{T}(\Gamma)=\varnothing$. On the other hand, if $\Gamma$ admits a representation in terms of measurement events admissible in $\mathbb{T}$, so that $\mathcal{T}(\Gamma)\neq\varnothing$, then two possibilities arise: $\Gamma$ satisfies structural Specker's principle or it doesn't. If $\Gamma$ satisfies structural Specker's principle then any probabilistic model in $\mathcal{T}(\Gamma)$ will satisfy statistical Specker's principle and we have $\Gamma'=\Gamma$. If $\Gamma$ does not satisfy structural Specker's principle, we consider its relation with the contextuality scenario $\Gamma'$ constructed from it that does satisfy structural Specker's principle. Such a $\Gamma'$ admits a representation in a theory $\mathbb{T}$ satisfying structural Specker's principle (that is, $\mathcal{T}(\Gamma')\neq\varnothing$) as long as $\Gamma$ admits such a representation (that is, $\mathcal{T}(\Gamma)\neq\varnothing$). Indeed, it's the satisfaction of structural Specker's principle in $\mathbb{T}$ that renders the construction of $\Gamma'$ from $\Gamma$ physically allowed in $\mathbb{T}$.

Thus, in a theory $\mathbb{T}$ that satisfies structural Specker's principle, the following holds: for every probabilistic model $p_{\Gamma}\in\mathcal{T}(\Gamma)$  $(\subseteq\mathcal{CE}^1(\Gamma))$, $\Gamma'$ admits a corresponding probabilistic model $p_{\Gamma'}\in\mathcal{T}(\Gamma')$ satisfying $p_{\Gamma'}(v)=p_{\Gamma}(v)$ for all $v\in V(\Gamma)$ and $p_{\Gamma'}(v_c)=1-\sum_{v\in c}p_{\Gamma}(v)$ for all $c\in C$, where $C$ is the set of maximal cliques in $O(\Gamma)$ such that none of them is a hyperedge in $\Gamma$. Similarly, given $p_{\Gamma'}\in\mathcal{T}(\Gamma')$  $(\subseteq\mathcal{CE}^1(\Gamma'))$, $p_{\Gamma}\in\mathcal{T}(\Gamma)$ is uniquely fixed: it's obtained by just neglecting the probabilities assigned by $p_{\Gamma'}$ to the vertices in $V(\Gamma')\backslash V(\Gamma)$.

We must therefore have $\mathcal{T}(\Gamma)=\mathcal{T}(\Gamma')\big|_{V(\Gamma)}$ for any $\Gamma$, where $\mathcal{T}(\Gamma')\big|_{V(\Gamma)}$ denotes the set of probabilistic models induced on $\Gamma$ by the set of probabilistic models in $\mathcal{T}(\Gamma')$ under the correspondence we have already established above.
We can now state and prove the following theorem:
\begin{theorem}\label{strimpliesstat}
If an operational theory $\mathbb{T}$ satisfies structural Specker's principle, then it also satisfies statistical Specker's principle.
\end{theorem}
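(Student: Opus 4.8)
The plan is to leverage the construction of $\Gamma'$ from $\Gamma$ that was developed just before the theorem statement, together with the bijective correspondence between $\mathcal{CE}^1(\Gamma)$ and $\mathcal{G}(\Gamma')$ and the already-established identity $\mathcal{CE}^1(\Gamma')=\mathcal{G}(\Gamma')$. First I would fix an arbitrary contextuality scenario $\Gamma$ and an arbitrary probabilistic model $p_{\Gamma}\in\mathcal{T}(\Gamma)$; the goal is to show $p_{\Gamma}\in\mathcal{CE}^1(\Gamma)$, i.e.\ that $\sum_{v\in c}p_{\Gamma}(v)\leq 1$ for every clique $c$ in $O(\Gamma)$. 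If $\mathcal{T}(\Gamma)=\varnothing$ there is nothing to prove, so one assumes $\Gamma$ is realizable in $\mathbb{T}$.

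Next I would invoke the key observation that, because $\mathbb{T}$ satisfies structural Specker's principle, each maximal clique $c$ in $O(\Gamma)$ that is not already a hyperedge of $\Gamma$ corresponds to a genuine measurement in $\mathbb{T}$, whose outcomes are the vertices of $c$ together with the ``remainder'' outcome $v_c$. This is exactly what makes the passage from $\Gamma$ to $\Gamma'$ physically admissible in $\mathbb{T}$: the scenario $\Gamma'$ is itself realizable in $\mathbb{T}$, so $\mathcal{T}(\Gamma')\neq\varnothing$, and moreover the source event inducing $p_{\Gamma}$ on the measurement events of $\Gamma$ also induces a probabilistic model $p_{\Gamma'}$ on $\Gamma'$ with $p_{\Gamma'}(v)=p_{\Gamma}(v)$ for all $v\in V(\Gamma)$ and $p_{\Gamma'}(v_c)=1-\sum_{v\in c}p_{\Gamma}(v)$ for each new vertex $v_c$. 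Hence $p_{\Gamma'}\in\mathcal{T}(\Gamma')$; this is precisely the content of the relation $\mathcal{T}(\Gamma)=\mathcal{T}(\Gamma')\big|_{V(\Gamma)}$ noted just above the theorem.

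The argument then closes quickly: since $\Gamma'$ satisfies structural Specker's principle by construction, we have $\mathcal{CE}^1(\Gamma')=\mathcal{G}(\Gamma')$, and since $\mathcal{T}(\Gamma')\subseteq\mathcal{G}(\Gamma')$ holds for any theory, we conclude $p_{\Gamma'}\in\mathcal{CE}^1(\Gamma')$. Translating back through the bijection $g$ of the preceding discussion — which simply forgets the probabilities assigned to $\{v_c\}_{c\in C}$ — yields $p_{\Gamma}=g(p_{\Gamma'})\in\mathcal{CE}^1(\Gamma)$. Since $\Gamma$ and $p_{\Gamma}\in\mathcal{T}(\Gamma)$ were arbitrary, this establishes $\mathcal{T}(\Gamma)\subseteq\mathcal{CE}^1(\Gamma)\subseteq\mathcal{G}(\Gamma)$ for all $\Gamma$, which is exactly the statement that $\mathbb{T}$ satisfies statistical Specker's principle.

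The main obstacle is conceptual rather than computational: one must justify carefully that the new hyperedges of $\Gamma'$ genuinely are measurements available in $\mathbb{T}$, so that $p_{\Gamma'}$ lies in $\mathcal{T}(\Gamma')$ and not merely in $\mathcal{G}(\Gamma')$ — this is the one place where structural Specker's principle (and nothing weaker) is indispensable. A secondary point worth spelling out is that it suffices to verify the clique inequality on \emph{maximal} cliques of $O(\Gamma)$, since every clique is contained in a maximal one and the assigned probabilities are nonnegative; this is implicitly what the construction of $\Gamma'$ exploits by only adjoining vertices $v_c$ for maximal cliques $c$.
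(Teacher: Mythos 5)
Your proposal is correct and follows essentially the same route as the paper's proof: handle the empty case trivially, pass from $\Gamma$ to the structurally-Specker $\Gamma'$ (which is realizable in $\mathbb{T}$ precisely because $\mathbb{T}$ satisfies structural Specker's principle, giving $\mathcal{T}(\Gamma)=\mathcal{T}(\Gamma')\big|_{V(\Gamma)}$), use $\mathcal{T}(\Gamma')\subseteq\mathcal{G}(\Gamma')=\mathcal{CE}^1(\Gamma')$, and pull back through the bijection to conclude $\mathcal{T}(\Gamma)\subseteq\mathcal{CE}^1(\Gamma)$. Your added remark that it suffices to check the clique inequality on maximal cliques is a correct and worthwhile clarification, but not a substantive departure.
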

\begin{proof}
For any $\Gamma$ that does not admit a probabilistic model in $\mathbb{T}$, i.e., $\mathcal{T}(\Gamma)=\varnothing$, statistical Specker's principle is trivially satisfied since $\mathcal{T}(\Gamma)=\varnothing\subseteq\mathcal{CE}^1(\Gamma)\subseteq\mathcal{G}(\Gamma)$.

For any $\Gamma$ that does admit a probabilistic model in $\mathbb{T}$, i.e., $\mathcal{T}(\Gamma)\neq\varnothing$, we can have one of two possibilities: either it satisfies structural Specker's principle, in which case $\mathcal{T}(\Gamma)\subseteq\mathcal{CE}^1(\Gamma)=\mathcal{G}(\Gamma)$, or it doesn't, in which case we consider the $\Gamma'$ constructed from it following the recipe we have already outlined so that we have: 

$\mathcal{T}(\Gamma')\big|_{V(\Gamma)}\subseteq\mathcal{G}(\Gamma')\big|_{V(\Gamma)}=\mathcal{CE}^1(\Gamma')\big|_{V(\Gamma)}=\mathcal{CE}^1(\Gamma)$.

Since $\mathbb{T}$ satisfies structural Specker's principle, we have $\mathcal{T}(\Gamma)=\mathcal{T}(\Gamma')\big|_{V(\Gamma)}$, which immediately implies that $\mathcal{T}(\Gamma)\subseteq \mathcal{CE}^1(\Gamma)$. That is, the theory $\mathbb{T}$ satisfies statistical Specker's principle on $\Gamma$: $\mathcal{T}(\Gamma)\subseteq \mathcal{CE}^1(\Gamma)\subseteq\mathcal{G}(\Gamma)$.

Overall, we have the desired result: $\mathbb{T}$ satisfies structural Specker's principle $\Rightarrow$ $\mathcal{T}(\Gamma)\subseteq \mathcal{CE}^1(\Gamma)\subseteq\mathcal{G}(\Gamma)$ for all $\Gamma$, i.e., $\mathbb{T}$ satisfies statistical Specker's principle.
\end{proof}

Thus, {\em one} way of enforcing that a particular operational theory $\mathbb{T}$ satisfies statistical Specker's principle --- that is, $\mathcal{T}(\Gamma)\subseteq\mathcal{CE}^1(\Gamma)\subseteq\mathcal{G}(\Gamma)$ for all $\Gamma$ --- is to require that it satisfies structural Specker's principle, a constraint on the structure of measurement events in $\mathbb{T}$. This is, for example, what is achieved in Ref.~\cite{giuliosharp} by invoking a notion of ``sharpness" for measurement events in an operational theory such that any set of sharp measurement events that are pairwise jointly measurable are all jointly measurable. That is, structural Specker's principle is satisfied in a theory with such sharp measurement events and, consequently, statistical Specker's principle, or what is more conventionally called {\em consistent exclusivity} \cite{AFLS}, is also satisfied. But it's conceivable that there may be {\em other} ways to ensure that only a subset of $\mathcal{CE}^1(\Gamma)$ probabilistic models are allowed in $\mathcal{T}(\Gamma)$ for any $\Gamma$. What we wish to emphasize here is that it is by no means obvious (or at least, it needs to be proven) that the {\em only} way to restrict the set of probabilistic models $\mathcal{T}(\Gamma)$ to a subset of $\mathcal{CE}^1(\Gamma)$ for any $\Gamma$ is to require that the theory $\mathbb{T}$ satisfy structural Specker's principle.\footnote{Indeed, any putative theory yielding the set of almost quantum correlations (which satisfy statistical Specker's principle) \cite{almostquantum} {\em cannot} satisfy Specker's principle --- that pairwise joint implementable measurement {\em settings} are all jointly implementable --- for any notion of sharp measurements \cite{almostquantumsharp}. Whether structural Specker's principle, which is defined at the level of measurement {\em events}, can be upheld for an almost quantum theory --- so that it falls in the category of operational theories with sharp measurements envisaged in Ref.~\cite{giuliosharp} --- remains an open question.}

\begin{corollary}\label{SPstrSPCE}
	For any operational theory $\mathbb{T}$, the following implications hold:
	
	\begin{align}
	&\mathbb{T} \textrm{ satisfies Specker's principle}\nonumber\\
	\Rightarrow&\mathbb{T} \textrm{ satisfies structural Specker's principle}\\
	\Rightarrow& \mathbb{T} \textrm{ satisfies statistical Specker's principle,}\nonumber\\
	&\textrm{i.e., consistent exclusivity}.
	\end{align}
	
\end{corollary}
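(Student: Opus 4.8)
The plan is to derive the corollary purely as the composition of the two implications already proved in Theorems \ref{SPimpliesStrSP} and \ref{strimpliesstat}, applied to one fixed operational theory $\mathbb{T}$; no new construction is needed, so the argument will be short.

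First I would invoke Theorem \ref{SPimpliesStrSP}: assuming that $\mathbb{T}$ satisfies Specker's principle --- every set of pairwise jointly implementable measurement \emph{settings} in $\mathbb{T}$ is jointly implementable --- it follows that $\mathbb{T}$ satisfies structural Specker's principle, i.e., every set of pairwise jointly measurable measurement \emph{events} in $\mathbb{T}$ arises as the set of outcomes of a single measurement in $\mathbb{T}$. This establishes the first arrow of the displayed chain verbatim. Next I would take this conclusion as the hypothesis of Theorem \ref{strimpliesstat}, which then yields that $\mathbb{T}$ satisfies statistical Specker's principle, i.e., $\mathcal{T}(\Gamma)\subseteq\mathcal{CE}^1(\Gamma)\subseteq\mathcal{G}(\Gamma)$ for every contextuality scenario $\Gamma$ --- equivalently, consistent exclusivity holds for all probabilistic models realizable in $\mathbb{T}$. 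Chaining the two implications gives exactly the two arrows asserted by the corollary.

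The only point requiring care --- more bookkeeping than obstacle --- is that all three properties in the statement are predicated of the \emph{same} theory $\mathbb{T}$: Specker's principle and structural Specker's principle are constraints on $\mathbb{T}$'s measurement structure, while statistical Specker's principle is a constraint on the whole family $\{\mathcal{T}(\Gamma)\}_\Gamma$ of sets of probabilistic models realizable in $\mathbb{T}$. Because the output of Theorem \ref{SPimpliesStrSP} (``$\mathbb{T}$ satisfies structural Specker's principle'') is precisely the input demanded by Theorem \ref{strimpliesstat}, the composition is gap-free. I therefore expect no genuine difficulty here: all the mathematical content resides in the two theorems, and the corollary is simply their concatenation.
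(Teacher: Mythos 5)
Your proposal is correct and matches the paper's own proof, which likewise obtains the corollary by simply combining Theorems \ref{SPimpliesStrSP} and \ref{strimpliesstat}. Your additional remark about all three properties being predicated of the same theory $\mathbb{T}$ is a fair piece of bookkeeping but adds nothing beyond what the paper's one-line concatenation already assumes.
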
	
\begin{proof}
	This follows from combining Theorems \ref{SPimpliesStrSP} and \ref{strimpliesstat}.
\end{proof}

Note that statistical Specker's principle (or consistent exclusivity) is so intrinsic to the CSW approach \cite{CSW} that they do not consider probabilistic models that do not satisfy this principle.\footnote{As we have already noted, a noise-robust noncontextuality inequality of the type in Ref.~\cite{KunjSpek} that is based on a logical proof of the KS theorem is not even obtainable if one restricted attention to probabilistic models satisfying ${\rm CE}^1$. The upper bound on that inequality comes from a probabilistic model that {\em does not} satisfy ${\rm CE}^1$.} This will become important when we consider the fact that nonprojective measurements in quantum theory {\em do not} satisfy Specker's principle, structural or statistical (at the level of measurement {\em events}), and thus also fail to satisfy the stronger statement of Specker's principle for measurement {\em settings} (cf.~Ref.~\cite{almostquantumsharp}). Indeed, such measurements admit contextuality scenarios 
$\Gamma$ that are not possible with projective measurements, such as the one from three binary-outcome POVMs that are pairwise jointly measurable but not triplewise so \cite{LSW, KG, KHF}, and the probabilistic models they give rise to can only be accommodated in the most general set of probabilistic models, $\mathcal{G}(\Gamma)$, since trivial POVMs can realize any probabilistic model. Specker's principle, structural Specker's principle, and statistical Specker's principle were all motivated by the fact that projective measurements in quantum theory satisfy them. In particular, consistent exclusivity (or statistical Specker's principle) would be obeyed in any theory where measurement events satisfy structural Specker's principle, and indeed, the more recent approach \cite{cabellotalk} is to restrict attention to ``sharp" measurements in such theories \cite{giuliosharp,giuliosharp2}, where the definition of ``sharp" ensures the property of pairwise jointly measurable events being globally jointly measurable. This property forms the motivational basis (and is {\em sufficient}) for statistical Specker's principle to hold (cf.~Theorem \ref{strimpliesstat}). That is, this approach \cite{giuliosharp, cabellotalk} regards statistical Specker's principle as grounded in (and physically justified by) structural Specker's principle. Theorem \ref{strimpliesstat} is a precise statement of this intuition in the hypergraph formalism \`a la AFLS \cite{AFLS}. The work of Refs.~\cite{giuliosharp, giuliosharp2} can be understood as bridging the gap between structural Specker's principle and statistical Specker's principle by formally defining a notion of sharp measurements in an operational theory such that structural Specker's principle holds for these sharp measurements.

On the other hand, and this is the key point for our purposes, if one wants to make no commitment about the representation of measurements in the operational theory (in particular, not requiring a notion of ``sharpness"), then Specker's principle is not a natural constraint to impose on probabilistic models and, indeed, one must deal with the full set of probabilistic models $\mathcal{G}(\Gamma)$ on any contextuality scenario $\Gamma$ rather than restrict oneself to the set of probabilistic models $\mathcal{CE}^1(\Gamma)$. It is for this reason that we are translating the notions from CSW \cite{CSW} to the notational conventions of AFLS \cite{AFLS}, the latter being a more natural choice for our purposes, allowing the language needed to articulate the difference between $\mathcal{CE}^1(\Gamma)$ and $\mathcal{G}(\Gamma)$ rather than excluding the latter by fiat or, perhaps, by an appeal to structural Specker's principle holding for sharp measurements in the landscape of operational theories under consideration (cf.~Theorem \ref{strimpliesstat}). It is for all these reasons that the ``exclusivity principle" \`a la CSW \cite{CSW} is not enough to make sense of Spekkens contextuality applied to Kochen-Specker type scenarios. The framework we propose in this paper addresses this gap between the notions Spekkens contextuality (which applies to arbitrary measurements) requires in a hypergraph framework and those that the CSW framework \cite{CSW} (which applies to ``sharp" measurements) can provide in its graph-theoretic formulation.

\subsubsection{Remark on the classification of probabilistic models: why we haven't defined ``quantum models" as those obtained from projective measurements}
The reader may note that we haven't tried to define any notion of a ``quantum model" so far, having only adopted the definitions of Ref.~\cite{AFLS} for KS-noncontextual models ($\mathcal{C}(\Gamma)$), for models satisfying consistent exclusivity ($\mathcal{CE}^1(\Gamma)$), and for general probabilistic models ($\mathcal{G}(\Gamma)$). The reason for this is that we do not wish to restrict ourselves to projective measurements in defining a ``quantum model", unlike the traditional Kochen-Specker approaches \cite{CSW, AFLS}. In Ref.~\cite{AFLS}, a {\em quantum model} is defined as a probabilistic model that can be realized in the following manner: assign projectors $\{\Pi_v\}_{v\in V(\Gamma)}$ (defined on any Hilbert space) to all the vertices of $\Gamma$ such that $\sum_{v\in e}\Pi_v=\mathbb{I}$ for all $e\in E(\Gamma)$, and we have $p(v)=\Tr(\rho\Pi_v)$, for some density operator $\rho$ on the Hilbert space, $\mathbb{I}$ being the identity operator.

On the other hand, allowing arbitrary positive operator-valued measures (POVMs) in a definition of a quantum model (as we would rather prefer) means that, in fact, quantum models on a hypergraph $\Gamma$ are as general as the general probabilistic models $\mathcal{G}(\Gamma)$, rendering such a definition {\em redundant}. This can be seen by noting that for any probabilistic model $p\in\mathcal{G}(\Gamma)$, one can associate positive operators to the vertices of $\Gamma$ given by $p(v)\mathbb{I}$ such that for any quantum state $\rho$ on some Hilbert space, we have $p(v)=\Tr(\rho p(v)\mathbb{I})$, where $\mathbb{I}$ is the identity operator.

Our focus in this paper is not on quantum theory, in particular, even though the need to be able to handle noisy measurements and preparations (particularly, trivial POVMs) in quantum theory can be taken as a motivation for this work. Rather, our focus is on delineating the boundary between operational theories that admit noncontextual ontological models (for Kochen-Specker type experiments, suitably augmented with multiple preparation procedures, as outlined in this paper) and those that don't by obtaining noise-robust noncontextuality inequalities. In particular, we want these inequalities to indicate the noise thresholds beyond which an experiment cannot rule out the existence of a noncontextual ontological model with respect to the quantities of interest. This also means that making sense of quantum correlations in this approach requires one to pay attention not only to the measurements involved in an experiment but {\em also} the preparations; indeed, this shift of focus from measurements alone, to include multiple preparations (or source settings), is a fundamental conceptual difference between our approach and that of traditional Kochen-Specker contextuality frameworks \cite{CSW,AFLS,sheaf}.

\subsubsection{Scope of this framework}
Note that whenever we refer to the ``CSW framework", we mean the framework of Ref.~\cite{CSW}, which often differs from the framework of Ref.~\cite{CSW10} in some respects, e.g., the normalization of probabilities in a given hyperedge, assumed in \cite{CSW}, but not in \cite{CSW10}. In Ref.~\cite{CSW10}, the authors write: 
\begin{quotation}
	Notice that in all of the above we never require that any particular context should be associated to a complete measurement: the conditions only make sure that each context is a subset of outcomes of a measurement and that they are mutually exclusive.
	Thus, unlike the original KS theorem, it is clear that every context hypergraph $\Gamma$ has always a classical noncontextual model, besides possibly quantum and generalized models.
\end{quotation}
On the other hand, in Ref.~\cite{CSW}, they write: 
\begin{quotation}
	The fact that the sum of probabilities of outcomes of a test is 1 can be used to express these correlations as a positive linear combination of probabilities of events, $S=\sum_iw_iP(e_i)$, with $w_i>0$.
\end{quotation}
The latter presentation \cite{CSW} is more in line with the ``original KS theorem" \cite{KochenSpecker}, as well as the presentation in Ref.~\cite{AFLS}. Since normalization of probabilities is thus presumed in Ref.~\cite{CSW}, in keeping with the definition of a probabilistic model we have presented (following \cite{AFLS}), the graph invariants of CSW \cite{CSW} refer, specifically, to subgraphs $G$ of {\em those} hypergraphs $\Gamma$  on which the set of KS-noncontextual probabilistic models is non-empty. In particular, our generalization of the CSW framework \cite{CSW} in this paper says nothing about noise-robust noncontextuality inequalities from logical proofs of the Kochen-Specker theorem \cite{KochenSpecker}, which rely on hypergraphs $\Gamma$ that admit no KS-noncontextual probabilistic models, i.e., KS-uncolourable hypergraphs. It also says nothing for the hypergraphs $\Gamma$ that do not satisfy the property $\mathcal{CE}^1(\Gamma)=\mathcal{G}(\Gamma)$. An example of such a hypergraph, which is not covered by our generalization of the CSW framework on both counts, is the 18 ray hypergraph first presented in Ref.~\cite{CEGA}, denoted $\Gamma_{18}$ (see Fig.~\ref{18ray} and Appendix \ref{scopeviacega}).
\begin{figure}
	\centering
	\includegraphics[scale=0.3]{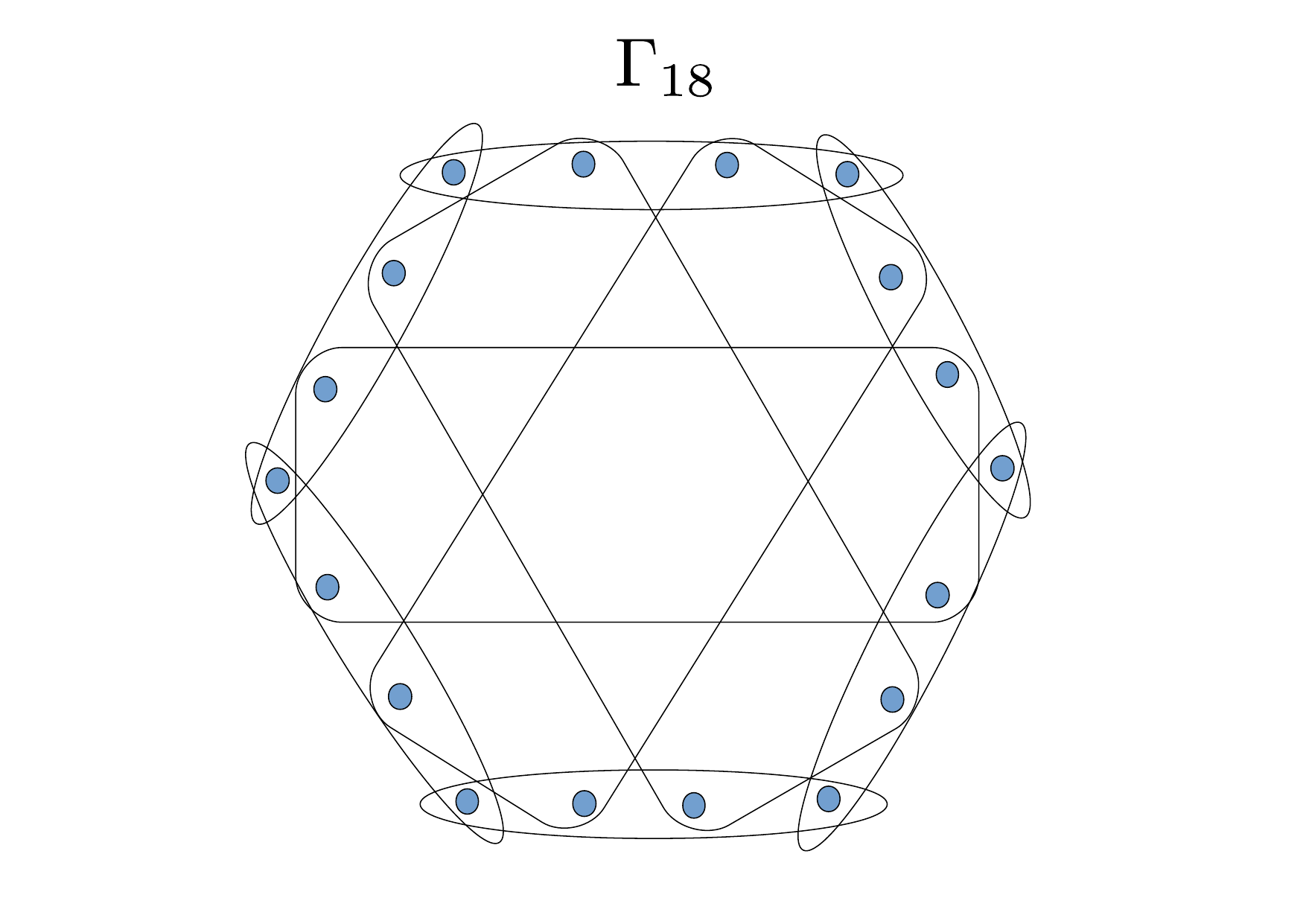}
	\caption{The KS-uncolourable hypergraph from Ref.~\cite{CEGA} that is not covered by our generalization of the CSW framework. We denote this hypergraph as $\Gamma_{18}$.}
	\label{18ray}
\end{figure}
  Indeed, the study of noise-robust noncontextuality inequalities from such KS-uncolourable hypergraphs was initiated in Ref.~\cite{KunjSpek}, and a more exhaustive hypergraph-theoretic treatment of it is presented in Ref.~\cite{kunjunc}. In this paper, we will restrict ourselves to KS-colourable hypergraphs, the study of which was initiated in Ref.~\cite{KunjSpek17}, and, of these, {\em only} those KS-colourable hypergraphs $\Gamma$ which satisfy $\mathcal{CE}^1(\Gamma)=\mathcal{G}(\Gamma)$. Note that this is not a limitation of our general approach, which is based on Ref.~\cite{KunjSpek17} and applies to {\em any} KS-colourable hypergraph, but rather a limitation we inherit from the CSW framework \cite{CSW}\footnote{Ref.~\cite{CSW} takes Specker's principle to be fundamental and identifies $\mathcal{CE}^1(\Gamma_{18})$ as the most general set of probabilistic models, which is {\em not} the case for $\Gamma_{18}$ (for example). See Appendix \ref{scopeviacega} for a detailed discussion of this point.} since we want to leverage their graph invariants in obtaining our noise-robust noncontextuality inequalities. The study of other KS-colourable hypergraphs, in particular those which arise {\em only} with nonprojective measurements in quantum theory \cite{LSW,KG,KHF} and are outside the scope of traditional frameworks \cite{CSW, AFLS, sheaf}, will be taken up in future work.
  
  To summarize, the measurement events hypergraphs $\Gamma$ where the present framework (and the CSW framework \cite{CSW}) applies {\em must} satisfy two properties: 
  $\mathcal{C}(\Gamma)\neq\varnothing$ (that is, {\em KS-colourability}) and $\mathcal{CE}^1(\Gamma)=\mathcal{G}(\Gamma)$.\footnote{As we have shown, when the operational theory $\mathbb{T}$ under consideration satisfies structural Specker's principle, we can always turn a hypergraph $\Gamma$ that doesn't satisfy structural Specker's principle into a hypergraph $\Gamma'$ that satisfies it and for which, therefore, $\mathcal{CE}^1(\Gamma')=\mathcal{G}(\Gamma')$ holds. This can be seen as justification for restricting oneself to probabilistic models satisfying consistent exclusivity in the CSW framework \cite{CSW}: such a restriction is not really a restriction if the theory satisfies structural Specker's principle. On the other hand, we restrict ourselves to hypergraphs for which $\mathcal{CE}^1(\Gamma)=\mathcal{G}(\Gamma)$ without assuming that $\mathbb{T}$ satisfies structural Specker's principle. The justification for this seemingly {\em ad hoc} restriction is simply that it is necessary in order to meaningfully leverage the graph invariants of CSW \cite{CSW} -- in particular, the fractional packing number -- in our noise-robust noncontextuality inequalities. This will become clear when we obtain our noise-robust noncontextuality inequalities.}

In the next subsection, we define additional notions necessary to obtain noise-robust noncontextuality inequalities that make use of graph invariants from the CSW framework. These notions correspond to {\em source events} that are an integral part of our framework.

\subsection{Sources}\label{sources}

Having introduced the (hyper)graph-theoretic elements that we need to talk about measurement events, we are now in a position to introduce features of source events that are relevant in the Spekkens framework. This part of our framework has no precedent in the literature on KS-noncontextuality, in particular the CSW framework \cite{CSW}. We introduce these source events in order to benchmark the measurement events against them, i.e., for every measurement event, we seek to identify in the operational theory a corresponding source event that makes this measurement event as likely as possible. This helps us deal with cases where a measurement device may be implementing very noisy measurements by explicitly accounting for this noise in our noise-robust noncontextuality inequalities. Further, while we do not assume outcome determinism (which is essential to KS-noncontextuality), we will invoke preparation noncontextuality with respect to these source events in the Spekkens framework \cite{Spe05}. As an example of what we mean by ``benchmarking" a measurement event against a  source event, consider the case of quantum theory, where any measurement event represented by a projector occurs with probability $1$ for any source event that is represented by an eigenstate of this projector; on the other hand, a positive operator that isn't projective cannot occur with a probability greater than its largest eigenvalue ($<1$) for any source event. We now proceed to describe the necessary hypergraph-theoretic ingredients we need to accommodate source events in our framework.

As we have argued previously, we require the measurement events hypergraph $\Gamma$ to be such that $\mathcal{C}(\Gamma)\neq\varnothing$ and $\mathcal{CE}^1(\Gamma)=\mathcal{G}(\Gamma)$ to be able to obtain noise-robust noncontextuality inequalities that use graph invariants from the CSW framework \cite{CSW}. Hence, we will restrict ourselves to experiments that realize the operational equivalences represented by this class of $\Gamma$. Now, in the CSW framework \cite{CSW}, every Bell-KS expression picks out a particular subgraph $G$ of the orthogonality graph $O(\Gamma)$ of the contextuality scenario $\Gamma$ of interest. This amounts to focussing on a restricted set of probabilities (for the vertices of $G$) rather than probabilities for all the measurement events (represented by vertices of $\Gamma$) in the experiment. Hence, the vertices of $G$ denote the measurement events of interest in a given Bell-KS expression and we have the following:
\begin{itemize}
	\item A general probabilistic model $p\in\mathcal{G}(\Gamma)$ will assign probabilities to vertices in $G$ such that: $p(v)\geq 0$ for all $v\in V(G)$ and $p(v)+p(v')\leq 1$ for every edge $\{v,v'\}\in E(G)$.
	
	\item A probabilistic model $p\in\mathcal{CE}^1(\Gamma)$ will assign probabilities to vertices in $G$ such that: $p(v)\geq 0$ for all $v\in V(G)$ and
	\begin{equation}
	\sum_{v\in c}p(v)\leq 1,
	\end{equation}
	for every clique $c\subseteq V(G)$.
	
	\item A probabilistic model $p\in\mathcal{C}(\Gamma)$ will assign probabilities to vertices in $G$ such that: $p(v)=\sum_k{\rm Pr}(k)p_k(v)$, where ${\rm Pr}(k)\geq0, \sum_k{\rm Pr}(k)=1$, and for each $k$, $p_k$ is a deterministic assignment $p_k(v)\in\{0,1\}$ for all $v\in V(G)$, and $p_k(v)+p_k(v')\leq 1$ for every edge $\{v,v'\}\in E(G)$.	
\end{itemize} 
Since $\Gamma$ is such that $\mathcal{CE}^1(\Gamma)=\mathcal{G}(\Gamma)$, the condition
\begin{equation}
\sum_{v\in c}p(v)\leq 1 \textrm{ for every clique }c\subseteq V(G)\nonumber
\end{equation}
on the probabilities assigned to vertices in $G$ is redundant. We now obtain a simplified hypergraph, $\Gamma_G$, from $G$ as follows: convert all maximal cliques in $G$ to hyperedges and add an extra (no-detection) vertex to each such hyperedge.\footnote{Physically, a ``no-detection" vertex denotes the case when none of the measurement events of interest (here, the events in $G$) for a given measurement setting occur.}

This $\Gamma_G$, for any $G$, will satisfy the property that $\mathcal{CE}^1(\Gamma_G)=\mathcal{G}(\Gamma_G)$ and any probabilistic model on $\Gamma$ assigning probabilities to measurement events in $G$ will correspond to a probabilistic model on $\Gamma_G$ which also assigns the same probabilities to measurement events in $G$. Formally: $V(\Gamma_G)\equiv V(G) \bigsqcup \{v_c|c \textrm{ is a maximal clique in } G\}$, and $E(\Gamma_G)\equiv\{c\sqcup\{v_c\}|c \textrm{ is a maximal clique in } G\}$, where $v_c$ is the extra no-detection vertex added to the hyperedge corresponding to maximal clique $c$ in $G$. 

We have the following probabilistic model on $\Gamma_G$, given a probabilistic model $p\in\mathcal{G}(\Gamma)$: the probabilities assigned to the vertices in $V(G)\subseteq V(\Gamma_G)$ are the same as specified by $p\in\mathcal{G}(\Gamma)$ and the probabilities assigned to the remaining vertices in $V(\Gamma_G)\backslash V(G)$ are given by $p(v_c)=1-\sum_{v\in c}p(v)$, for every maximal clique $c$ in $G$.
\begin{figure}
	\centering
	\includegraphics[scale=0.3]{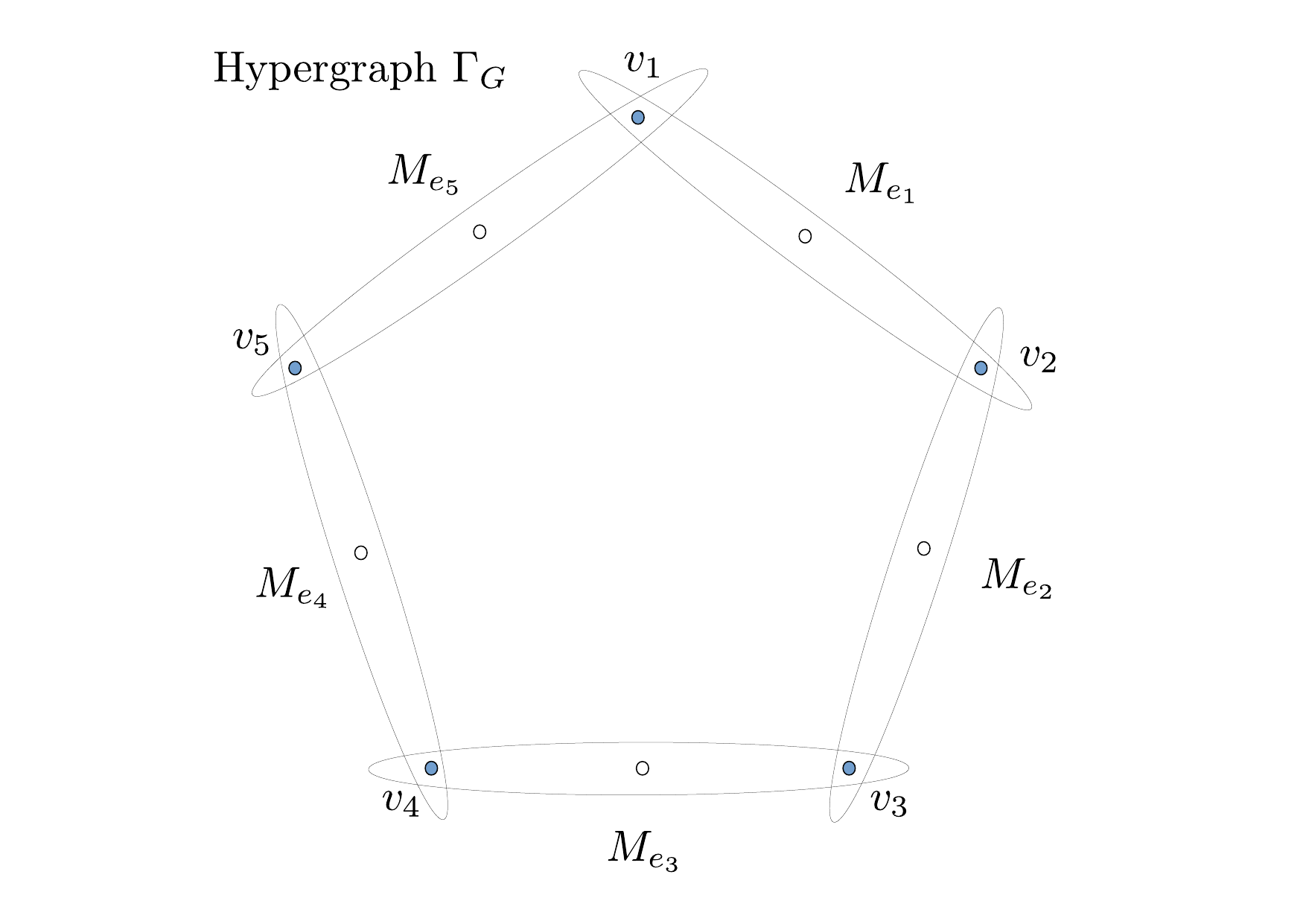}
	
	\caption{The hypergraph $\Gamma_G$ obtained from $G$ by adding a no-detection vertex (represented by a hollow circle) to every maximal clique in $G$.}
	\label{fig4}
\end{figure}
Consider, for example, the KCBS scenario \cite{KCBS, CSW, KunjSpek17}: the 20-vertex $\Gamma$ representing measurement events from five 4-outcome joint measurements (Fig.~\ref{fig2}), its 5 vertices $G$ involved in the KCBS inequality (Fig.~\ref{fig3}), and 10-vertex hypergraph $\Gamma_G$ constructed from $G$ (Fig.~\ref{fig4}).

Given $\Gamma_G$, constructed from $G$, we now require that the operational theory that realizes measurement events in $\Gamma_G$ also admits preparations that can be represented by a hypergraph $\Sigma_G$ of source events as follows: for every hyperedge $e\in E(\Gamma_G)$, corresponding to the choice of measurement setting $M_e$, we define a hyperedge $e\in E(\Sigma_G)$ denoting a corresponding choice of source setting $S_e$. And for every vertex $v\in e (\in E(\Gamma_G))$, we define a vertex $v_e\in e (\in E(\Sigma_G))$.\footnote{Recall from the discussion at the beginning of Section \ref{sources} that we seek to benchmark the measurement events against those source events in the operational theory that (ideally) make them as predictable as possible. The source setting against which the predictability of a particular measurement setting is tested -- that is the predictability of each measurement event (e.g., $v\in e (\in E(\Gamma_G))$) for this measurement setting (e.g., $M_e$) is benchmarked against some source event (e.g., $v_e\in e (\in E(\Sigma_G))$) for the source setting (e.g., $S_e$) --  is the ``corresponding choice of source setting $S_e$". In Section \ref{definingCorr} we will see how these pairs of source and measurement settings are used to compute an operational quantity relevant for our noise-robust noncontextuality inequalities.} Hence, every measurement event $[v|e]$ in $\Gamma_G$ corresponds to a vertex $v_e$ of $\Sigma_G$, and the number of such vertices in $V(\Sigma_G)$ is $|V(\Gamma_G)||E(\Gamma_G)|$. This means that the operational equivalences between the measurement events that are implicit in $\Gamma_G$ --- such as $[v|e]$ is operationally equivalent to $[v|e']$, where $e,e'\in E(\Gamma_G)$ are distinct hyperedges that share the vertex (representing an equivalence class of measurement events) $v\in V(\Gamma_G)$ --- are not carried over to the source events, where none is presumed to be operationally equivalent to any other, hence $v_e\in V(\Sigma_G)$ is a different vertex from $v_{e'}\in V(\Sigma_G)$. Here $v_e$ ($v_{e'}$) represents a source event $[s_e|S_e]$ ($[s_{e'}|S_{e'}]$), rather than an equivalence class of source events.

Besides these $|V(\Gamma_G)||E(\Gamma_G)|$ vertices in $V(\Sigma_G)$ and the associated hyperedges $e\in E(\Sigma_G)$, we require that the operational theory admits an additional hyperedge $e_*\in E(\Sigma_G)$, representing a source setting $S_{e_*}$, containing two new vertices $v^0_{e_*}, v^1_{e_*}\in V(\Sigma_G)$. Here $v^0_{e_*}$ represents the source event $[s_{e_*}=0|S_{e_*}]$ and $v^1_{{e_*}}$ represents the source event $[s_{e_*}=1|S_{e_*}]$. Hence, we have $|V(\Sigma_G)|=|V(\Gamma_G)||E(\Gamma_G)|+2$ and $|E(\Sigma_G)|=|E(\Gamma_G)|+1$.

The operational equivalence we {\em do} require for $\Sigma_G$ (in any operational theory that admits source events represented by $\Sigma_G$) applies to the source settings: all source settings, each represented by coarse-graining the source events in a hyperedge $e\in E(\Sigma_G)$, are operationally equivalent, i.e., $[\top|S_{e_{\top}}]\simeq [\top|S_{{e'}_{\top}}]$ for all $e,e'\in E(\Sigma_G)$, i.e., $\forall [m|M]: \sum_{s_e}p(m,s_e|M,S_e)=\sum_{s_{e'}}p(m,s_{e'}|M,S_{e'})$, for all $e,e'\in E(\Sigma_G)$.

An example of such a source events hypergraph was considered in Ref.~\cite{KunjSpek}, albeit without the additional source labelled by $e_*$ here \cite{KunjSpek17}. We illustrate it here in Fig.~\ref{source} for the KCBS scenario.

\begin{figure}
	\includegraphics[scale=0.3]{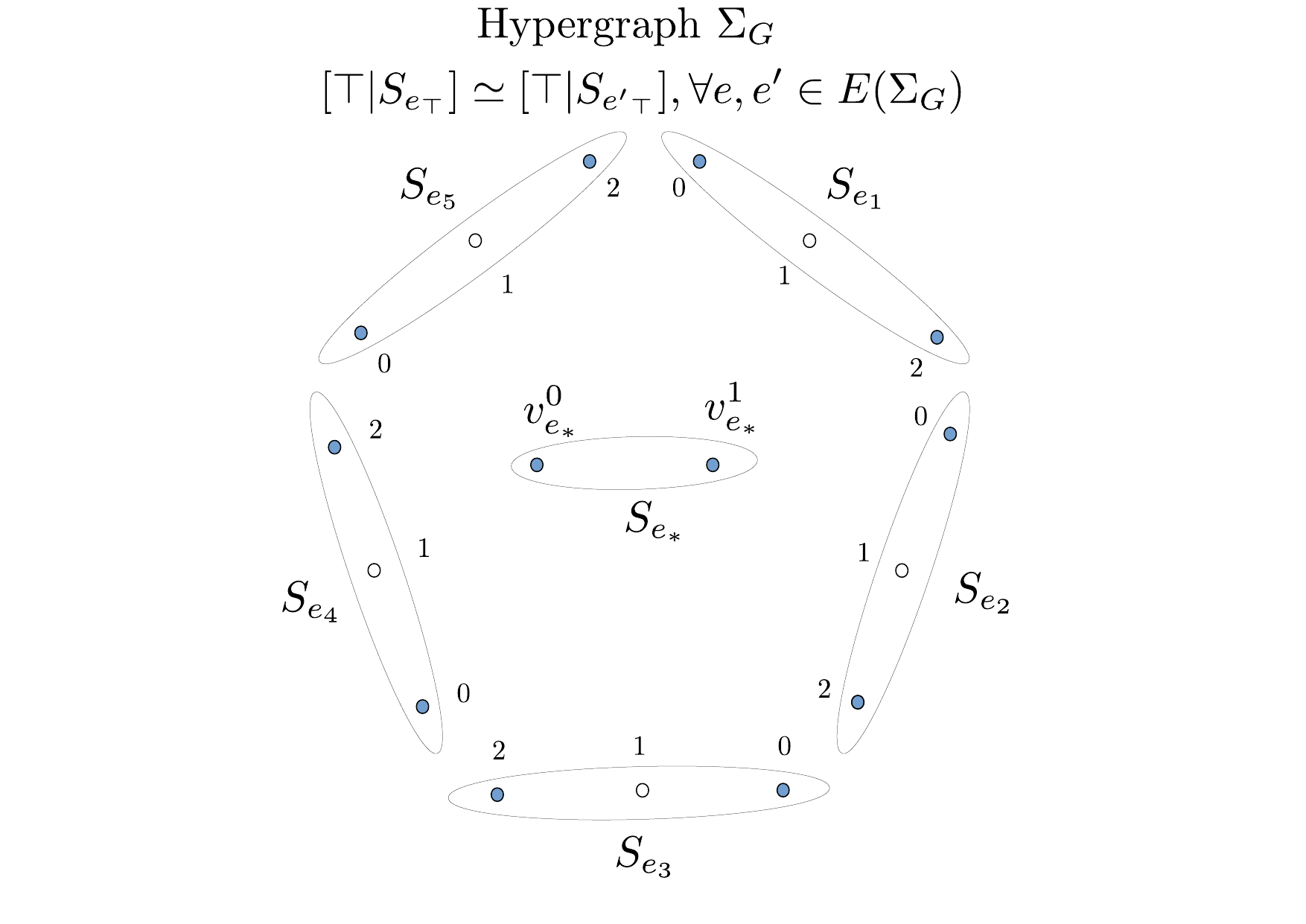}
	\caption{The source events hypergraph with the operational equivalences between the source settings separately specified.}
	\label{source}
\end{figure}

\section{A key hypergraph invariant: the weighted max-predictability}\label{betainvariant}
We now define a hypergraph invariant that will be relevant for our noise-robust noncontextuality inequalities:
\begin{equation}\label{betadefn}
\beta(\Gamma_G,q)\equiv \max_{p\in\mathcal{G}(\Gamma_G)|_{\rm ind}}\sum_{e\in E(\Gamma_G)}q_e\zeta(M_e,p),
\end{equation}
where $q_e\geq0$ for all $e\in E(\Gamma_G)$, $\sum_{e\in E(\Gamma_G)}q_e=1$, and
\begin{equation}
\zeta(M_e,p)\equiv\max_{v\in e} p(v)\nonumber
\end{equation}
is the maximum probability assigned to a vertex in $e\in E(\Gamma_G)$ by an extremal indeterministic probabilistic model $p\in\mathcal{G}(\Gamma_G)|_{\rm ind}$.\footnote{An extremal {\em indeterministic} probabilistic model refers to those extremal $p\in\mathcal{G}(\Gamma_G)$ for which $\zeta(M_e,p)<1$ for some $e\in E(\Gamma_G)$.}

We call $\beta(\Gamma_G,q)$ the {\em weighted max-predictability} of the measurement settings (i.e., hyperedges) in $\Gamma_G$, where the hyperedges $e\in E(\Gamma_G)$ are weighted according to the probability distribution $q\equiv\{q_e\}_{e\in E(\Gamma_G)}$.

We now outline how this quantity is related to properties of an operational theory $\mathbb{T}$ admitting a measurement noncontextual ontological model. $\Gamma_G$ represents a particular configuration of operational equivalences that a set of measurement events in $\mathbb{T}$ may realize. The probabilistic models on $\Gamma_G$ that can be realized by $\mathbb{T}$ are, as earlier, denoted by  $\mathcal{T}(\Gamma_G)$. Since $\mathbb{T}$ admits a measurement noncontextual ontological model,\footnote{This will always be the case for any operational theory we consider: the assumption of measurement noncontextuality on its own can always be satisfied by a trivial ontological model of the type we outlined in Section \ref{noncontextualitysection}. Indeed, quantum theory satisfies it, the Beltrametti-Bugajski model \cite{BB} that was discussed in Ref.~\cite{Spe05} being an example of a measurement noncontextual ontological model of quantum theory. It is only when this assumption is supplemented with something else -- outcome determinism in the case of KS-noncontextuality and preparation noncontextuality in the case of generalized noncontextuality \cite{Spe05} -- that it can produce a contradiction with the predictions of an operational theory.} its predictions for the specific case of $\Gamma_G$ can be reproduced by such a model. But since, in keeping with the CSW approach \cite{CSW}, we will look at witnesses of contextuality tailored to particular experiments ($\Gamma_G$ representing features of one such experiment), we do not need an ontological model for the full theory $\mathbb{T}$ to reproduce its predictions for a particular experiment. Indeed, to construct a measurement noncontextual ontological model for the set of probabilistic models $\mathcal{T}(\Gamma_G)$, it suffices to assume (without loss of generality) that the extremal probabilistic models on $\Gamma_G$ -- given by $\mathcal{G}(\Gamma_G)|_{\rm det}\sqcup\mathcal{G}(\Gamma_G)|_{\rm ind}$ -- are in bijective correspondence with the ontic states ($\Lambda$) of the physical system on which the measurements are carried out. This is because, firstly, {\em any} probabilistic model in $\mathcal{G}(\Gamma_G)$ can be expressed as a convex mixture of extremal probabilistic models in $\mathcal{G}(\Gamma_G)|_{\rm det}\sqcup\mathcal{G}(\Gamma_G)|_{\rm ind}$, and, secondly, associating each ontic state in the ontological model with an extremal probabilistic model\footnote{Representing response functions for the ontic state, i.e., $p(v)=\xi(v|\lambda), \forall v\in V(\Gamma_G)$} in $\mathcal{G}(\Gamma_G)|_{\rm det}\sqcup\mathcal{G}(\Gamma_G)|_{\rm ind}$ means that {\em any} probabilistic model in $\mathcal{G}(\Gamma_G)$ corresponding to predictions of an operational theory (in particular, any $p\in \mathcal{T}(\Gamma_G)\subseteq\mathcal{G}(\Gamma_G)$) can be obtained by an appropriate probability distribution over this set of ontic states. Denoting the set of ontic states corresponding to $\mathcal{G}(\Gamma_G)|_{\rm det}$ by $\Lambda_{\rm det}$ and the set of ontic states corresponding to $\mathcal{G}(\Gamma_G)|_{\rm ind}$ by $\Lambda_{\rm ind}$, we have that the measurement noncontextual ontological model given by $\Lambda\equiv\Lambda_{\rm det}\sqcup\Lambda_{\rm ind}$ reproduces the predictions $\mathcal{T}(\Gamma_G)$ of any operational theory $\mathbb{T}$ that admits a measurement noncontextual ontological model: that is, for every $p\in\mathcal{G}(\Gamma_G)$ (and therefore also $p\in\mathcal{T}(\Gamma_G)$), $$p(v)=\sum_{\lambda\in\Lambda}\xi(v|\lambda)\mu(\lambda)$$ for all $v\in V(\Gamma_G)$,  for some probability distribution $\mu:\Lambda\rightarrow [0,1]$ such that $\sum_{\lambda\in\Lambda}\mu(\lambda)=1$.\footnote{As a corollary, note that as long as the polytope $\mathcal{G}(\Gamma_G)$ has a finite number of extreme points, we can take the ontic state space to consist of a finite number of ontic states (as we have done) without any loss of generality. The hypergraphs $\Gamma_G$ we study -- representing the measurement events of interest in a contextuality experiment -- have this property because of their finiteness.} We can also then rewrite $\beta(\Gamma_G,q)$ as 
\begin{equation}\label{betaontl}
\beta(\Gamma_G,q)=\max_{\lambda\in\Lambda_{\rm ind}}\sum_{e\in E(\Gamma_G)}q_e\zeta(M_e,\lambda),
\end{equation}
where $\zeta(M_e,\lambda)\equiv\max_{m_e}\xi(m_e|M_e,\lambda)$.
\section{Noise-robust noncontextuality inequalities}\label{derivingineqs}
We will now proceed to obtain our noise-robust noncontextuality inequalities following the ideas outlined in Ref.~\cite{KunjSpek17}. 
\subsection{Key notions from CSW}
We first recall some key notions from the CSW framework \cite{CSW} before obtaining our inequalities.

Consider the positive linear combination of the probabilities of measurement events,
\begin{equation}
R([s|S])\equiv\sum_{v\in V(G)}w_v p(v|S,s),
\end{equation}
where $w_v>0$ for all $v\in V(G)$.

The fundamental result of CSW is that this quantity is bounded for different sets of correlations --- KS-noncontextual, those realizable by projective quantum measurements, and those satisfying consistent exclusivity --- by graph-theoretic invariants as follows:
\begin{equation}
\forall [s|S]: R([s|S])\overset{\rm KS}{\leq} \alpha(G,w)\overset{\rm Q}{\leq} \theta(G,w)\overset{{\rm CE}^1}{\leq} \alpha^*(G,w),
\end{equation}
where ${\rm KS}$ denotes operational theories that admit KS-noncontextual ontological models and thus realize probabilistic models on $\Gamma_G$ that fall in the set $\mathcal{C}(\Gamma_G)$, ${\rm Q}$ denotes quantum theory with projective measurements which assigns probabilistic models on $\Gamma_G$ denoted by $\mathcal{Q}(\Gamma_G)$, and ${\rm CE}^1$ denotes operational theories satisfying consistent exclusivity and thus realizing the set of probabilistic models $\mathcal{CE}^1(\Gamma_G)$ on $\Gamma_G$. The graph invariants of the weighted graph $(G,w)$, namely, $\alpha(G,w)$, $\theta(G,w)$, and $\alpha^*(G,w)$ are defined as follows:
\begin{enumerate}
	\item Independence number $\alpha(G,w)$:
	\begin{equation}
	\alpha(G,w)\equiv\max_{\mathcal{I}}\sum_{v\in\mathcal{I}}w_v,
	\end{equation}
	where $\mathcal{I}\subseteq V(G)$ is an {\em independent set} of vertices of $G$, i.e., a set of nonadjacent vertices of $G$, so that none of the vertices in this set shares an edge with any other vertex in the set.
	
	\item Lovasz theta number $\theta(G,w)$:
	\begin{equation}
	\theta(G,w)\equiv\max_{\{|u_v\rangle\}_{v\in V(G)},|\psi\rangle}\sum_{v\in V(G)}w_v|\langle\psi|u_v\rangle|^2,
	\end{equation}
	where $\{|u_v\rangle\}_{v\in V(G)}=\{|u_v\rangle\}_{v\in V(\bar{G})}$ (each $|u_v\rangle$ a unit vector in $\mathbb{R}^d$) is an {\em orthonormal representation} (OR) of the complement of $G$, namely, $\bar{G}$, and the unit vector $|\psi\rangle\in \mathbb{R}^d$ is called a {\em handle}.	
	
	Here $V(\bar{G})\equiv V(G)$ and $E(\bar{G})\equiv\{(v,v')|v,v'\in V(G), (v,v')\notin E(G)\}$, and we have in an orthonormal representation that $\langle u_{v''}|u_{v'''}\rangle=0$
	for all pairs of nonadjacent vertices, $(v'',v''')$, in $\bar{G}$, or equivalently, for all $(v'',v''')\in E(G)$.
	
	\item Fractional packing number $\alpha^*(G,w)$:
	\begin{equation}
	\alpha^*(G,w)\equiv\max_{\{p_v\}_{v\in V(G)}}\sum_{v\in V(G)}w_vp_v,
	\end{equation}
	where $\{p_v\}_{v\in V(G)}$ is such that $p_v\geq 0$ for all $v\in V(G)$ and $\sum_{v\in c}p_v\leq 1$ for all cliques $c$ in $G$.
	
\end{enumerate}

Note that since we are always considering $\Gamma_G$ such that $\mathcal{CE}^1(\Gamma_G)=\mathcal{G}(\Gamma_G)$, we, in fact, have the bounds

\begin{equation}
\forall [s|S]: R([s|S])\overset{\rm KS}{\leq} \alpha(G,w)\overset{\rm Q}{\leq} \theta(G,w)\overset{{\rm GPT}}{\leq} \alpha^*(G,w),
\end{equation}
where ``GPT" denotes the full set of probabilistic models on $\Gamma_G$, i.e., $\mathcal{G}(\Gamma_G)$.

In terms of the notation we have already introduced, where $R([s|S])\leq R_{\rm KS}$ was a Bell-KS inequality, we now have --- from CSW \cite{CSW} --- that $R_{\rm KS}=\alpha(G,w)$.
\subsection{Key notion {\em not} from CSW:\\source-measurement correlation, ${\rm Corr}$}\label{definingCorr}
We need to define a new quantity not in the CSW framework, namely,
\begin{equation}\label{eq:Corr}
{\rm Corr}\equiv\sum_{e\in E(\Gamma_G)}q_e\sum_{m_e,s_e}\delta_{m_e,s_e}p(m_e,s_e|M_e,S_e),
\end{equation}
where $\{q_e\}_{e\in E(\Gamma_G)}$ is a probability distribution, i.e., $q_e\geq0$ for all $e\in E(\Gamma_G)$ and $\sum_{e\in E(\Gamma_G)}q_e=1$, such that $\beta(\Gamma_G,q)<1$ holds.\footnote{\label{strongestpossible}Indeed, for the strongest possible constraint on ${\rm Corr}$, one must pick $q$ such that $\beta(\Gamma_G,q)$ is minimized.} In previous work \cite{KunjSpek,KunjSpek17}, we have taken $q$ to be the uniform distribution $q_e=\frac{1}{|E(\Gamma_G)|}$, but the derivation of the noncontextuality inequalities is independent of that choice (as we'll see here). Also, note that we have chosen the following labelling convention for outcomes of source setting $S_e$ (namely, $s_e$) and measurement setting $M_e$ (namely, $m_e$): the source outcomes $s_e$ for source setting $S_e$ take values in the same set as measurement outcomes $m_e$ for measurement setting $M_e$, i.e., $V_{S_e}=V_{M_e}$ (recalling notation from Section \ref{spekframework}). In particular, outcomes corresponding to the measurement event $[v|e]$ (representing $[m_e|M_e]$) and its corresponding source event $v_e$ (representing $[s_e|S_e]$) are both denoted by the same label, so that $m_e=s_e$ for them. An example of this from Figs.~\ref{fig4} and \ref{source} would be to, say, denote the outcomes of a particular $e\in E(\Gamma_G)$ (measurement setting $M_e$) by $m_e\in V_{M_e}\equiv\{0,1,2\}$ and corresponding outcomes of $e\in E(\Sigma_G)$ (source setting $S_e$) by $s_e\in V_{S_e}\equiv\{0,1,2\}$; so if $[v|e]$ denotes $[m_e=0|M_e]$, then $v_e$ will denote $[s_e=0|S_e]$, etc. 

\subsection{Obtaining the noise-robust noncontextuality inequalities}
\subsubsection{Expressing operational quantities in ontological terms}
We begin with expressing the operational quantities of interest in terms of a noncontextual ontological model. In an ontological model, $R([s|S])$ is given by
\begin{equation}
R([s|S])=\sum_{\lambda\in\Lambda}\sum_{v\in V(G)}w_vp(v|\lambda)\mu(\lambda|S,s).
\end{equation}
Defining $R(\lambda)\equiv\sum_{v\in V(G)}w_vp(v|\lambda)$, we have that
\begin{equation}
R([s|S])=\sum_{\lambda\in\Lambda}R(\lambda)\mu(\lambda|S,s).
\end{equation}
Similarly, ${\rm Corr}$ is given by 
\begin{eqnarray}
&&{\rm Corr}\nonumber\\
&=&\sum_{\lambda\in\Lambda}\sum_{e\in E(\Gamma_G)}q_e\sum_{m_e,s_e}\delta_{m_e,s_e}\xi(m_e|M_e,\lambda)\mu(\lambda,s_e|S_e)\nonumber\\
&=&\sum_{\lambda\in\Lambda}\sum_{e\in E(\Gamma_G)}q_e\nonumber\\&&\sum_{m_e,s_e}
\delta_{m_e,s_e}\xi(m_e|M_e,\lambda)\mu(s_e|S_e,\lambda)\mu(\lambda|S_e).\nonumber\\
\end{eqnarray}
Here, we have used the fact that $$\mu(\lambda,s_e|S_e)=\mu(s_e|S_e,\lambda)\mu(\lambda|S_e)$$ 
to express ${\rm Corr}$ in a way that treats sources and measurements similarly.

Using {\em preparation noncontextuality} (cf.~Eq.~\eqref{ncprep}), we have that
\begin{eqnarray}
\forall e,e'&\in& E(\Sigma_G):[\top|S_{e_{\top}}] \simeq[\top|S_{{e'}_{\top}}]\nonumber\\
\Rightarrow \mu(\lambda|S_e)&=&\mu(\lambda|S_{e'})\equiv \nu(\lambda), \forall\lambda\in\Lambda.
\end{eqnarray}
Then we can rewrite ${\rm Corr}$ as
\begin{eqnarray}
&&{\rm Corr}\nonumber\\
&=&\sum_{\lambda\in\Lambda}\sum_{e\in E(\Gamma_G)}q_e\sum_{m_e,s_e}\delta_{m_e,s_e}\xi(m_e|M_e,\lambda)\mu(s_e|S_e,\lambda)\nu(\lambda).\nonumber\\
\end{eqnarray}
Note that the only $\lambda$ that contribute to ${\rm Corr}$ are those for which $\nu(\lambda)>0$. Also, $\mu(s_e|S_e,\lambda)$ and $\mu(\lambda|S_e,s_e)$ satisfy the condition $\mu(s_e|S_e,\lambda)\nu(\lambda)=\mu(\lambda|S_e,s_e)p(s_e|S_e)$, so that $\mu(s_e|S_e,\lambda)$ is well-defined whenever $\nu(\lambda)>0$.

Defining 
\begin{equation}
{\rm Corr}(\lambda)\equiv\sum_{e\in E(\Gamma_G)}q_e\sum_{m_e,s_e}\delta_{m_e,s_e}\xi(m_e|M_e,\lambda)\mu(s_e|S_e,\lambda),
\end{equation}
we have that 
\begin{equation}
{\rm Corr}=\sum_{\lambda\in\Lambda}{\rm Corr}(\lambda)\nu(\lambda),
\end{equation}

Recalling that $\zeta(M_e,\lambda)=\max_{m_e}\xi(m_e|M_e,\lambda)$, note that ${\rm Corr}(\lambda)$ is upper bounded as follows (for any $\lambda\in\Lambda$):
\begin{eqnarray}
&&{\rm Corr}(\lambda)\nonumber\\
&\equiv&\sum_{e\in E(\Gamma_G)}q_e\sum_{m_e,s_e}\delta_{m_e,s_e}\xi(m_e|M_e,\lambda)\mu(s_e|S_e,\lambda)\nonumber\\
&\leq&\sum_{e\in E(\Gamma_G)}q_e\zeta(M_e,\lambda)\sum_{s_e}\mu(s_e|S_e,\lambda)\nonumber\\
&=&\sum_{e\in E(\Gamma_G)}q_e\zeta(M_e,\lambda).
\end{eqnarray}
 If $\lambda\in\Lambda_{\rm det}$, then this upper bound is trivial, i.e., ${\rm Corr}(\lambda)\leq 1$, since every measurement has deterministic response functions. On the other hand, for all $\lambda\in\Lambda_{\rm ind}$, we have (from Eq.~\eqref{betaontl})
\begin{equation}\label{Corrupperbound}
{\rm Corr}(\lambda)\leq\beta(\Gamma_G,q).
\end{equation}
Similarly, for $\lambda\in\Lambda_{\rm det}$ we have $R(\lambda)\leq \alpha(G,w)$, while for $\lambda\in\Lambda_{\rm ind}$ we have $R(\lambda)\leq\alpha^*(G,w)$.

Using the fact that $$\nu(\lambda)=\mu(\lambda|S)=\sum_{s}\mu(\lambda|S,s)p(s|S),$$
for any $S\equiv S_e, e\in E(\Sigma_G)$, 
we have 
\begin{eqnarray}
&&{\rm Corr}\nonumber\\
&=&\sum_{s}\left(\sum_{\lambda}{\rm Corr}(\lambda)\mu(\lambda|S,s)\right)p(s|S)\nonumber\\
&=&\sum_{s}{\rm Corr}_{s}p(s|S).
\end{eqnarray}
where we have defined ${\rm Corr}_{s}\equiv \sum_{\lambda}{\rm Corr}(\lambda)\mu(\lambda|S,s)$.

\subsubsection{Derivation of the noncontextual tradeoff for any graph $G$}
We are now in a position to express our general noise-robust noncontextuality inequality as a tradeoff between three operational quantities: ${\rm Corr}$, $R([s_{e_*}=0|S_{e_*}])$, and $p(s_{e_*}=0|S_{e_*})$.

First, note that KS-contextuality is witnessed when for some choice of $[s|S]$, here given by $[s_{e_*}=0|S_{e_*}]$, we have $$R([s_{e_*}=0|S_{e_*}])>\alpha(G,w).$$ This means that for some set of ontic states in the support of $[s_{e_*}=0|S_{e_*}]$, i.e., 
\begin{align}
\lambda&\in{\rm Supp}\{\mu(.|S_{e_*},s_{e_*}=0)\}\nonumber\\
&\equiv\{\lambda\in\Lambda:\mu(\lambda|S_{e_*},s_{e_*}=0)>0\},
\end{align} 
we have $R(\lambda)>\alpha(G,w)$. For such a set of ontic states one must then have ${\rm Corr}(\lambda)<1$ (because these $\lambda\in\Lambda_{\rm ind}$ and we have Eq.~\eqref{Corrupperbound}), which in turn implies that ${\rm Corr}_{s_{e_*}=0}<1$.
On the other hand, for $s_{e_*}=1$, we have no constraints: ${\rm Corr}_{s_{e_*}=1}\leq 1$. Thus,
\begin{eqnarray}
&&{\rm Corr}\nonumber\\
&=&{\rm Corr}_{s_{e_*}=0}p(s_{e_*}=0|S_{e_*})+{\rm Corr}_{s_{e_*}=1}p(s_{e_*}=1|S_{e_*})\nonumber\\
&\leq& p_0 {\rm Corr}_{s_{e_*}=0}+1-p_0,
\end{eqnarray}
where $p_0\equiv p(s_{e_*}=0|S_{e_*})$. 

Defining $\mu_{\rm det}\equiv\sum_{\lambda\in\Lambda_{\rm det}}\mu(\lambda|S_{e_*},s_{e_*}=0)$ and $\mu_{\rm ind}\equiv\sum_{\lambda\in\Lambda_{\rm ind}}\mu(\lambda|S_{e_*},s_{e_*}=0)$, we now have
\begin{equation}\label{cons1}
\mu_{\rm det}+\mu_{\rm ind}=1,
\end{equation}
\begin{equation}\label{cons2}
{\rm Corr}_{s_{e_*}=0}\leq \mu_{\rm det}+\beta(\Gamma_G,q)\mu_{\rm ind},
\end{equation}
\begin{equation}\label{cons3}
R\leq \alpha(G,w)\mu_{\rm det}+\alpha^*(G,w)\mu_{\rm ind}.
\end{equation}
Note that assuming $\mu_{\rm det}=1$ would reduce these constraints to a standard Bell-KS inequality, $R\leq \alpha(G,w)$. However, since we are not assuming this, simply eliminating $\mu_{\rm det}$ and $\mu_{\rm ind}$ from these constraints leads us to\footnote{To see this explicitly, just use Eq.~\eqref{cons1} to make the substitution $\mu_{\rm ind}=1-\mu_{\rm det}$ in Eqs.~\eqref{cons2} and \eqref{cons3}, then eliminate $\mu_{\rm det}$ from Eq.~\eqref{cons2} by using the upper bound on it from Eq.~\eqref{cons3}.}
\begin{eqnarray}
&&{\rm Corr}_{s_{e_*}=0}\nonumber\\
&\leq&1-(1-\beta(\Gamma_G,q))\frac{R-\alpha(G,w)}{\alpha^*(G,w)-\alpha(G,w)}\nonumber\\
\end{eqnarray}
where  the upper bound is nontrivial if and only if $\beta(\Gamma_G,q)<1$ and $R-\alpha(G,w)>0$. 

If we are given that $\beta(\Gamma_G,q)<1$, then we have a trivial upper bound on ${\rm Corr}_{s_{e_*}=0}$ for the remaining cases: the upper bound is $1$ for $R=\alpha(G,w)$ and greater than $1$ for $R<\alpha(G,w)$.

Thus, our noise-robust noncontextuality inequality now reads:
\begin{equation}\label{NCI1}
{\rm Corr}\leq 1-p_0(1-\beta(\Gamma_G,q))\frac{R-\alpha(G,w)}{\alpha^*(G,w)-\alpha(G,w)},
\end{equation}
which can be rewritten as
\begin{equation}\label{NCI2}
R\leq \alpha(G,w)+\frac{\alpha^*(G,w)-\alpha(G,w)}{p_0}\frac{1-{\rm Corr}}{1-\beta(\Gamma_G,q)}.
\end{equation}

Note that Eq.~\eqref{NCI1} expresses the constraint from noncontextuality as an upper bound on the source-measurement correlations ${\rm Corr}$, reminiscent of the noise-robust noncontextuality inequality first derived in Ref.~\cite{KunjSpek} (and later treated in hypergraph-theoretic terms in Ref.~\cite{kunjunc}), except here the upper bound on ${\rm Corr}$ depends not only on the hypergraph invariant $\beta(\Gamma_G,q)$ but also two of the graph invariants from the CSW framework \cite{CSW}, namely, $\alpha(G,w)$ and $\alpha^*(G,w)$, besides also the operational quantity $R$, which is the figure-of-merit for KS-contextuality ($R>\alpha(G,w)$ witnesses KS-contextuality) in the CSW framework. Eq.~\eqref{NCI1} indicates that the source-measurement correlations would fail to be perfect (i.e., ${\rm Corr}<1$) in an operational theory admitting a noncontextual ontological model if and only if $R>\alpha(G,w)$ and $\beta(\Gamma_G,q)<1$. Contextuality would be witnessed when the source-measurement correlations are stronger than the constraint from Eq.~\eqref{NCI1}. For $R\leq\alpha(G,w)$, in particular, there is no constraint from noncontextuality on ${\rm Corr}$.

On the other hand, rewriting the constraint from noncontextuality as Eq.~\eqref{NCI2}, one is reminded of the CSW framework \cite{CSW}, where $R$ is taken to be the quantity that is upper bounded by KS-noncontextuality. Here, instead, we have that $R$ is upper bounded by a term that includes the source-measurement correlations ${\rm Corr}$ that can be achieved for the measurements and thus penalizes for measurements that cannot be made highly predictable with respect to some preparations, i.e., ${\rm Corr}<1$ makes it harder to violate the upper bound on $R$. When the upper bound reaches $\alpha^*(G,w)$, it becomes trivial and $R$ is no longer constrained by noncontextuality on account of noise in the measurements. Indeed, trivial POVMs (cf.~Appendices \ref{trivialpovms} and \ref{trivial}) never violate such a noncontextuality inequality because of the penalty incurred via ${\rm Corr}$, as we later show in Section \ref{trivialnoviol}.
\subsubsection{When is the noncontextual tradeoff violated?}

The inequality of Eq.~\eqref{NCI2} can be rewritten as the following tradeoff between ${\rm Corr}$, $p_0$, and $R$:

\begin{equation}\label{NCI3}
{\rm Corr}+p_0(1-\beta(\Gamma_G,q))\frac{R-\alpha(G,w)}{\alpha^*(G,w)-\alpha(G,w)}\leq 1.
\end{equation}
Writing the constraint from noncontextuality in the form of Eq.~\eqref{NCI3} (in contrast to Eqs.~\eqref{NCI1} and \eqref{NCI2}) makes it more even-handed in its treatment of the two operational quantities $R$ (which is key in the CSW framework \cite{CSW}) and ${\rm Corr}$ (which is key in noise-robust noncontextuality inequalities inspired by logical proofs of the KS theorem \cite{KunjSpek,kunjunc}) and emphasizes that noise-robust noncontextuality inequalities inspired by statistical proofs of the KS theorem \cite{KunjSpek17} are  tradeoffs between $R$ (which is about the strength of correlations {\em between} measurements) and ${\rm Corr}$ (which is about the predictability of measurements) that must be satisfied by any operational theory admitting a noncontextual ontological model. Roughly speaking, a high degree of predictability for measurements (e.g., ${\rm Corr}=1$) cannot coexist with very strong correlations {\em between} the measurements (e.g., $R=\alpha^*(G,w)$) when the operational theory admits a noncontextual ontological model.

For a nontrivial constraint -- and hence, the possibility of witnessing contextuality via violation of this inequality (Eq.~\eqref{NCI3}) -- the upper bound on ${\rm Corr}$ (the right-hand-side of Eq.~\eqref{NCI1}) should be strictly bounded above by 1, and the upper bound on $R$ (the right-hand-side of Eq.~\eqref{NCI2}) should be strictly bounded above by $\alpha^*(G,w)$ (the algebraic upper bound on $R$), that is
\begin{eqnarray}\label{conditions}
p_0&>&0\textrm{ and }\beta(\Gamma_G,q)<1,\nonumber\\
R&>&\alpha(G,w),\nonumber\\
{\rm Corr}&>&1-p_0(1-\beta(\Gamma_G,q)).
\end{eqnarray}

These are the minimal benchmarks necessary --- besides the requirement of tomographic completeness of a finite set of procedures and the possibility of inferring secondary procedures with exact operational equivalences using convexity of the operational theory \cite{exptlpaper} --- to witness contextuality in a Kochen-Specker type experiment adapted to our framework following Spekkens \cite{Spe05}.

Suppose one achieves, by some means, a value of $R=\theta(G,w)$, the upper bound on the quantum value with projective measurements. When would this value be an evidence of contextuality? For this to be the case, we must have:
\begin{equation}
{\rm Corr}>1-p_0(1-\beta(\Gamma_G,q))\frac{\theta(G,w)-\alpha(G,w)}{\alpha^*(G,w)-\alpha(G,w)}.
\end{equation}
Now, for the ideal quantum realization where measurement events are projectors, and the corresponding source events are eigenstates, it is always the case that ${\rm Corr}=1$, hence contextuality is witnessed. However, it's possible to witness contextuality even if ${\rm Corr}<1$, as long as it exceeds the lower bound specified above. In a sense, for quantum theory, this allows for a quantitative accounting of the effect of nonprojectiveness in the measurements (or mixedness in preparations) on the possibility of witnessing contextuality, a feature that is absent in traditional Kochen-Specker approaches \cite{CSW10, CSW, AFLS, sheaf}. Indeed, as long as one achieves any value of $R>\alpha(G,w)$, it is possible to witness contextuality for a sufficiently high value of ${\rm Corr}$ (see Eq.~\eqref{NCI1}).
\subsection{Example: KCBS scenario}
We will now illustrate our hypergraph framework by applying it to the KCBS scenario to make differences with respect to the CSW graph-theoretic framework \cite{CSW} explicit.

The graph $G$ for the KCBS scenario is given in Fig.~\ref{fig3}, the measurement events hypergraph $\Gamma_G$ is given in Fig.~\ref{fig4}, and the source events hypergraph $\Sigma_G$ is given in Fig.~\ref{source}. We then have
\begin{equation}
R([s|S])=\sum_{v\in V(G)}p(v|S,s),
\end{equation}
where the (vertex) weights $w_v=1$ for all $v\in V(G)$, i.e., it's an unweighted graph and we will use $\alpha(G)$ and $\alpha^*(G)$ to denote its independence number and the fractional packing number, respectively. These are given by 
\begin{equation}
\alpha(G)=2\quad{\rm and}\quad \alpha^*(G)=5/2.
\end{equation}
The source-measurement correlation term is given by 
\begin{equation}
{\rm Corr}=\sum_{e\in E(\Gamma_G)}q_e\sum_{m_e,s_e}\delta_{m_e,s_e}p(m_e,s_e|M_e,S_e)
\end{equation}
for any choice of probability distribution $q\equiv\{q_e\}_{e\in E(\Gamma_G)}$. For simplicity, we will just take this probability distribution to be uniform, i.e., $q_e=\frac{1}{5}$ for all $e\in E(\Gamma_G)$. Note that the only extremal probabilistic model on $\Gamma_G$ corresponding to an indeterministic assignment (in $\Lambda_{\rm ind}$) assigns $\xi(v|\lambda)=\frac{1}{2}$ for all $v\in V(G)$. This means 
\begin{equation}
\beta(\Gamma_G,q)=\frac{1}{2} \quad\forall q.
\end{equation}

The noncontextuality inequality of Eq.~\eqref{NCI2} 
\begin{equation}
R\leq \alpha(G,w)+\frac{\alpha^*(G,w)-\alpha(G,w)}{p_0}\frac{1-{\rm Corr}}{1-\beta(\Gamma_G,q)}
\end{equation}
then becomes (in the KCBS scenario)
\begin{equation}
R\leq 2+\frac{1/2}{p_0}\frac{1-{\rm Corr}}{1/2},
\end{equation}
or 
\begin{equation}
R\leq 2+\frac{1-{\rm Corr}}{p_0}.
\end{equation}
Recall that the KCBS inequality \cite{KCBS, CSW} reads $R\leq 2$ and it would be a valid noncontextuality inequality in our framework if and only if one can find measurements and preparations such that ${\rm Corr}=1$. In the standard KCBS construction \cite{KCBS} that violates the inequality $R\leq 2$, we have the five vertices in $G$ (say $v_i$, $i\in\{1,2,3,4,5\}$, labelled cyclically) associated with five projectors $\Pi_i=|l_i\rangle\langle l_i|$, $i\in\{1,2,3,4,5\},$ on a qutrit Hilbert space, given by the vectors 
$|l_i\rangle=(\sin\theta \cos\phi_i,\sin\theta \sin\phi_i,\cos\theta)$, $\phi_i=\frac{4\pi i}{5}$, and $\cos\theta=\frac{1}{\sqrt[4]{5}}$. The special source event $[s_{e_*}=0|S_{e_*}]$ is associated with the quantum state $|\psi\rangle=(0,0,1)$, so that
\begin{equation}
R=\sum_{i=1}^5 |\langle l_i|\psi\rangle|^2=\sqrt{5}>2.
\end{equation}
\begin{figure}
	\includegraphics[scale=0.3]{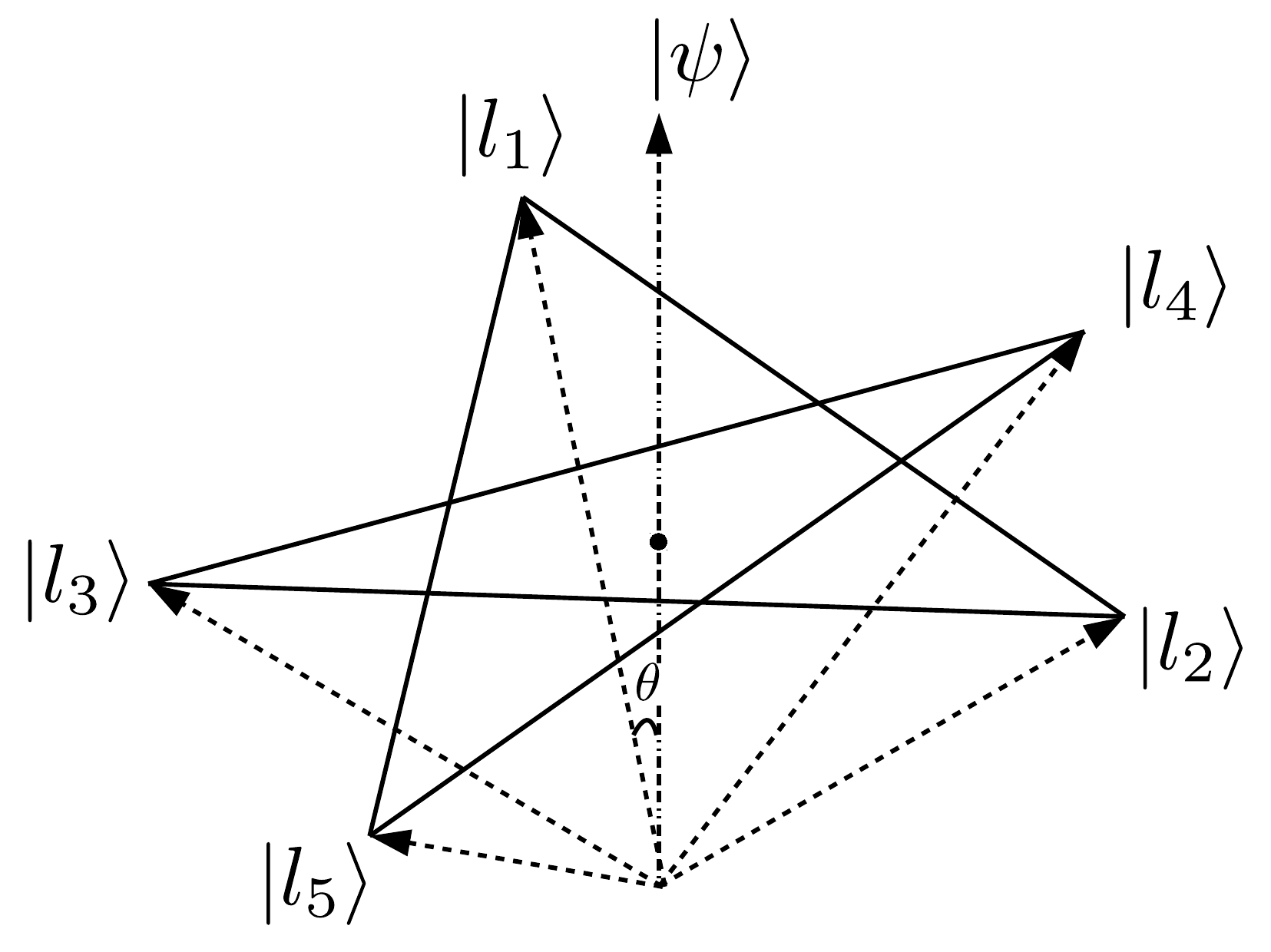}
	\caption{Geometric configuration of the vectors appearing in the KCBS construction \cite{KCBS}.}
	\label{kcbsfig}
\end{figure}
See Fig.~\ref{kcbsfig} for a depiction of the geometric configuration of these vectors. 

To turn this KCBS construction into an argument against noncontextuality in our approach, we need additional ingredients beyond the graph $G$. Firstly, for both the measurement events hypergraph $\Gamma_G$ and the source events hypergraph $\Sigma_G$, we denote the hyperedges by $e_i$, $i\in\{1,2,3,4,5\}$. In $\Gamma_G$, the measurement events for the setting $M_{e_i}$ are given by $\{[m_{e_i}=0|M_{e_i}]=|l_i\rangle\langle l_i|, [m_{e_i}=1|M_{e_i}]=\mathbb{I}-|l_i\rangle\langle l_i|-|l_{i+1}\rangle\langle l_{i+1}|, [m_{e_i}=2|M_{e_i}]=|l_{i+1}\rangle\langle l_{i+1}|\}$, where for $i=5$, $i+1=1$ (addition modulo 5). Similarly, in $\Sigma_G$, the source events corresponding to source setting $S_{e_i}$ are given by $\{[s_{e_i}=0|S_{e_i}]=|l_i\rangle\langle l_i|$, $[s_{e_i}=2|S_{e_i}]=|l_{i+1}\rangle\langle l_{i+1}|$, and $[s_{e_i}=1|S_{e_i}]=\mathbb{I}-|l_i\rangle\langle l_i|-|l_{i+1}\rangle\langle l_{i+1}|\}$, where $p(s_{e_i}=b|S_{e_i})=\frac{1}{3}$ for all $b\in\{0,1,2\}$. The special source setting $S_{e_*}$ consists of source events $\{[s_{e_*}=0|S_{e_*}]=|\psi\rangle\langle\psi|,[s_{e_*}=1|S_{e_*}]=\frac{\mathbb{I}-|\psi\rangle\langle\psi|}{2}\}$, where $p(s_{e_*}=0|S_{e_*})=\frac{1}{3}$ and $p(s_{e_*}=1|S_{e_*})=\frac{2}{3}$. We thus have the operational equivalences we need between the source settings: 
\begin{equation}
[\top|S_{e_{\top}}]\simeq[\top|S_{{e'}_{\top}}]=\frac{\mathbb{I}}{3},\quad\forall e,e'\in E(\Sigma_G).
\end{equation}
This choice of representation for $\Gamma_G$ and $\Sigma_G$ yields $p_0=\frac{1}{3}$, ${\rm Corr}=1$, and $R([s_{e_*}=0|S_{e_*}])=\sqrt{5}$, so that the inequality 
\begin{equation}
R\leq 2+\frac{1-{\rm Corr}}{p_0}
\end{equation}
is violated. However, note that this is an idealization (under which ${\rm Corr}=1$) and, typically, the source events and measurement events will not be perfectly correlated (${\rm Corr}<1$) and the operational equivalences between the source settings need not correspond to the maximally mixed state. All that is required for a test of noncontextuality using this inequality is that the operational equivalences hold for {\em some} choice of preparations and measurements which need not be the same as that in the ideal KCBS construction. 

To illustrate what happens when ${\rm Corr}<1$, we consider the effect of a depolarizing channel on the states and measurements in the ideal KCBS construction. The channel is given by 
	\begin{align}
	\mathcal{D}_r(\cdot)=rI(\cdot)\mathbb{I}+(1-r)\frac{1}{3}\mathbb{I}\Tr(\cdot), r\in[0,1].
	\end{align}
The action of this channel -- with parameter $r_1\in[0,1]$, say -- on the pure states $\{\{|l_i\rangle\langle l_i|\}_{i=1}^5, |\psi\rangle\langle\psi|\}$ yields the noisy states given by
\begin{align}
&\mathcal{D}_{r_1}(|l_i\rangle\langle l_i|)={r_1}|l_i\rangle\langle l_i|+(1-{r_1})\frac{\mathbb{I}}{3}, \forall i\in[5],\\
&\mathcal{D}_{r_1}(|\psi\rangle\langle \psi|)={r_1}|\psi\rangle\langle \psi|+(1-{r_1})\frac{\mathbb{I}}{3},
\end{align}
and the action of its adjoint -- with parameter $r_2\in[0,1]$, say -- on the ideal projectors, $\{|l_i\rangle\langle l_i|\}_{i=1}^5$, involved in the measurements correspondingly yields the POVM elements given by 
\begin{align}
\mathcal{D}^{\dagger}_{r_2}(|l_i\rangle\langle l_i|)={r_2}|l_i\rangle\langle l_i|+(1-{r_2})\frac{\mathbb{I}}{3}, \forall i\in[5].
\end{align}
Hence, we are imagining a situation where the preparation procedures are affected by depolarizing noise with parameter $r_1$ and measurement procedures are affected by depolarizing noise with parameter $r_2$, similar to the situation considered previously in Section II of the Supplemental material of Ref.~\cite{KunjSpek}. The operational equivalences required for our argument from preparation and measurement noncontextuality are satisfied by these noisy preparations and measurements. That is, in $\Gamma_G$, we can represent the measurement events for the setting $M_{e_i}$ (where $i\in[5]$, $i+1=1$, i.e., addition modulo 5) by
\begin{align}
&[m_{e_i}=0|M_{e_i}]=\mathcal{D}^{\dagger}_{r_2}(|l_i\rangle\langle l_i|),\\ &[m_{e_i}=1|M_{e_i}]=\mathcal{D}^{\dagger}_{r_2}(\mathbb{I}-|l_i\rangle\langle l_i|-|l_{i+1}\rangle\langle l_{i+1}|),\\
&[m_{e_i}=2|M_{e_i}]=\mathcal{D}^{\dagger}_{r_2}(|l_{i+1}\rangle\langle l_{i+1}|).
\end{align}
It is easy to verify that these form elements of a valid POVM denoted by the measurement setting $M_{e_i}$ and that the operational equivalences between the measurement events (represented by $\Gamma_G$) are indeed respected.
On the other hand, in $\Sigma_G$, the source events corresponding to source setting $S_{e_i}$ can be represented by 
\begin{align}
&[s_{e_i}=0|S_{e_i}]=\mathcal{D}_{r_1}(|l_i\rangle\langle l_i|),\\ &[s_{e_i}=1|S_{e_i}]=\mathcal{D}_{r_1}(\mathbb{I}-|l_i\rangle\langle l_i|-|l_{i+1}\rangle\langle l_{i+1}|\})\\
&[s_{e_i}=2|S_{e_i}]=\mathcal{D}_{r_1}(|l_{i+1}\rangle\langle l_{i+1}|),
\end{align}
where $p(s_{e_i}=b|S_{e_i})=\frac{1}{3}$ for all $b\in\{0,1,2\}$, while the source events for source setting $S_{e_*}$ can be represented by 
\begin{align}
&[s_{e_*}=0|S_{e_*}]=\mathcal{D}_{r_1}(|\psi\rangle\langle\psi|),\\
&[s_{e_*}=1|S_{e_*}]=\mathcal{D}_{r_1}\left(\frac{\mathbb{I}-|\psi\rangle\langle\psi|}{2}\right),
\end{align}
where $p(s_{e_*}=0|S_{e_*})=\frac{1}{3}$ and $p(s_{e_*}=1|S_{e_*})=\frac{2}{3}$. These satisfy the operational equivalences
\begin{equation}
[\top|S_{e_{\top}}]\simeq[\top|S_{{e'}_{\top}}]=\frac{\mathbb{I}}{3},\quad\forall e,e'\in E(\Sigma_G).
\end{equation}
We then have 
\begin{align}
&{\rm Corr}=\frac{1}{5}\sum_{e\in E(\Gamma_G)}\frac{1}{3}\sum_{b\in\{0,1,2\}}p(m_e=b|M_e,S_e,s_e=b).
\end{align}
Noting that for any qutrit pure state $|\phi\rangle$ and its corresponding projector $|\phi\rangle\langle \phi|$, each affected by depolarizing noise with parameters $r_1$ and $r_2$, respectively, we have
\begin{align}\label{noisyoverlap}
&\Tr(\mathcal{D}_{r_1}(|\phi\rangle\langle \phi|)\mathcal{D}^{\dagger}_{r_2}(|\phi\rangle\langle \phi|))\nonumber\\
=&\frac{1}{3}+\frac{2}{3}r_1r_2.
\end{align}
Now, each term in the summation defining ${\rm Corr}$, namely, $p(m_e=b|M_e,S_e,s_e=b)$, is obtained from a calculation of the type in Eq.~\eqref{noisyoverlap}. Hence, we have for each such term, 
\begin{align}
&p(m_e=b|M_e,S_e,s_e=b)\nonumber\\
=&\frac{1}{3}+\frac{2}{3}r_1r_2,
\end{align}
so that
\begin{align}
{\rm Corr}=\frac{1}{3}+\frac{2}{3}r_1r_2.
\end{align}
In the noiseless regime, i.e., $r_1=r_2=1$, this reduces to the ideal KCBS scenario. On the other hand, we have 
\begin{widetext}
\begin{align}
&R([s_{e_*}=0|S_{e_*}])\nonumber\\
=&\sum_{v\in V(G)}p(v|S_{e_*},s_{e_*}=0)\nonumber\\
=&\sum_{i=1}^5 \Tr\left(\mathcal{D}^{\dagger}_{r_2}(|l_i\rangle\langle l_i|)\mathcal{D}_{r_1}(|\psi\rangle\langle\psi|)\right)\nonumber\\
=&\sum_{i=1}^5\left(r_1r_2|\langle l_i|\psi\rangle|^2+\frac{r_2(1-r_1)}{3}+\frac{r_1(1-r_2)}{3}+\frac{(1-r_1)(1-r_2)}{3}\right)\nonumber\\
=&r_1r_2\sum_{i=1}^5|\langle l_i|\psi\rangle|^2+\frac{5}{3}(1-r_1r_2).
\end{align}
\end{widetext}
Recall that violation of the noncontextuality inequality requires that
\begin{equation}
R>2+\frac{1-{\rm Corr}}{p_0}.
\end{equation}
That is,
\begin{align}
&r_1r_2\sum_{i=1}^5|\langle l_i|\psi\rangle|^2+\frac{5}{3}(1-r_1r_2)\nonumber\\
>&2+3\left(1-\left(\frac{1}{3}+\frac{2}{3}r_1r_2\right)\right).
\end{align}
Given that $\sum_{i=1}^5|\langle l_i|\psi\rangle|^2=\sqrt{5}$, this becomes
\begin{align}
&r_1r_2\sqrt{5}+\frac{5}{3}(1-r_1r_2)\nonumber\\
>&2+2(1-r_1r_2).
\end{align}
Rewriting this, we obtain
\begin{align}
r_1r_2>1-\frac{\sqrt{5}-2}{\sqrt{5}+\frac{1}{3}}\approx 0.908,
\end{align}
that is, the noncontextuality inequality can be violated only when the depolarizing noise is below a certain threshold given by $r_1r_2>0.908$. In terms of ${\rm Corr}$, this requires ${\rm Corr}>0.939$. The noiseless case $r_1=r_2=1$ takes us back to the ${\rm Corr}=1$ regime that we previously discussed.

\section{Discussion}\label{discussion}

\subsection{Measurement-measurement correlations vs.~source-measurement correlations}
Note that the usual Kochen-Specker experiment, as conceptualized in Refs.~\cite{CSW10,CSW,AFLS, sheaf}, for example, involves only the quantity $R([s|S])$, representing correlations between various measurement events when all the measurements are implemented on a system prepared according to the same preparation procedure, denoted by the source event $[s|S]$. Thus, $R$ represents measurement-measurement correlations on a system prepared according to a fixed choice of preparation procedure.

On the other hand, the experiment we have conceptualized in this paper involves, besides the quantity $R$, a quantity ${\rm Corr}$ representing source-measurement correlations, characterizing the quality of the measurements in terms of their response to corresponding preparations.

Our noncontextuality inequalities represent a trade-off relation that must hold between $R$ and ${\rm Corr}$ in an operational theory that admits a noncontextual ontological model. Here we note that the first example of such a tradeoff relation, albeit only for the case of operational quantum theory with unsharp measurements, appeared in Ref.~\cite{LSW} as the Liang-Spekkens-Wiseman (LSW) inequality \cite{KG} which has been shown to be experimentally violated in Ref.~\cite{LSWexpt}.\footnote{This experiment, however, is not in a position to make claims about contextuality without presuming the operational theory is quantum theory simply because the LSW inequality presumes operational quantum theory. The noncontextuality inequalities in this paper do not require the operational theory to be quantum theory and can therefore be experimentally tested using techniques from Refs.~\cite{exptlpaper,Kunjthesis,gpttompaper}.} And, indeed, the developments reported in Ref.~\cite{KunjSpek17} and the present paper have their origins in the idea of such a trade-off relation that first appeared in Ref.~\cite{LSW}.

\subsection{Can our noise-robust noncontextuality inequalities be saturated by a noncontextual ontological model?}\label{saturateineqs}
A natural question concerns the tightness of these noncontextuality inequalities, i.e., can Eq.~\eqref{NCI3} be saturated by a noncontextual ontological model? This requires one to specify a noncontextual ontological model reproducing the operational equivalences between the measurement events and between the source settings, such that 

\begin{equation}\label{saturateNCI}
{\rm Corr}+p_0(1-\beta(\Gamma_G,q))\frac{R-\alpha(G,w)}{\alpha^*(G,w)-\alpha(G,w)}=1.
\end{equation}

The assumption of measurement noncontextuality is already implicit in our characterization of the response functions $\xi(m_e|M_e,\lambda)$, and for this reason it is, indeed, trivial to satisfy measurement noncontextuality while saturating these noncontextuality inequalities. Measurement noncontextuality, alone, in fact even allows a violation of the inequality (when no preparation noncontextuality is imposed), the extreme case being $R=\alpha^*(G,w)$ and $1\geq{\rm Corr}>1-p_0(1-\beta(\Gamma_G,q))$. It's the assumption of preparation noncontextuality that is nontrivial to satisfy and we do not know if there exists a general construction of a noncontextual ontological model saturating our noncontextuality inequalities. We outline the general situation below.

\subsubsection{The special case of facet-defining Bell-KS inequalities: {\rm Corr}=1}
If outcome determinism is presumed (as in traditional Bell-KS type treatments), then we know that there exists a necessary and sufficient set of Bell-KS inequalities (each corresponding to a particular choice of $R([s|S])$) that are satisfied by {\em any} operational theory admitting a KS-noncontextual ontological model. In particular, each such (facet) Bell-KS inequality can be saturated by KS-noncontextual ontological models that yield probabilities (from $\mathcal{G}(\Gamma_G)$) corresponding to the facet-defining Bell-KS inequality, i.e., which satisfy $R([s|S])=\alpha(G,w)$ for such a Bell-KS inequality. Indeed, our noise-robust noncontextuality inequalities corresponding to these choices of $R([s|S])$ (i.e., facet-defining Bell-KS inequalities of the Bell-KS polytope which is given by the convex hull of points in $\mathcal{G}(\Gamma_G)|_{\rm det}$) can always be saturated when ${\rm Corr}=1$, because in that case outcome determinism is justified by preparation noncontextuality (cf.~Ref.~\cite{KunjSpek17}) and our inequalities are identical to the Bell-KS inequalities (saturated by $R=\alpha(G,w)$).

\subsubsection{The general case: ${\rm Corr}<1$}
Since we do not want to assume outcome determinism, nor necessarily the idealization of ${\rm Corr}=1$, what is at stake here is the assumption of preparation noncontextuality. This assumption must be satisfied while saturating the noise-robust noncontextuality inequality in order for a measurement noncontextual ontological model to be universally noncontextual. Constructing such a noncontextual ontological model amounts to specifying the distributions $\mu(s_e|S_e,\lambda)$ and $\nu(\lambda)$ such that 
\begin{equation}
\forall\lambda\in\Lambda: \mu(\lambda|S_e)=\nu(\lambda), \forall e\in E(\Sigma_G),
\end{equation}
i.e., preparation noncontextuality holds, and we have (rewriting the saturation condition from Eq.~\eqref{saturateNCI})
\begin{eqnarray}\label{eq:saturation}
&&(\alpha^*(G,w)-\alpha(G,w)){\rm Corr}+p_0(1-\beta(\Gamma_G,q))R\nonumber\\
&=&(\alpha^*(G,w)-\alpha(G,w))+p_0\alpha(G,w)(1-\beta(\Gamma_G,q)),\nonumber\\
\end{eqnarray}
where 
\begin{equation}
{\rm Corr}=\sum_{s_{e_*}}p(s_{e_*}|S_{e_*}){\rm Corr}_{s_{e_*}},
\end{equation}
\begin{equation}
{\rm Corr}_{s_{e_*}}=\sum_{\lambda\in\Lambda}{\rm Corr}(\lambda)\mu(\lambda|S_{e_*},s_{e_*}),
\end{equation}
\begin{align}
&{\rm Corr}(\lambda)\nonumber\\
\equiv&\sum_{e\in E(\Gamma_G)}q_e\sum_{m_e,s_e}\delta_{m_e,s_e}\xi(m_e|M_e,\lambda)\mu(s_e|S_e,\lambda),
\end{align}
and
\begin{equation}
R=\sum_{\lambda\in\Lambda}R(\lambda)\mu(\lambda|S_{e_*},s_{e_*}=0).
\end{equation}
Unfortunately, we do not have a general construction that can show this to be possible for any noise-robust noncontextuality inequality obtained according to the approach we have outlined. We therefore leave it as an open question whether such an inequality can (always?) be saturated by a noncontextual ontological model.

\subsection{Can trivial POVMs ever violate these noncontextuality inequalities?}\label{trivialnoviol}

No.

Recall that a trivial POVM is defined as an assignment of positive operators $p(v)\mathbb{I}$ to the vertices of $\Gamma_G$, where $\mathbb{I}$ is the identity operator on some Hilbert space and $p:V(\Gamma_G)\rightarrow[0,1]$, such that $\sum_{v\in e}p(v)=1$ for all $e\in E(\Gamma_G)$, is a probabilistic model on $\Gamma_G$.

\subsubsection{The case $p\in\mathcal{C}(\Gamma_G)$}
Consider trivial POVMs corresponding to any KS-noncontextual probabilistic model, i..e., $p\in\mathcal{C}(\Gamma_G)$ is a convex mixture of deterministic vertices, $\mathcal{G}(\Gamma_G)|_{\rm det}$, or equivalently, of ontic states in $\Lambda_{\rm det}$. In other words, $\mathcal{C}(\Gamma_G)\equiv{\rm ConvHull}(\mathcal{G}(\Gamma_G)|_{\rm det})$, the convex hull of points in $\mathcal{G}(\Gamma_G)|_{\rm det}$. The largest value ${\rm Corr}$ can take in this case is less than or equal to 1. This means that the upper bound on $R$ from our noncontextuality inequality, Eq.~\eqref{NCI2}, will be greater than or equal to $\alpha(G,w)$, whereas we know that for a KS-noncontextual probabilistic model, $R\leq \alpha(G,w)$. Hence, there is no violation of our noncontextuality inequality for such trivial POVMs.

\subsubsection{The case $p\in{\rm ConvHull}(\mathcal{G}(\Gamma_G)|_{\rm ind})$}
Now consider trivial POVMs that correspond to the indeterministic vertices, $\mathcal{G}(\Gamma_G)_{\rm ind}$ (correspondingly, $\Lambda_{\rm ind}$), or their convex mixtures. We know that for these trivial POVMs, ${\rm Corr}\leq\beta(\Gamma_G,q)$.
For any $R\leq\alpha^*(G,w)$ that is achieved by these trival POVMs, our noncontextuality inequality reads
 \begin{equation}
 {\rm Corr}\leq 1-p_0(1-\beta(\Gamma_G,q))\frac{R-\alpha(G,w)}{\alpha^*(G,w)-\alpha(G,w)},
 \end{equation}
A sufficient condition for this inequality to be satisfied is that 
\begin{equation}\label{eq:suffcond}
\beta(\Gamma_G,q)\leq 1-p_0(1-\beta(\Gamma_G,q))\frac{R-\alpha(G,w)}{\alpha^*(G,w)-\alpha(G,w)},
\end{equation}
which reduces, for $R>\alpha(G,w)$, to
\begin{equation}
p_0\leq\frac{\alpha^*(G,w)-\alpha(G,w)}{R-\alpha(G,w)},
\end{equation}
where the upper bound is greater than or equal to 1, since $\alpha(G,w)<R\leq\alpha^*(G,w)$. This is trivially satisfied since $p_0\leq 1$. 

For $R<\alpha(G,w)$, the sufficient condition of Eq.~\eqref{eq:suffcond} is again trivially satisfied since it reduces to 
\begin{equation}
p_0\geq-\frac{\alpha^*(G,w)-\alpha(G,w)}{\alpha(G,w)-R},
\end{equation}
and we must anyway have $p_0\geq 0$. 

For $R=\alpha(G,w)$, the sufficient condition reduces to $\beta(\Gamma_G,q)\leq 1$, which is again trivially satisfied since $\beta(\Gamma_G,q)<1$ by definition.

\subsubsection{The general case $p\in\mathcal{G}(\Gamma_G)$}
In general, a probabilistic model achieved by trivial POVMs can be in the convex hull of both deterministic ($\Lambda_{\rm det}$) and indeterministic ($\Lambda_{\rm ind}$) ontic states, with the total weight on deterministic ontic states denoted by ${\rm Pr}(\Lambda_{\rm det})$ and that on indeterministic ontic states by ${\rm Pr}(\Lambda_{\rm ind})$, so that ${\rm Pr}(\Lambda_{\rm det})+{\rm Pr}(\Lambda_{\rm ind})=1$. We then have 
\begin{eqnarray}
{\rm Corr}&\leq& {\rm Pr}(\Lambda_{\rm det})+{\rm Pr}(\Lambda_{\rm ind})\beta(\Gamma_G,q),\nonumber\\
R&\leq& {\rm Pr}(\Lambda_{\rm det})\alpha(G,w)+{\rm Pr}(\Lambda_{\rm ind})\alpha^*(G,w).\nonumber\\
\end{eqnarray}
A sufficient condition for satisfaction of the noncontextuality inequality is then 
\begin{eqnarray}
&&1-{\rm Pr}(\Lambda_{\rm ind})(1-\beta(\Gamma_G,q))\nonumber\\
&\leq&1-p_0(1-\beta(\Gamma_G,q))\frac{R-\alpha(G,w)}{\alpha^*(G,w)-\alpha(G,w)},\nonumber\\
\end{eqnarray}
which becomes
\begin{equation}
p_0\leq \frac{\alpha^*(G,w)-\alpha(G,w)}{R-\alpha(G,w)}{\rm Pr}(\Lambda_{\rm ind})
\end{equation}
when $R>\alpha(G,w)$. Noting that $$R\leq \alpha(G,w)+{\rm Pr}(\Lambda_{\rm ind})(\alpha^*(G,w)-\alpha(G,w)),$$
we have
\begin{equation}
{\rm Pr}(\Lambda_{\rm ind})\geq \frac{R-\alpha(G,w)}{\alpha^*(G,w)-\alpha(G,w)},
\end{equation}
so that the sufficient condition for satisfaction of the noncontextuality inequality becomes $p_0\leq 1$, which is trivially satisfied.

When $R=\alpha(G,w)$, the sufficient condition becomes $\beta(\Gamma_G,q)\leq 1$, which is again trivially satisfied.

Finally, when $R<\alpha(G,w)$, the sufficient condition becomes 
\begin{equation}
p_0\geq-\frac{\alpha^*(G,w)-\alpha(G,w)}{\alpha(G,w)-R}{\rm Pr}(\Lambda_{\rm ind}),
\end{equation}
which is again trivially satisfied since $p_0\geq0$.

Hence trivial POVMs cannot yield a violation of our noncontextuality inequalities. This is the sense in which trivial POVMs cannot lead to nonclassicality in our approach, unlike the case of traditional Kochen-Specker approaches \cite{CSW10,CSW,AFLS, sheaf} applied to the case of POVMs \cite{hensonsainz}.  To violate our noncontextuality inequalities, the POVMs must necessarily have some nontrivial projective component (that is not the identity operator or zero) but they {\em need not} be projectors. Further, we do not rely on restricting the notion of joint measurability \cite{notesonjm} (cf.~Section \ref{compatibilitysection}) to commutativity for POVMs. Taking joint measurability to be just commutativity is the  approach adopted in, for example, Ref.~\cite{sheaf}. We refer to Appendix \ref{allegedstrawman} and Appendix \ref{trivial} for more discussion on these issues, in particular Appendix \ref{trivial} for the role of commutativity vs.~joint measurability.

\section{Conclusions}\label{conclusions}

We have obtained a hypergraph framework for obtaining noise-robust noncontextuality inequalities corresponding to KS-colourable scenarios, suitably augmented with preparation procedures in the spirit of Spekkens contextuality \cite{Spe05}. The inequalities take the form of a noncontextual tradeoff between the three operational quantities ${\rm Corr}$, $R$, and $p_0$, cf.~Eq.~\eqref{NCI3}. This framework leverages the graph invariants from the graph-theoretic framework of CSW for doing this, in addition to a new hypergraph invariant (Eq.~\eqref{betadefn}) that we call the {\em weighted max-predictability}. Our approach is general enough to be applicable to any situation involving noisy preparations and measurements that arises from a KS-colourable contextuality scenario. 
 
We conclude with a list of open questions raised in this paper and other directions for future research:

\begin{enumerate}
	\item {\em Characterizing structural Specker's principle from probabilistic models on a hypergraph $\Gamma$:} 
	
	Given that $\mathcal{CE}^1(\Gamma)=\mathcal{G}(\Gamma)$ for some $\Gamma$, is it the case that $\Gamma$ must then necessarily satisfy structural Specker's principle, namely, that every clique in $O(\Gamma)$ is a subset of some hyperedge in $\Gamma$? Or is it the case that there exists a hypergraph $\Gamma'$ for which $\mathcal{CE}^1(\Gamma')=\mathcal{G}(\Gamma')$ but structural Specker's principle fails?
	
	More generally, is there {\em any} characterization of a hypergraph satisfying structural Specker's principle entirely in terms of the probabilistic models on it?
	
	As already pointed out earlier, this open question relates to the open Problem 7.2.3 of Ref.~\cite{AFLS} of characterizing $\Gamma$ for which $\mathcal{CE}^1(\Gamma)=\mathcal{G}(\Gamma)$. It is known that $\Gamma$ representing bipartite Bell scenarios \cite{LO} satisfy the property $\mathcal{CE}^1(\Gamma)=\mathcal{G}(\Gamma)$ and we have provided a generic recipe for converting any $\Gamma$ that does not satisfy structural Specker's principle to a $\Gamma'$ that does satisfy it so that $\mathcal{CE}^1(\Gamma')=\mathcal{G}(\Gamma')$. The question is if there are any {\em other} $\Gamma$ that also satisfy $\mathcal{CE}^1(\Gamma)=\mathcal{G}(\Gamma)$.
	
	\item {\em Almost quantum theory:} We know that an almost quantum theory cannot satisfy Specker's principle \cite{almostquantumsharp} but it satisfies statistical Specker's principle (or consistent exclusivity). An open question that remains is: 
	
	Can an almost quantum theory satisfy structural Specker's principle?
	
	If not, this would render the satisfaction of consistent exclusivity by an almost quantum theory unexplained by a natural structural feature of measurements in the theory, namely, the satisfaction of structural Specker's principle, i.e., almost quantum theory would not fall in the category of operational theories envisaged in Ref.~\cite{giuliosharp}.

	\item {\em Conditions for saturating the noise-robust noncontextuality inequalities:}
	
	As mentioned in Section \ref{saturateineqs}, it is an open question whether the noise-robust noncontextuality inequalities of Eq.~\eqref{NCI3} based on our generalization of the CSW framework \cite{CSW} can be saturated by a noncontextual ontological model.
	
More generally, the status of these noise-robust noncontextuality inequalities vis-\`a-vis the algorithmic approach of Ref.~\cite{algorithmic} for finding necessary and sufficient conditions for noncontextuality in a general prepare-and-measure scenario remains to be explored. One would suspect that the algorithmic approach of Ref.~\cite{algorithmic} when adapted to the kind of situation considered in this paper would yield nontrivial noncontextuality inequalities that aren't merely generalizations of the ones obtained in the CSW framework \cite{CSW}. It would be interesting to investigate the full structure of this set of inequalities and compare it with the facet-defining Bell-KS inequalities of the CSW framework.
	
	\item {\em Properties of the weighted max-predictability, $\beta(\Gamma_G,q)$:}
	
	Since the crucial new hypergraph-theoretic ingredient in our inequalities is the weighted max-predictability, it would be interesting to understand properties of this hypergraph invariant on both counts: as a new mathematical object in its own right, one we haven't been able to find a reference to in the hypergraph theory literature, as well as an important parameter of a hypergraph relevant for noise-robustness of a noise-robust noncontextuality inequality. Indeed, as we point out in Footnote \ref{strongestpossible}, identifying a distribution $q$ (in the definition of ${\rm Corr}$, Eq.~\eqref{eq:Corr}) that minimizes $\beta(\Gamma_G,q)$ for a given $\Gamma_G$ would lead to better noise-robustness in the inequalities of Eqs.~\eqref{NCI1} or \eqref{NCI2}.
	
	\item {\em Noise-robust applications of quantum protocols based on KS-contextuality:}
	
	A general research direction is to construct noise-robust versions of applications that have previously been suggested for KS-contextuality. Our approach provides a recipe for doing this for any Bell-KS inequality appearing in such applications. Besides serving as a witness for strong nonclassicality \cite{robweakstrongtalk} (i.e., Spekkens contextuality),\footnote{As opposed to weak nonclassicality that can arise in epistemically restricted classical theories \cite{quasiquant}. See also the talk at Ref.~\cite{robweakstrongtalk}, 41:43 minutes, for a short discussion.} noise-robust versions of these applications can help benchmark the experiments in terms of the noise that can be tolerated while still witnessing nonclassicality. Examples of such applications include those from Refs.~\cite{wehnervidick, mbqc, magic, qcsi, qubitqc, qkd}.
	
\end{enumerate}

\section*{Acknowledgments}
I would like to thank Andreas Winter for his comments on an earlier version of some of these ideas, Tobias Fritz for the ping-pong and the sing-song in which we often talked about hypergraphs, Rob Spekkens for the often argumentative -- but always productive -- conversations over lunch, and participants at the Contextuality conference (CCIOSA) at Perimeter Institute, during July 24 - 28, 2017, for very stimulating discussions that fed into the narrative of this paper. I would also like to thank David Schmid, Ana Bel{\'e}n Sainz, Elie Wolfe, and Tom\'a\v s Gonda for helping me better articulate the difference between structural vs.~statistical readings of Specker's principle, and Eric Cavalcanti for comments on the manuscript. Theorem \ref{SPimpliesStrSP} owes its origin to a discussion with Tom\'a\v s Gonda. I would also like to thank anonymous referees for suggestions that immensely improved the presentation of these results. Research at Perimeter Institute is supported by the Government of Canada through the Department of Innovation, Science and Economic Development Canada, and by the Province of Ontario through the Ministry of Research, Innovation and Science.

\begin{appendix}
\section{Status of KS-contextuality as an experimentally testable notion of nonclassicality for POVMs in quantum theory}\label{allegedstrawman}
The purpose of this section is to emphasize how the progression from KS-contextuality to Spekkens contextuality for KS-type contextuality experiments is a natural one rather than an {\em ad hoc} move from one framework to another. That is, Spekkens contextuality is not just another notion of nonclassicality that is incomparable with KS-contextuality, but is indeed intimately connected in its motivations to the limitations of KS-contextuality \cite{Spe05}. In particular, we will focus on the role of KS-contextuality with respect to POVMs and why allowing arbitrary POVMs poses a difficulty for KS-contextuality as a notion of nonclassicality that is experimentally testable, i.e., a notion that applies to noisy measurements (POVMs) typically implemented in a laboratory experiment.\footnote{And how a rather compelling way to arrive at a notion that {\em is} experimentally testable is Spekkens contextuality.} While one may be tempted to reject this premise for assessing the suitability of KS-contextuality as a notion of nonclassicality -- claiming instead that KS-contextuality was never meant for POVMs and applies only to ``purified" experiments (namely, ones with only PVMs and pure states) -- the reasons for doing so are rooted in the literature on KS-contextuality where POVMs have indeed been considered and (at least) two kinds of conclusions drawn: one, that there exists a Kochen-Specker contradiction for POVMs, even on a qubit, so KS-contextuality for POVMs is interesting \cite{CabelloPOVM} and two, that allowing arbitrary POVMs in assessing nonclassicality would make the research program of identifying device-independent principles for quantum correlations in KS-contextuality experiments ill-defined, so quantum correlations allowing arbitrary POVMs are ``pathological" \cite{hensonsainz}. We will look at these arguments in turn and use the latter, in particular, to segue into our motivations for the framework proposed in this paper.

\subsection{Limitations of KS-contextuality vis-\`a-vis POVMs}

\subsubsection{KS-contextuality for POVMs in the literature}
The first paper that applied KS-contextuality to the case of POVMs was by Cabello \cite{CabelloPOVM} where a KS-uncolourability argument for POVMs on a single qubit was proposed. This was motivated by the Gleason-type derivation of the Born rule starting with the structure of POVMs due to Busch \cite{busch} and Caves {\em et al.}~\cite{cavesetal}, analogous to the case of the Kochen-Specker theorem \cite{KochenSpecker} which can be seen as motivated by Gleason's theorem \cite{gleason}. Insofar as there exists a Gleason-type theorem for POVMs \cite{busch, cavesetal}, one could motivate KS-contextuality as a reasonable notion of nonclassicality for POVMs, as was presumably the case in Ref.~\cite{CabelloPOVM}. The role of this notion of nonclassicality is then just to argue -- using a finite set of POVM elements -- that no KS-noncontextual assignment of outcomes is possible for certain finite sets of POVMs in quantum theory. Should we, however, assume that it is reasonable to demand deterministic assignment of outcomes to POVM elements in an ontological model, just as we do for PVM elements? The argument of Ref.~\cite{CabelloPOVM} was later criticized on various counts \cite{Spe05, grudka, ODUM} and we refer the reader to Ref.~\cite{ODUM} for criticisms pertinent to this paper, namely, that {\em outcome determinism for all unsharp measurements} (ODUM in Ref.~\cite{ODUM}) in quantum theory is untenable.\footnote{Ref.~\cite{ODUM} is also a good resource for a detailed analysis of arguments concerning dilations of POVMs, which we will not get into here. Besides, it also provides a principled recipe for assigning response functions to POVMs.} Other works in the literature where KS-contextuality for POVMs has been explored include Refs.~\cite{aravind,methot,povmexpt,mancinska}. 

Besides, doubts about the experimental testability of the KS theorem were raised in the late `90s in a series of papers by Meyer, Clifton, and Kent \cite{meyer,kent,cliftonkent}. A review can be found in Ref.~\cite{barrettkent}. These doubts were premised on the idea that the set of KS-colourable projectors (or PVMs) on any given Hilbert space is dense in the set of all projectors (or PVMs) on that Hilbert space. That is, for any given set of PVMs yielding a KS contradiction, it is always possible to find PVMs which are arbitrarily ``close" to the PVMs required for a KS contradiction (for any finite precision) but which do not themselves lead to a KS contradiction. The property of denseness of KS-colourable sets of measurements in the set of all measurements in fact extends to even the most general case when the measurements are POVMs on any Hilbert space. So, even a KS contradiction for POVMs (such as the one in Ref.~\cite{CabelloPOVM}) falls prey to the Meyer-Clifton-Kent argument \cite{barrettkent}. As Ref.~\cite{barrettkent} notes:
\begin{quotation}
	Dealing with projective measurements is arguably not enough. One quite popular view of quantum theory holds that a correct version of the measurement rules would take POV measurements as fundamental, with projective measurements either as special cases or as idealisations which are never precisely realised in practice. In order to define an NCHV theory catering for this line of thought, Kent constructed a KS-colourable dense set of positive operators in a complex	Hilbert space of arbitrary dimension, with the feature that it gives rise to a dense set of POV decompositions of the identity (Kent, 1999). Clifton and Kent constructed a dense set of positive operators in complex Hilbert space of arbitrary dimension with the special feature that no positive operator in the set belongs to more than one decomposition of the identity (Clifton \& Kent, 2000).
	Again, the resulting set of POV decompositions is dense, and the special feature	ensures that one can average over hidden states to recover quantum predictions.
\end{quotation}

 Hence, in any finite precision experiment it would be impossible to test the Kochen-Specker theorem, i.e., such an experimental test would require an infinitely precise measurement and measurements in a real-world laboratory are never infinitely precise. Although there was a lively debate along these lines (see the references in \cite{barrettkent}), the resolutions that were proposed all involved modifying the notion of KS-noncontextuality by adding auxiliary assumptions that seek to exclude the Meyer-Clifton-Kent type arguments. A recent attempt in this direction can be found in Ref.~\cite{Winter} where a notion of ``ontological faithfulness" is proposed. As such, it was already recognized -- for reasons independent of Spekkens contextuality \cite{Spe05} -- that the notion of KS-noncontextuality needs to be revised if one is to make it experimentally testable.\footnote{Of course, this takes nothing away from the importance of the Kochen-Specker theorem \cite{KochenSpecker} as a no-go theorem concerning the logical structure of quantum theory and the constraints it places on the ontological models possible for the theory.} What Spekkens brought to the fore \cite{Spe05}, besides generalizing the notion of contextuality to all experimental procedures rather than measurements alone, was the idea that an experimental test of noncontextuality should not rely on inequalities that presume outcome determinism, just as a test of local causality does not require the assumption of outcome determinism. Indeed, the assumption of outcome determinism for sharp measurements in quantum theory is {\em derived} in the Spekkens framework from the assumption of preparation noncontextuality rather than being assumed independently.

We will now consider the more modern approach to KS-contextuality along the lines of the frameworks in Refs.\cite{CSW, AFLS, sheaf} to segue into our framework for Spekkens contextuality which we develop in this paper.

\subsubsection{Classifying probabilistic models: restriction of quantum models to PVMs}\label{trivialpovms}
Research on KS-contextuality took a different turn with the advent of the graph-theoretic framework of Cabello, Severini and Winter in 2010 \cite{CSW10} (revised slightly in 2014 \cite{CSW}), the sheaf-theoretic framework of Abramsky and Brandenburger in 2011 \cite{sheaf}, and the hypergraph based formalism of Ac\`in, Fritz, Leverrier, and Sainz in  2012 \cite{AFLS}. The unifying theme of these contributions was that they took the key mathematical idea underlying KS-noncontextuality and Bell-locality --- namely, that both are instances of the classical marginal problem \cite{fine, ChavesFritz1, ChavesFritz2} --- and built frameworks that sought to distinguish between classical theories (namely, those admitting KS-noncontextual ontological models), quantum theory, and post-quantum general probabilistic theories by classifying their empirical predictions relative to a Kochen-Specker experiment into these categories. All these frameworks, motivated by the device-independence paradigm, eschewed the erstwhile restriction of the notion of KS-noncontextuality to quantum theory and sought to make their analysis theory-independent, relying only on empirical predictions relative to a KS experiment to classify theories. They separated the assumption of KS-noncontextuality from the operational theory -- namely, quantum theory -- to which it was originally meant to apply, allowing arbitrary operational theories in their analysis. However, there was a key distinction between Bell scenarios and KS-contextuality scenarios that was lost in this formal unification: namely, that while the definition of a quantum probabilistic model in a Bell scenario need not be restricted to  (local) PVMs (and arbitrary local POVMs can be allowed without changing the set of quantum models), the same is not true of a KS-contextuality scenario. Indeed, as Henson and Sainz note in their work \cite{hensonsainz},\footnote{Proposing a principle bounding the KS-contextuality possible in quantum theory, namely, ``Macroscopic Noncontextuality".} reflecting on the question of allowing arbitrary POVMs in the definition of a quantum probabilistic model:

\begin{quotation}
...if we allow general POVMs rather than projective
measurements then no principle that places a non-trivial restriction on correlations will be respected. Thus, this kind of ``quantum model" is clearly pathological.
\end{quotation}

One way to motivate the present work is as a response to the pathology that Henson and Sainz allude to: that trivial POVMs can realize any probabilistic model, hence allowing arbitrary POVMs makes the problem of finding principles to identify quantum correlations in KS-contextuality scenarios trivial, i.e., all probabilistic models are quantum and there is nothing to be learnt about post-quantum probabilistic models. This is because any set of probabilities satisfying the ``no-disturbance" or ``no-signalling" condition (of which the $E1$  correlations of CSW \cite{CSW} are a subset, in general) can be achieved by (trivial) POVMs by simply multiplying an identity operator with every probability in such an assignment of probabilities.\footnote{Trivial POVMs are, therefore, trivial resolutions of the identity, where every POVM element is proportional to identity, i.e., $\{a\mathbb{I}\}_a$, such that $a\in[0,1]$ and $\sum_a a=1$.} By the lights of KS-noncontextuality as one's notion of classicality, then, trivial POVMs saturating the general probabilistic bound on the correlations would seem to be maximally nonclassical (i.e., maximally KS-contextual).
To avoid such ``pathological" quantum models, they restrict the definition of a quantum model to allow only projective measurements. Indeed, with recent work on a sensible notion of ``sharp" measurement in a general probabilistic theory \cite{giuliosharp, giuliosharp2}, an appeal to the ``fundamental sharpness" of all measurements (see, e.g., \cite{cabellotalk}) is made to restrict attention to sharp measurements in both quantum theory and general probabilistic theories. 

On the other hand, the approach in this paper is different. In particular, we want our approach to capture the intuition that trivial POVMs are ``classical" (and not pathological), so we must go beyond KS-noncontextuality. A simple operational sense in which trivial POVMs are ``classical" is that they reveal nothing about the quantum state on which they are measured, being incapable of distinguishing any pair of states whatsoever.\footnote{Indeed, any trivial POVM can be realized in the following operational manner: take the quantum system prepared in some state, throw it in the garbage, and then sample from the classical probability distribution corresponding to the trivial POVM.} The correlations (denoted by $R([s|S])$) usually examined in a KS-contextuality experiment do not allow such experiments to witness the ``triviality" of trivial POVMs, i.e., the fact that they correspond to a fixed probability distribution that doesn't vary even as the choice of preparation is varied. Moreover, since all nonprojective measurements are excluded {\em by fiat} in traditional Kochen-Specker type approaches \cite{CSW,AFLS} for reasons alluded to by Henson and Sainz \cite{hensonsainz}, one loses out on the potential to explore the possibilities that nontrivial and nonprojective measurements offer with respect to contextuality.\footnote{All trivial POVMs are nonprojective, but not all nonprojective POVMs are trivial. Indeed, see Refs.~\cite{LSW,KG,KHF,jmctx} for examples of generalized contextuality \cite{Spe05} with nonprojective measurements, albeit assuming operational quantum theory.} Our approach, therefore, 
 is to allow arbitrary POVMs when considering probabilistic models arising from quantum theory (and not restricting to any notion of ``sharp measurements" in general probabilistic theories) but examine more quantities than are examined in traditional approaches, i.e., besides the quantity $R$ typical in a KS-contextuality scenario, we invoke the quantity ${\rm Corr}$ to account for noise in the measurements.

 If one restricts attention to operational theories that can always achieve ${\rm Corr}=1$ for any KS-contextuality scenario, then the usual classification of probabilistic models following Refs.~\cite{CSW,AFLS} holds (Eq.~\eqref{NCI3}). What is of interest in our framework, however, is the tradeoff between $R$ and ${\rm Corr}$: how large can both $R$ and ${\rm Corr}$ be in an operational theory? (See Eq.~\eqref{NCI3}.)
 
\subsection{Robustness of Bell nonlocality vis-\`a-vis POVMs}
Note that whenever we refer to ``Bell-KS" functionals or inequalities for Kochen-Specker type experiments, we are {\em not} thinking of experiments that are Bell experiments \cite{Bell64,Bell66,CHSH,Belltest1,Belltest2,Belltest3}, which have spacelike separation between multiple parties, each performing local measurements on a shared multipartite preparation. For the case of Bell experiments, trivial local POVMs assigned to each party in a Bell experiment do not lead to Bell violations for a simple reason: the trivial POVMs for each party are all compatible with each other, thereby admitting a joint probability distribution over their outcomes for each party; taking a product of these local joint probability distributions (one for each party) results in a joint distribution over all measurements of all parties, hence satisfying Bell inequalities. The fact that the POVMs are trivial ensures that the Bell inequalities are satisfied regardless of the choice of shared quantum state. On the other hand, forgetting the constraint of {\em local} POVMs, there always exist global trivial POVMs that can violate Bell inequalities: e.g., just take the Popescu-Rohrlich (PR) box distribution \cite{PR}, and multiply an identity operator (on the joint Hilbert space of Alice and Bob) with each probability in the PR-box; this results in four trivial POVMs, defined over the joint Hilbert space, that together violate the CHSH inequality maximally. But, of course, this violation is uninteresting because it doesn't obey the {\em locality} constraint on the measurements in a Bell experiment. This is mathematically reflected in the fact that the PR-box distribution cannot be written as a convex mixture of product distributions, one for each party, hence the corresponding trivial POVM cannot be understood in terms of trivial {\em local} POVMs. Hence, it is the {\em locality} of the trivial POVMs in a Bell experiment that prevents them from violating a Bell inequality and renders them non-pathological, unlike in the case of KS-contextuality. The fact that they are ``trivial" in the sense of being unable to distinguish two quantum states plays a role in the sense that, {\em regardless} of the shared quantum state, these POVMs yield fixed distributions over the measurement outcomes, thus always allowing the construction of a fixed (that is, independent of the quantum state) global joint probability distribution over all measurements in a Bell scenario. Since there are no such locality constraints on the form of the POVM elements in a Kochen-Specker experiment, they can easily violate any KS-noncontextuality inequality, e.g., the two-party CHSH experiment considered as a Kochen-Specker experiment with four observables in a 4-cycle where adjacent pairs are jointly measurable allows for trivial POVMs (like the PR-box trivial POVM above) violating the CHSH-type Bell-KS inequality in this scenario maximally. By the lights of KS-noncontextuality, this violation would indicate the maximum possible KS-contextuality with respect to this CHSH-type inequality.\footnote{See Appendix \ref{trivial} for more discussion.}
For all these reasons, our discussion of KS-noncontextuality as a notion of classicality --- in an experiment with no {\em locality} constraints on the measurements --- does not extend to the case of Bell-locality (or local causality) as a notion of classicality in a Bell experiment, where the experiment {\em must} respect {\em locality} constraints on the measurements for a Bell inequality violation to be meaningful. 

The unification of Bell nonlocality and KS-contextuality \`a la Refs.~\cite{CSW, AFLS, sheaf} forces a certain dichotomy in these approaches: while in Bell scenarios, one {\em need not} restrict to any notion of a ``sharp" measurement in the definition of probabilistic models (and thus claim ``theory independence"), in Kochen-Specker scenarios, one must make some statement about the nature of the measurements (concerning their presumed sharpness \cite{cabellotalk}, or that their joint measurability \cite{notesonjm,jmctx} is restricted to commutativity \cite{sheaf}), rendering any putative ``theory independence" claim (on a level at par with Bell nonlocality) unfounded.\footnote{See Ref.~\cite{finegen} for how this lack of locality of measurements in a Kochen-Specker type experiment translates, at the ontological level, to the unreasonableness of assuming factorizability in the ontological model; this factorizability (or the stronger condition of outcome determinism) is invoked to justify the resulting derivation of Bell-KS inequalities as constraints from a classical marginal problem.}

\section{Ontological models without respecting coarse-graining relations}\label{ontmodelwocoarsegraining}
Here we will construct explicit examples where the coarse-graining relations are not respected in an ontological model, in contrast to the requirement on the representation of coarse-grainings that we invoked in Section II.C of the main text. The goal is to emphasize that the requirement of Section II.C is {\em necessary} not only for the treatment of Spekkens contextuality but also for Kochen-Specker contextuality. Below, we first demonstrate how a ``KS-noncontextual" model can be constructed for any scenario that proves the KS theorem by using the example of the KCBS setup \cite{KCBS}. We then proceed to demonstrate how a ``preparation and measurement noncontextual" model can be constructed in a similar way whenc considering generalized noncontextuality \cite{Spe05}.

\subsection{How to construct a ``KS-noncontextual" ontological model of the KCBS experiment \cite{KCBS} without coarse-graining relations}\label{trivialKSNC}
Here we have that $\mathbb{M}$ contains at least the following measurement settings: $\{M_i\}_{i=1}^5$, each with three possible outcomes, $m_i\in\{0,1,2\}$. The measurement events for each measurement setting $M_i$ can be coarse-grained into two different ways, defining new measurement settings $M'_i$ (with outcomes $m'_i\in\{0,\bar{0}\}$) and $M''_i$ (with outcomes $m''_i\in\{2,\bar{2}\}$), where the coarse-graining relations are given by 
\begin{align}
&[0|M'_i]\equiv[0|M_i],\label{kcbscgrelns1}\\
&[\bar{0}|M'_i]\equiv[1|M_i]+[2|M_i],\\
&[2|M''_i]\equiv[2|M_i],\label{kcbscgrelns2}\\
&[\bar{2}|M'_i]\equiv[0|M_i]+[1|M_i].
\end{align}
In the operational theory, these coarse-graining relations are respected, i.e., for all $[s|S]$, $s\in V_S, S\in\mathbb{S}$,
\begin{align}
&p(0,s|M'_i,S)\equiv p(0,s|M_i,S),\\
&p(\bar{0},s|M'_i,S)\equiv p(1,s|M_i,S)+p(2,s|M_i,S),\\
&p(2,s|M''_i,S)\equiv p(2,s|M_i,S),\\
&p(\bar{2},s|M''_i,S)\equiv p(0,s|M_i,S)+p(1,s|M_i,S).
\end{align}
However, we do not require that these relations be respected in an ontological model. Now, the KCBS argument requires the following operational equivalences,
\begin{equation}\label{kcbsopequivs}
[2|M''_i]\simeq [0|M'_{i+1}],
\end{equation}
for all $i\in\{1,2,3,4,5\}$, where addition is modulo 5, so that $i+1=1$ for $i=5$. A KS-noncontextual ontological model for this experiment requires that 
\begin{equation}
\xi(2|M''_i,\lambda)=\xi(0|M'_{i+1},\lambda)\in\{0,1\},\quad \forall \lambda\in \Lambda.
\end{equation}
Constructing such a model requires one to specify response functions for the measurements $\{M_i,M'_i,M''_i\}_{i=1}^5$. However, since there are no constraints from coarse-graining relations on these response functions, there is no obstruction to the construction of a ``KS-noncontextual model" of this type for any set of operational statistics. In particular, since we do not require that $\forall\lambda\in\Lambda: \xi(0|M'_{i+1},\lambda)\equiv \xi(0|M_{i+1},\lambda)$, nor that $\forall\lambda\in\Lambda: \xi(2|M''_i,\lambda)\equiv \xi(2|M_i,\lambda)$, we can assign arbitrary response functions to $\{M'_i,M''_i\}_{i=1}^5$, subject only to the condition from KS-noncontextuality that $\forall\lambda\in\Lambda: \xi(2|M''_i,\lambda)=\xi(0|M'_{i+1},\lambda)\in\{0,1\}$.\footnote{This ``KS-noncontextual" ontological model will thus reproduce operational equivalences of the type $[2|M''_i]\simeq [0|M'_{i+1}]$ (cf.~Eq.~\eqref{kcbsopequivs}).} Note that, because coarse-graining relations are not respected, this does not imply that $\forall\lambda\in\Lambda: \xi(2|M_i,\lambda)=\xi(0|M_{i+1},\lambda)\in\{0,1\}$, which is the usual constraint we would have presumed from KS-noncontextuality when coarse-graining relations are respected in the ontological model. In the absence of any such constraints on the response functions for $\{M_i\}_{i=1}^5$, one can always reproduce their operational statistics, in particular the operational equivalences of the type $[2|M_i]\simeq [0|M_{i+1}]$, which follow from Eqs.~\eqref{kcbscgrelns1},\eqref{kcbscgrelns2}, and \eqref{kcbsopequivs}.

\subsection{How to construct a ``preparation and measurement noncontextual" ontological model without coarse-graining relations}
Just as for measurements in the case of KS-noncontextuality, abandoning the coarse-graining relations for preparations in the case of generalized noncontextuality \cite{Spe05} makes possible the existence of a ``preparation and measurement noncontextual" ontological model for any set of operational statistics. For the kinds of proofs of contextuality relevant to this article, the relevant notion of coarse-graining is that of complete coarse-graining: that is, consider two source settings $S$ and $S'$ with (respective) source events $\{[s|S]\}_{s\in V_S}$ and $\{[s'|S']\}_{s'\in S'}$, that can be completely coarse-grained to yield the operational equivalence $[\top|S_{\top}]\simeq [\top|S'_{\top}]$, cf.~Eq.~\eqref{opequiv2prep}. In the operational description, where we assume the coarse-graining relation is respected, this is represented by
\begin{align}
&\forall [m|M], m\in V_M, M\in\mathbb{M}:\nonumber\\
&\sum_sp(m,s|M,S)=\sum_{s'}p(m,s'|M,S').
\end{align}
In the ontological description, however, we do {\em not} impose the coarse-graining relations $\mu(\lambda,\top|S_{\top})\equiv \sum_s\mu(\lambda,s|S)$ and $\mu(\lambda,\top|S'_{\top})\equiv \sum_{s'}\mu(\lambda,s'|S')$, which makes it trivial to write down probability distributions $\mu(\lambda,\top|S_{\top})$ and $\mu(\lambda,\top|S'_{\top})$ such that $\mu(\lambda,\top|S_{\top})=\mu(\lambda,\top|S'_{\top})$ (as required by preparation noncontextuality applied to $[\top|S_{\top}]\simeq [\top|S'_{\top}]$) but where we do not require that $\sum_s\mu(\lambda,s|S)=\sum_s\mu(\lambda,s|S)$ (which is not required by preparation noncontextuality). Note how the refusal to respect the coarse-graining relations, i.e., identifying $\mu(\lambda,\top|S_{\top})$ with $\sum_s\mu(\lambda,s|S)$ and $\mu(\lambda,\top|S'_{\top})$ with $\sum_{s'}\mu(\lambda,s'|S')$, lifts the constraint from preparation noncontextuality that would have been in place if the coarse-graining relations were respected. The same refusal for the case of measurements lifts any constraints (just as in the case of KS-noncontextuality above) from measurement noncontextuality on the ontological model. It thus becomes trivial to construct a ``preparation and measurement noncontextual" ontological model without coarse-graining relations.

	\section{Trivial POVMs}\label{trivial}
	\subsection{Bell-CHSH scenario}
	We have the Hilbert space $\mathcal{H}_A\otimes\mathcal{H}_B$ for Alice ($\mathcal{H}_A$) and Bob ($\mathcal{H}_B$). Consider four binary-outcome POVMs, $\{A^{(0)},A^{(1)},B^{(0)},B^{(1)}\}$, where
	\begin{eqnarray}
	&&A^{(0)}\equiv\{A^{(0)}_0,A^{(0)}_1\},\nonumber\\
	&&A^{(1)}\equiv\{A^{(1)}_0,A^{(1)}_1\},\nonumber\\
	&&B^{(0)}\equiv\{B^{(0)}_0,B^{(0)}_1\},\nonumber\\
	&&B^{(0)}\equiv\{B^{(1)}_0,B^{(1)}_1\},
	\end{eqnarray}
	$0\leq A^{(0)}_0, A^{(1)}_0\leq \mathbb{I}_{\mathcal{H}_A}$, $0\leq B^{(0)}_0, B^{(1)}_0\leq \mathbb{I}_{\mathcal{H}_B}$, $A^{(0)}_0+A^{(0)}_1=A^{(1)}_0+A^{(1)}_1=\mathbb{I}_{\mathcal{H}_A}$, and 
	$B^{(0)}_0+B^{(0)}_1=B^{(1)}_0+B^{(1)}_1=\mathbb{I}_{\mathcal{H}_B}$. The quantum probability, given a shared quantum state $\rho_{AB}$ defined on $\mathcal{H}_A\otimes\mathcal{H}_B$, is given by 
	\begin{equation}
	p(a,b|x,y)=\Tr(\rho_{AB}A^{(x)}_a\otimes B^{(y)}_b),
	\end{equation}
	for $a,b,x,y\in\{0,1\}$. Here $A^{(x)}\otimes\mathbb{I}_{\mathcal{H}_B}$ is jointly measurable with $\mathbb{I}_{\mathcal{H}_A}\otimes B^{(y)}$, just because of the commutativity of their respective POVM elements. The joint observable being measured is $A^{(x)}\otimes B^{(y)}$. Now, consider the case when all the POVM elements are trivial, i.e., $A^{(x)}_a=q^{(x)}_a\mathbb{I}_{\mathcal{H}_A}$ and $B^{(y)}_b=r^{(y)}_b\mathbb{I}_{\mathcal{H}_B}$, for some $q^{(x)}_a,r^{(y)}_b\in[0,1]$ for all $a,b,x,y\in\{0,1\}$. We then have 
	\begin{equation}
	p(a,b|x,y)=q^{(x)}_ar^{(y)}_b, \forall a,b,x,y\in\{0,1\}.
	\end{equation}
	A global joint probability distribution which reproduces the above as marginals is simply given by their product:
	\begin{equation}
	p(a^{(0)},a^{(1)},b^{(0)},b^{(1)})\equiv q^{(0)}_{a^{(0)}}q^{(1)}_{a^{(1)}}r^{(0)}_{b^{(0)}}r^{(1)}_{b^{(1)}}.
	\end{equation}
	Hence, trivial POVMs never violate any Bell-CHSH inequality for this scenario.
	
	\subsection{CHSH-type contextuality scenario: 4-cycle}
	We now consider the Bell-CHSH scenario without the constraint of spacelike separation. What the lack of spacelike separation means from the quantum perspective is that one no longer needs to model this spacelike separation by requiring a tensor product structure, or (more generally) by requiring the commutativity of the observables that are jointly measured \cite{scholzwerner, fritztsirelson,sheaf}. That is, there is no physical justification for imposing the tensor product structure or the commutativity of jointly measured observables.\footnote{On the other hand, what this lack of spacelike separation means from the perspective of an ontological model is that one no longer has a justification for assuming factorizability \cite{sheaf} and, consequently, the generalization of Fine's theorem \cite{fine} fails to prove that there is no loss of generality in assuming outcome determinism in discussions of KS-contextuality (unlike the case of Bell scenarios, where factorizability is justified by spacelike separation); there is a definite loss of generality, in that measurement noncontextual and outcome-indeterministic ontological models that are non-factorizable are not empirically equivalent to measurement noncontextual and outcome-deterministic (or KS-noncontextual) ontological models. See Ref.~\cite{finegen} for a discussion of this aspect.}
	
		Thus, we have the Hilbert space $\mathcal{H}$ and we consider four binary-outcome POVMs, $\{A^{(0)},A^{(1)},B^{(0)},B^{(1)}\}$, on $\mathcal{H}$, where
	\begin{eqnarray}
	&&A^{(0)}\equiv\{A^{(0)}_0,A^{(0)}_1\},\nonumber\\
	&&A^{(1)}\equiv\{A^{(1)}_0,A^{(1)}_1\},\nonumber\\
	&&B^{(0)}\equiv\{B^{(0)}_0,B^{(0)}_1\},\nonumber\\
	&&B^{(0)}\equiv\{B^{(1)}_0,B^{(1)}_1\},
	\end{eqnarray}
	$0\leq A^{(0)}_0, A^{(1)}_0, B^{(0)}_0, B^{(1)}_0\leq \mathbb{I}_{\mathcal{H}}$, $A^{(0)}_0+A^{(0)}_1=A^{(1)}_0+A^{(1)}_1=B^{(0)}_0+B^{(0)}_1=B^{(1)}_0+B^{(1)}_1=\mathbb{I}_{\mathcal{H}}$. Further, the following sets of POVMs are jointly measurable: $\{A^{(0)},B^{(0)}\}$, $\{A^{(0)},B^{(1)}\}$, $\{A^{(1)},B^{(0)}\}$, $\{A^{(1)},B^{(1)}\}$. The most general joint observable for a pair of compatible POVMs $\{A^{(x)},B^{(y)}\}$ is given by a POVM $G^{(xy)}\equiv\{G^{(xy)}_{00},G^{(xy)}_{01},G^{(xy)}_{10},G^{(xy)}_{11}\}$ (that isn't necessarily unique \cite{jmctx}) such that: $G^{(xy)}_{00}+G^{(xy)}_{01}=A^{(x)}_0,G^{(xy)}_{10}+G^{(xy)}_{11}=A^{(x)}_1, G^{(xy)}_{00}+G^{(xy)}_{10}=B^{(y)}_0, G^{(xy)}_{01}+G^{(xy)}_{11}=B^{(y)}_1$. In particular, if (and only if) the POVMs $A^{(x)}$ and $B^{(y)}$ commute, we can construct the joint POVM as a product: $G^{(xy)}_{ab}=A^{(x)}_aB^{(y)}_b$ for all $a,b,x,y\in\{0,1\}$. In the absence of such commutativity, the joint POVM cannot be written as a product.
	
	 The quantum probability, given a quantum state $\rho$ on $\mathcal{H}$, is given by 
	\begin{equation}
	p(a,b|x,y)=\Tr(\rho G^{(xy)}_{ab}),
	\end{equation}
	for $a,b,x,y\in\{0,1\}$. Note that this probability depends on the joint measurement $G^{(xy)}$ implementing $A^{(x)}$ and $B^{(y)}$ together, and that, in general, there may be multiple choices of $G^{(xy)}$ possible. This is easy to see since there is one undetermined positive operator in the joint measurement that is not fixed by $A^{(x)}$ or $B^{(y)}$, i.e., we can write the POVM elements of $G^{(xy)}$ as: $G^{(xy)}_{01}=A^{(x)}_0-G^{(xy)}_{00}$, $G^{(xy)}_{10}=B^{(y)}_0-G^{(xy)}_{00}$, $G^{(xy)}_{11}=\mathbb{I}-A^{(x)}_0-B^{(y)}_0+G^{(xy)}_{00}$, where $G^{(xy)}_{00}$ is a positive semidefinite operator satisfying $A^{(x)}_0+B^{(y)}_0-\mathbb{I}_{\mathcal{H}}\leq G^{(xy)}_{00}\leq A^{(x)}_0,B^{(y)}_0$. Here $G^{(xy)}_{00}$ represents the freedom in the choice of how the joint measurement might be implemented within quantum theory. This freedom reflects the fact that since the jointly measured observables are no longer spacelike separated, it is possible to introduce correlations between them that are stronger than what is allowed in the corresponding Bell scenario in quantum theory. The strength of these correlations is only limited by the constraints on $G^{(xy)}_{00}$ imposed by the marginal observables $A^{(x)}$ and $B^{(y)}$. This is in contrast to the case where $A^{(x)}$ and $B^{(y)}$ are spacelike separated observables and the {\em only} choice of joint POVM consistent with spacelike separation is fixed by $G^{(xy)}_{00}=A^{(x)}_0B^{(y)}_0$, i.e., the strength of correlations between $A^{(x)}$ and $B^{(y)}$ is fixed entirely by them and there is no freedom in choosing $G^{(xy)}$.	
	
	Thus, we have that $A^{(x)}$ is jointly measurable with $B^{(y)}$ and $G^{(xy)}$ denotes a joint POVM of $A^{(x)}$ and $B^{(y)}$. Now, consider the case when all the POVM elements are trivial, i.e., $A^{(x)}_a=q^{(x)}_a\mathbb{I}_{\mathcal{H}}$ and $B^{(y)}_b=r^{(y)}_b\mathbb{I}_{\mathcal{H}}$, for some $q^{(x)}_a,r^{(y)}_b\in[0,1]$ for all $a,b,x,y\in\{0,1\}$. 
	
	In particular, consider the case where $q^{(x)}_a=r^{(y)}_b=\frac{1}{2}$ for all $a,b,x,y\in\{0,1\}$. A possible joint POVM for these trivial POVMs is then the product POVM:
	
	\begin{equation}
		G^{(xy)}_{ab}=A^{(x)}_aB^{(y)}_b=\frac{1}{4}\mathbb{I}_{\mathcal{H}}.
	\end{equation}
	
	If one restricted joint measurability of $A^{(x)}$ and $B^{(y)}$ to just {\em commutativity} --- a sufficient but not necessary condition for joint measurability\footnote{Particularly in the absence of spacelike separation. It is the need to model spacelike separation in a quantum Bell experiment that makes commutativity a necessary (and sufficient) condition for joint measurability of spacelike separated observables in a Bell scenario} \cite{notesonjm} --- we would take the above choice of the product POVM as a ``natural" one. Being a product of trivial POVMs, this choice will never lead to a violation of the CHSH-type inequality for this scenario. Indeed, the structure of a Bell scenario --- requiring the decomposition of the Hilbert space as $\mathcal{H}=\mathcal{H}_A\otimes\mathcal{H}_B$ ({\em tensor product paradigm}), or more generally, imposing the commutativity requirement $[A^{(x)}_a,B^{y}_b]=0$ ({\em commutativity paradigm}) --- is such that the only possible choice of joint measurement that can be implemented by spacelike separated parties is the one that corresponds to the product POVM, given by operators $G^{(xy)}_{ab}=A^{(x)}_aB^{(y)}_b$.
	
	However, this is not the only allowed joint measurement for these trivial POVMs, particularly when there is no locality constraint on the measurements from spacelike separation.\footnote{To incorporate such a constraint, spacelike separation needs to be modelled via either the tensor product paradigm or the commutativity paradigm. Both these ways of modelling spacelike separation lead to the same set of quantum correlations for any finite-dimensional Hilbert space $\mathcal{H}$ \cite{scholzwerner}. The question of whether the two paradigms lead to the same set of correlations in the case of infinite dimensional Hilbert spaces is the subject of Tsirelson's problem \cite{scholzwerner, fritztsirelson}. Most studies of Bell-nonlocality are primarily concerned with finite dimensional Hilbert spaces; should one encounter infinite dimensional Hilbert spaces, the commutativity paradigm is the proper way to model spacelike separation.} An extreme choice of joint POVM is the following: 
	
	\begin{equation}
	G^{PR(xy)}_{ab}= \frac{\mathbb{I}_{\mathcal{H}}}{2}\delta_{a\oplus b,xy},
	\end{equation}
	which leads to the probability distribution $p(a,b|x,y)=\frac{1}{2}\delta_{a\oplus b,xy}$ for any choice of quantum state. Hence,	this joint POVM $G^{PR(xy)}$ always yields statistics corresponding to the PR-box, maximally violating the CHSH-type inequality for this scenario, namely,
	
	\begin{equation}
	\sum_{a,b,x,y\big| a\oplus b=xy}\frac{1}{4}p(a,b|x,y)\leq \frac{3}{4}.
	\end{equation}
	
	 Physically, it's possible to implement this (without requiring any quantum resources) by providing a box that always produces these correlations between measurement settings denoted by $(xy)\in\{0,1\}^2$, regardless of the input state. Such a black-box would maximally violate the CHSH-type inequality (viewed as a Bell-KS inequality witnessing KS-contextuality), but that shouldn't be surprising in the absence of spacelike separation. Also, the trivial PR-box joint POVM $G^{PR(xy)}_{ab}$ is a perfectly valid way to implement the joint measurement of trivial POVMs $A^{(x)}$ and $B^{(y)}$ within the standard paradigm of operational quantum theory.\footnote{Note that the point of this demonstration is to show how, in the absence of spacelike separation justifying commutativity or a promise that the measurements are sharp, arbitrary correlations are achievable in quantum theory if unsharp measurements are allowed. All trivial POVMs are unsharp, but the converse is not true. That is, one can consider nontrivial POVMs that don't violate the CHSH-type inequality maximally, but which violate it (arbitrarily) more than is allowed by sharp measurements in quantum theory. One could construct them, for example, by just taking a convex combination of the PR-box trivial POVM with some sharp (and thus product) joint POVM.}
	
	To summarize, we note the following: 
	\begin{itemize}
		\item Within the traditional framework of KS-noncontextuality, if one wants to go beyond projective measurements to arbitrary POVMs in a contextuality scenario, then one must -- in order to avoid the pathology of trivial POVMs violating the Bell-KS inequalities maximally -- {\em restrict} by fiat the notion of joint measurability to merely commutativity. This is, for example, the attitude adopted in Ref.~\cite{sheaf}.
		
		\item However, if one is going beyond projective measurements, we know that commutativity is {\em only} a sufficient condition for joint measurability, not a necessary one \cite{notesonjm}.
		
		\item This brings us to our observation that the traditional notion of KS-noncontextuality is pathological once the most general situation in quantum theory is considered: arbitrary POVMs with the general notion of joint measurability (see, e.g., Ref.~\cite{notesonjm} for this notion and its relation to commutativity). In particular, in the absence of spacelike separation, there is no physical justification to restrict the notion of joint measurability to merely commutativity.
		
		\item A similar consideration applies at the level of a KS-noncontextual ontological model: there, factorizability is not justified in the absence of spacelike separation. So, on those grounds alone, one should go beyond KS-noncontextuality as one's notion of classicality; particularly, if one wants a notion of classicality that does not presume outcome determinism, just as {\em local causality} doesn't presume it. This was argued in Ref.~\cite{finegen}: imagine an adversarial setting where because of the absence of spacelike separation in a KS-contextuality experiment, two measurement settings on the same system can exhibit correlations that are independent of those induced by the system on which the measurements are being implemented, thus allowing them to exhibit stronger correlations than are possible in a KS-noncontextual model. We use trivial POVMs only to drive home that this can be done arbitrarily well (achieving PR-box type correlations, in fact) if there is no constraint on the strength of correlations the measurement settings can exhibit. The way such constraints on the correlations between the measurement settings show up in our analysis within the Spekkens framework is in terms of the quantity ${\rm Corr}$: if ${\rm Corr}$ is really high, the measurements in a noncontextual ontological model cannot be arbitrarily strongly correlated, i.e., $R$ cannot be arbitrarily high (cf.~Eq.~\eqref{NCI3}).
	\end{itemize}
	
	\section{The KS-uncolourable hypergraph $\Gamma_{18}$}\label{scopeviacega}
	It is instructive to consider the KS-uncolourable hypergraph $\Gamma_{18}$, originally appearing in Ref.~\cite{CEGA}, and studied in the light of Spekkens contextuality in Ref.~\cite{KunjSpek}. This hypergraph fails both criteria for the hypergraphs $\Gamma$ considered in this paper, namely, $\mathcal{C}(\Gamma)\neq\varnothing$ (KS-colourability) and $\mathcal{CE}^1(\Gamma)=\mathcal{G}(\Gamma)$.
	
	For probabilistic models on $\Gamma_{18}$, the following hold: $\mathcal{C}(\Gamma_{18})=\varnothing\subsetneq\mathcal{CE}^1(\Gamma_{18})\subsetneq\mathcal{G}(\Gamma_{18})$. This was considered in Ref.~\cite{KunjSpek}, where $\mathcal{CE}^1(\Gamma_{18})$ excludes the extremal probabilistic model in $\mathcal{G}(\Gamma_{18})$ that corresponds to the upper bound on the noise-robust noncontextuality inequality of Ref.~\cite{KunjSpek}. As argued in Ref.~\cite{KunjSpek}, this noise-robust noncontextuality inequality is the appropriate operational generalization (to possibly noisy measurements) of the Kochen-Specker contradiction first demonstrated in Ref.~\cite{CEGA}; this generalization cannot be accommodated in our generalization of the CSW framework \cite{CSW}. 
	
	If one extends the KS-uncolourable $\Gamma_{18}$ to a KS-colourable hypergraph $\Gamma_{27}$ with 9 ``no-detection" events, one for each hyperedge, then we have $\mathcal{C}(\Gamma_{27})\neq \varnothing$, but it's still the case that $\mathcal{C}(\Gamma_{27})\subsetneq\mathcal{CE}^1(\Gamma_{27})\subsetneq\mathcal{G}(\Gamma_{27})$ for this hypergraph.\footnote{This follows from noting that extremal probabilistic models on $\Gamma_{18}$ are still extremal probabilistic models on $\Gamma_{27}$: ones where the no-detection events are assigned zero probabilities. See Theorem 2.5.3 of Ref.~\cite{AFLS}.} Hence, $\Gamma_{27}$ cannot be understood in our generalization of the CSW framework either.\footnote{Note that adding these no-detection events is equivalent to allowing subnormalized probabilities (i.e., sum of probabilities assigned to measurement events in a hyperedge can be less than 1) on $\Gamma_{18}$. Hence, even allowing for subnormalization on $\Gamma_{18}$, which means that one is looking at probabilistic models on the hypergraph $\Gamma_{27}$, does not eliminate the gap between CE$^1$ probabilistic models and general probabilistic models, so that any upper bound on a Bell-KS expression given by probabilistic models in $\mathcal{CE}^1(\Gamma_{27})$ is not always the same as the general probabilistic upper bound from probabilistic models in $\mathcal{G}(\Gamma_{27})$. The CSW framework only considers the upper bound given by $\mathcal{CE}^1(\Gamma_{27})$ probabilistic models.}
	
Indeed, if one ``blindly" writes down a CSW classical bound for some Bell-KS expression defined on $O(\Gamma_{18})$, then such a bound is equivalently a bound for the same Bell-KS expression defined on $\Gamma_{27}$ (where normalization is restored). Further, the E1 bound on $\Gamma_{18}$ is a ${\rm CE}^1$ bound on $\Gamma_{27}$. The GPT bound {\em happens} to agree with the CE$^1$ bound for a particular Bell-KS expression (sum of all probabilities) but {\em differs} for some other Bell-KS expressions defined on this hypergraph. Consider, for example, the following three expressions (see Fig.~\ref{fignodetection}):

\begin{figure}
	\includegraphics[scale=0.3]{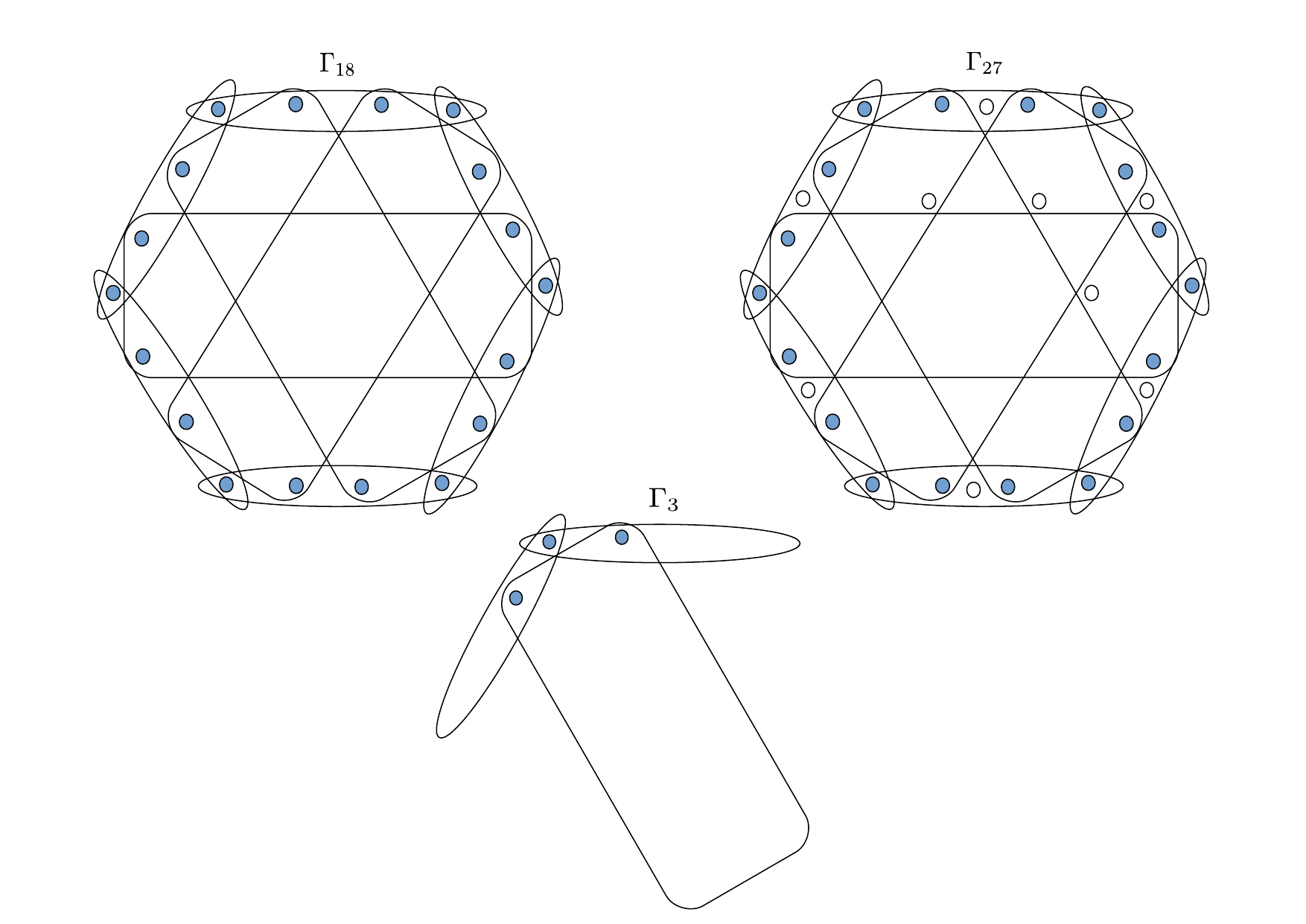}
	\caption{The hypergraph $\Gamma_{27}$ and its subhypergraphs, i.e., $\Gamma_{18}$ and $\Gamma_3$, appearing in the three Bell-KS expressions of Eq.~\eqref{bellksexpr}. The probabilistic model $p$ considered in Eq.~\eqref{bellksexpr} is a probabilistic model on $\Gamma_{27}$, and {\em not} on the subhypergraphs. We have illustrated the subhypergraphs separately only for clarity regarding the subsets of vertices to which the Bell-KS expressions refer: the probabilities assigned to these vertices are obtained from probabilistic models on $\Gamma_{27}$.}
	\label{fignodetection}
\end{figure}

\begin{figure}
	\centering
	\includegraphics[scale=0.3]{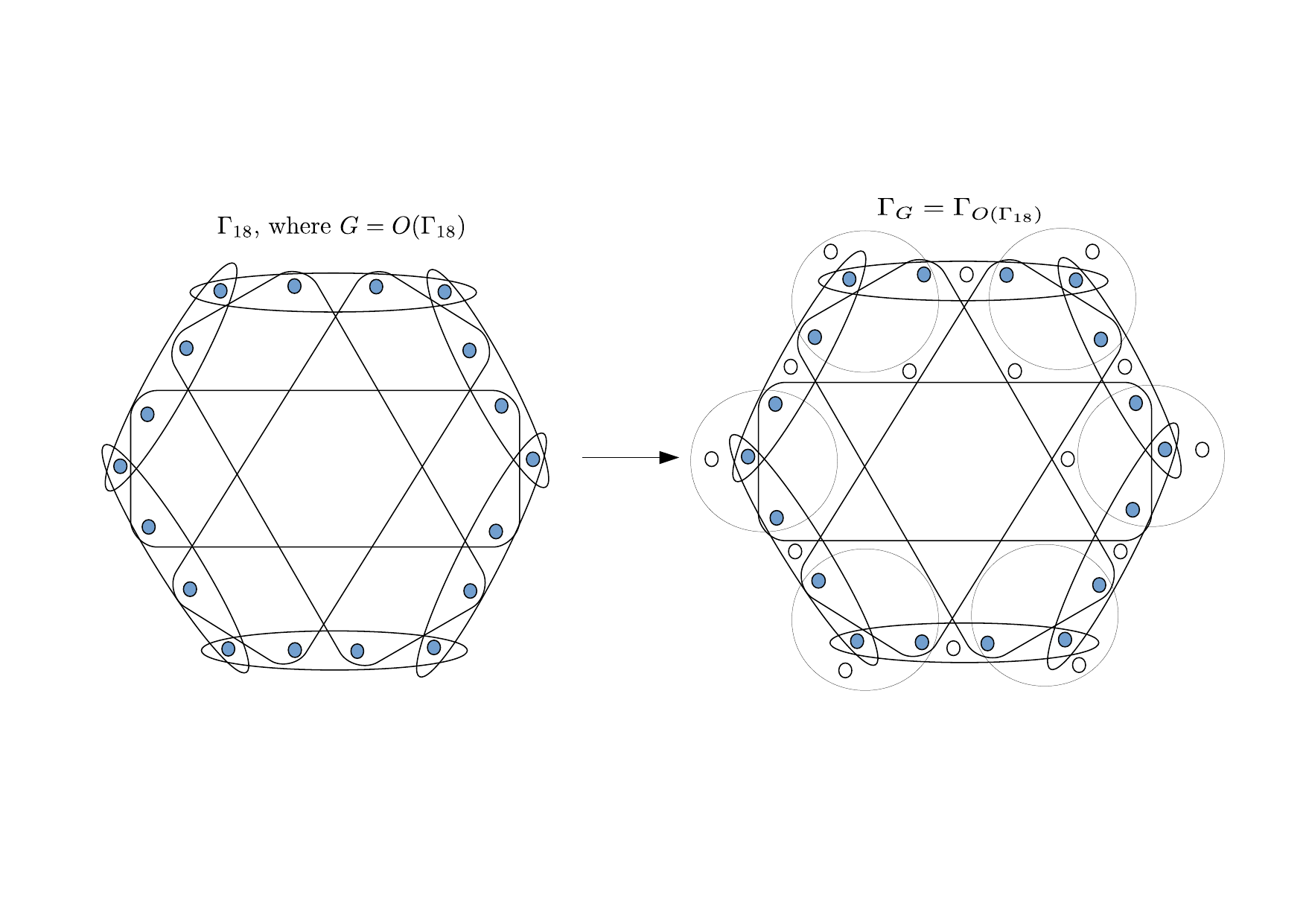}
	\caption{Going from the orthogonality graph, $G$, of $\Gamma_{18}$ to the hypergraph $\Gamma_G$ (on the right) to which our noise-robust noncontextuality inequality pertains.}
	\label{figOgamma}
\end{figure}
	
	\begin{eqnarray}\label{bellksexpr}
	{\rm Expr}_1&\equiv&\sum_{v\in V(\Gamma_{18})}p(v),\nonumber\\
	{\rm Expr}_2&\equiv&\sum_{v\in V(\Gamma_3)}p(v),\nonumber\\
	{\rm Expr}_3&\equiv&\sum_{v\in V(\Gamma_{18})}p(v)+\sum_{v\in V(\Gamma_3)}p(v).
	\end{eqnarray}
	
	We have:
	
	\begin{eqnarray}\label{bellksexprvalue}
	&&{\rm Expr}_1\overset{\rm \mathcal{C}(\Gamma_{27})}{\leq} 8\overset{\mathcal{CE}^1(\Gamma_{27})}{<}9\overset{\mathcal{G}(\Gamma_{27})}{=}9,\nonumber\\
	&&{\rm Expr}_2\overset{\rm \mathcal{C}(\Gamma_{27})}{\leq} 1\overset{\mathcal{CE}^1(\Gamma_{27})}{=}1\overset{\mathcal{G}(\Gamma_{27})}{<}\frac{3}{2},\nonumber\\
	&&{\rm Expr}_3\overset{\rm \mathcal{C}(\Gamma_{27})}{\leq} 9\overset{\mathcal{CE}^1(\Gamma_{27})}{<}10\overset{\mathcal{G}(\Gamma_{27})}{<}10.5.
	\end{eqnarray}
	
	Thus, ${\rm Expr}_3$ is a Bell-KS expression that discriminates between probabilistic models at all three levels of the hierarchy. Indeed, the upper bound on ${\rm Expr}_3$ for $\mathcal{CE}^1(\Gamma_{27})$ models can be saturated by projective quantum realizations of the hypergraph, in particular the standard realization with 18 rays, with the zero operator for the no-detection events \cite{CEGA}. The fact that there exists such a Bell-KS expression as ${\rm Expr}_3$ means that the ${\rm CE}^1$ upper bounds from the CSW approach can be violated by a general probabilistic model, i.e., the upper bounds for ${\rm CE}^1$ models and general probabilistic models don't agree, and we cannot take the graph-theoretic upper bounds of CSW for granted in our noise-robust noncontextuality inequalities. Indeed, the general probabilistic upper bound for any Bell-KS expression defined on a contextuality scenario is a hypergraph invariant --- in the sense that it is a property that is shared by all hypergraphs isomorphic to each other --- that may or may not be expressible as a graph invariant \`a la CSW.
	
    What, then, do the bounds given by graph invariants of CSW for $O(\Gamma_{18})$ {\em mean} in our generalization of the CSW framework? Following our approach, outlined in Sec.~III.B, we can go from $G=O(\Gamma_{18})$ to the hypergraph $\Gamma_G=\Gamma_{O(\Gamma_{18})}$ (see Fig.~\ref{figOgamma})
     for which we have (by construction) $\mathcal{C}(\Gamma_{O(\Gamma_{18})})\neq\varnothing$ (so that the underlying hypergraph is no longer KS-uncolourable) and $\mathcal{CE}^1(\Gamma_{O(\Gamma_{18})})=\mathcal{G}(\Gamma_{O(\Gamma_{18})})$ (so that, for any Bell-KS expression, the upper bound given by the fractional packing number $\alpha^*(G,w)$ in the CSW framework agrees with the general probabilistic upper bound). Since this construction proceeds by converting all maximal cliques in $\Gamma_{18}$ to hyperedges in $\Gamma_{O(\Gamma_{18})}$ and adding a new vertex to each such hyperedge, it achieves both purposes: firstly, adding a (no-detection) vertex to every maximal clique that is a hyperedge in $\Gamma_{18}$ ensures the KS-colourability of $\Gamma_{O(\Gamma_{18})}$, i.e., $\mathcal{C}(\Gamma_{O(\Gamma_{18})})\neq \varnothing$, and secondly, adding a vertex to every maximal clique that is not a hyperedge in $\Gamma_{18}$ ensures that $\mathcal{CE}^1(\Gamma_{O(\Gamma_{18})})=\mathcal{G}(\Gamma_{O(\Gamma_{18})})$. Once these two properties are satisfied, the graph invariants of CSW \cite{CSW} become applicable to any Bell-KS expression defined for any set of vertices in the subhypergraph $\Gamma_{18}$ of $\Gamma_{O(\Gamma_{18})}$.
     
      Our noise-robust noncontextuality inequality then applies to the KS-colourable hypergraph $\Gamma_{O(\Gamma_{18})}$, where the graph invariants of CSW make sense, rather than the KS-uncolourable hypergraph $\Gamma_{18}$. On the other hand, an appropriate noise-robust noncontextuality inequality for the KS-uncolourable hypergraph $\Gamma_{18}$ is, then, the one reported in Ref.~\cite{KunjSpek}.\footnote{The approach for KS-uncolourable hypergraphs will be further developed in hypergraph-theoretic terms in forthcoming work~\cite{kunjunc}.}

\end{appendix}

 \end{document}